\documentclass[a4paper, 11pt]{article}
\usepackage[utf8]{inputenc}
\usepackage{jheparxiv}
\usepackage{amsmath,mathrsfs,bm,amsthm}
\DeclareUnicodeCharacter{0360}{\eth}
\usepackage[final]{showkeys}
\usepackage{comment}
\usepackage{tikz-cd}
\usepackage{bm}
\usepackage{svg}
\usepackage[normalem]{ulem}
\usepackage[shortlabels]{enumitem}
\usepackage[normalem]{ulem}

\theoremstyle{definition}
\newtheorem{Theorem}{Theorem}
\theoremstyle{definition}
\newtheorem{Proposition}{Proposition}
\newtheorem{Lemma}{Lemma}
\theoremstyle{definition}

\theoremstyle{definition}

\definecolor{arXiv}{named}{blue}

\newcommand{\IC}[0]{\mathbb{C}}
\newcommand{\IR}[0]{\mathbb{R}}

\newcommand{\F}{\mathscr{F}}
\newcommand{\T}{\mathscr{T}} 
\newcommand{\PT}{\textrm{P}\mathcal{T}} 
\newcommand{\HST}{M_{HS}}
\newcommand{\M}{\mathcal{M}} 
\newcommand{\h}{\bm{h}}

\title{Higher-spin self-dual gravity from holomorphic planes in twistor space}

\author{Nicolas Boulanger$^{\dagger}$, Yannick Herfray$^{\ddagger \star}$, Lionel Mason$^*$,
 Noémie Parrini$^{\dagger}$\footnote{Research Fellow of the F.R.S.-FNRS (Belgium).}}

\affiliation{$^{\dagger}$Physique de l'Univers, Champs et Gravitation, 
 Universit\'e de Mons -- UMONS,\\Place du Parc 20, 7000 Mons, Belgium}

\affiliation{$^{\ddagger}$Institut Denis Poisson UMR 7013, Université de Tours, 
Parc de Grandmont, 37200 Tours, France}
\affiliation{$^\star$ Institut universitaire de France (IUF)}

\affiliation{$^{*}$ The Mathematical Institute, University of Oxford, United Kingdom}

\emailAdd{nicolas.boulanger@umons.ac.be, yannick.herfray@univ-tours.fr, lmason@maths.ox.ac.uk, noemie.parrini@umons.ac.be}

\abstract{We prove a `nonlinear graviton theorem' for higher-spin self-dual 
gravity. We consider small deformations of the complex structure of the 
non-projective twistor space that are bounded in a specified region near the origin and investigate the 
space $\HST$ of holomorphically embedded complex planes $\mathbb{C}^2$ that intersect the origin.
We show  that this space is an infinite dimensional complex 
manifold with a canonical projection onto a four-dimensional holomorphic self-dual 
spacetime $\M$, and discuss the geometry induced on this new higher-spin space. 
Solutions of higher-spin self-dual gravity are then obtained by choosing an embedding of spacetime 
$\M$ into higher-spin space $\HST$, with higher-spin symmetries arising from the 
different choices of embedding.  
Integrability of the theory is manifested in the form of a Lax pair for the system that we present.   We 
conjecture that chiral higher-spin gravity can similarly be realized by considering 
deformations that are unconstrained at the origin.}

\begin{document}
\maketitle

\newpage
\section{Introduction}

Interacting theories of massless higher-spin fields are notoriously difficult 
to construct. Among the few known examples, chiral higher-spin gravity 
(chiral HiSGra) \cite{Metsaev:1991nb,Metsaev:1991mt,Ponomarev:2016lrm,Skvortsov:2018jea,Skvortsov:2020wtf}   
(see also \cite{Skvortsov:2020gpn,Skvortsov:2022syz,Sharapov:2022faa,Sharapov:2022awp,Sharapov:2022wpz}) 
is expected to be integrable and is thus especially worth of interest. 
Chiral HiSGra admits several interesting truncations \cite{Ponomarev:2017nrr,Krasnov:2021nsq,Monteiro:2022xwq}. 
In particular, higher-spin self-dual gravity and higher-spin self-dual 
Yang-Mills admit covariant spacetime actions \cite{Krasnov:2021nsq} and can be 
thought of as higher-spin extensions of self-dual gravity and 
self-dual Yang-Mills theory, respectively.

Bearing in mind the analogy with self-dual Yang-Mills and self-dual gravity, it is natural to expect 
that the integrability properties of these theories should derive from a 
twistor realization. For higher-spin-self-dual Yang-Mills such realizations were discussed in 
\cite{Adamo:2022lah,Herfray:2022prf}. The Ward correspondence of 
\cite{Herfray:2022prf} identifies solutions to higher-spin-self-dual Yang-Mills with holomorphic 
bundles on the \emph{non projective} twistor space\footnote{Here and 
everywhere in this article $\mathbb{C}^ n_{0}$ stands for $\mathbb{C}^n$ 
minus the origin.} $\mathbb{T}$, identified with a region of $ \mathbb{C}^4_0\setminus \mathbb{C}^2_0$. 
This should be be contrasted with the usual Ward correspondence \cite{Ward:1977ta} 
that, along with the usual results of twistor theory, make use of a projective twistor space 
$\mathbb{PT}$ inside $  \mathbb{C}\textrm{P}^3 \setminus \mathbb{C}\textrm{P}^1$. 
A similar strategy for a twistor action for chiral HiSGra was 
constructed in \cite{Mason:2025pbz} whose  field equations amount to the 
integrability of a deformed \emph{Cauchy-Riemann}-structure on a region 
in the unit sphere inside $\mathbb{T}$, rather than on 
$\mathbb{PT}$ as in the usual twistor action for self-dual gravity \cite{Mason:2007ct}.

Several open questions remain. 
The complete understanding of the higher-spin generalization of the nonlinear 
graviton theorem \cite{Penrose:1976js} is still missing, 
and the work \cite{Mason:2025pbz} strongly relies on a perturbative approach 
and on the use of Euclidean signature structure, 
whereas these are likely to be  unnecessary restrictions. 
In the present work, as a first step towards resolving theses questions, 
we present a nonlinear graviton theorem for higher-spin self-dual gravity with zero cosmological 
constant, without the need to assume any reality 
structure. We will prove the following.

\begin{Theorem}\label{theorem intro}
There is a one-to-one correspondence between, small but finite, solutions of 
higher spin self-dual gravity with zero cosmological constant up 
to gauge and, small but finite, deformations of the complex structure of 
$\mathbb{T}$ in a region of $\mathbb{C}^4_0\backslash \mathbb{C}^2_0$ that
(1) preserve the holomorphic fibration $\mathbb{T} \to \mathbb{C}^2_0\,$, 
(2) preserve the holomorphic Poisson structure along the fibre, 
(3) admits an  Euler vector field $E$ that is $(1,0)$ but \emph{not} required to be holomorphic, and 
(4) have  Dolbeault $\bar \partial$-operator that is bounded near the origin.
\end{Theorem}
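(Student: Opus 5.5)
The plan is to run a Kodaira-type deformation argument on non-projective twistor space, with the planes $\mathbb{C}^2$ through the origin playing the role that twistor lines play in the usual nonlinear graviton construction.

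\textbf{Step 1: normal form of the deformation.} I would use coordinates $(\mu^{\dot\alpha},\lambda_\alpha)$ on $\mathbb{T}\subset\mathbb{C}^4_0\setminus\mathbb{C}^2_0$, with fibration $\mathbb{T}\to\mathbb{C}^2_0$ given by $(\mu,\lambda)\mapsto\lambda$. Conditions (1) and (2) then force the deformed Dolbeault operator into the form
\[
\bar\partial_f=\bar\partial+\{f,\cdot\},
\]
where $\{\cdot,\cdot\}$ is the Poisson bracket $\epsilon^{\dot\alpha\dot\beta}\partial_{\mu^{\dot\alpha}}\partial_{\mu^{\dot\beta}}$ along the fibres and $f$ is a $(0,1)$-form Hamiltonian. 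Integrability is
\[
\bar\partial f+\tfrac12\{f,f\}=0 .
\]
Condition (3) replaces the usual homogeneity of degree $2$. Because $E$ need not be holomorphic, $f$ may be expanded as a sum of components with all homogeneities, and these play the role of the tower of spins. Condition (4), boundedness near the origin, restricts the allowed growth in $\lambda$ and $\mu$.

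\textbf{Step 2: existence of the deformed planes.} I would construct the space $\HST$ of holomorphic planes $\mu^{\dot\alpha}=F^{\dot\alpha}(\lambda)$ through the origin that are $\bar\partial_f$-holomorphic. These satisfy the nonlinear equation
\[
\bar\partial F^{\dot\alpha}=\epsilon^{\dot\alpha\dot\beta}\,\partial_{\mu^{\dot\beta}}f\big|_{\mu=F}.
\]
For undeformed planes, $F=x^{\alpha\dot\alpha}\lambda_\alpha+(\text{higher Taylor terms})$, so the undeformed family consists of all holomorphic functions on $\mathbb{C}^2$ vanishing at the origin. On $\mathbb{C}^2_0$, Hartogs makes $H^1$ of the relevant sheaves effectively trivial in the bounded region. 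I would therefore solve the equation by an implicit-function/contraction argument in a Banach space of bounded functions, using a bounded right inverse of $\bar\partial$ on the chosen region (polydisc-type estimates). This is the step I expect to be the main obstacle. It requires choosing function spaces in which $\bar\partial$ has a bounded inverse near the origin and in which $\HST$ is a genuine infinite-dimensional complex manifold; hypothesis (4) is exactly what should make the estimates close.

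\textbf{Step 3: projection to spacetime.} The linear coefficient of $F$ defines a holomorphic projection $\HST\to\M$, with $\M\cong\mathbb{C}^4$ parametrising the tangent planes at the origin, i.e.\ $x^{\alpha\dot\alpha}$. The Poisson structure induces a symplectic form on the fibres, and from it the geometry on $\HST$.

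\textbf{Step 4: extracting the spacetime fields.} Choosing a section $\M\hookrightarrow\HST$ and differentiating $F$ along $\M$ yields a Lax pair of the form
\[
L_\alpha=\lambda^\beta\partial_{\alpha\dot\alpha}\,\cdot+\dots ,
\]
whose compatibility conditions are the higher-spin self-dual gravity equations of \cite{Krasnov:2021nsq}. Changes of section correspond to higher-spin gauge symmetries, and changes of $f$ by $\bar\partial$-exact Hamiltonian pieces correspond to gauge on the twistor side.

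\textbf{Step 5: the converse.} Starting from a solution on spacetime, I would construct the Lax distribution on $\M\times\mathbb{C}^2_0$, quotient by it to obtain the deformed $\mathbb{T}$, and check properties (1)–(4).

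Finally, I would verify that the two constructions are mutually inverse modulo gauge.
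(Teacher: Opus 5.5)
\textbf{Main gap: the ``up to gauge'' clause is asserted, not proved.} In Step 4 you simply state that changing the section $\M\hookrightarrow\HST$ is a higher-spin gauge transformation. Only one direction is easy: a horizontal holomorphic diffeomorphism $(x,\lambda)\mapsto(\varphi(x,\lambda),\lambda)$ of $\M\times\IC^2_0$ visibly moves the four-parameter family of planes to another section. For the converse you must show that every infinitesimal change of section $\bm{\eta}$, i.e.\ every holomorphic normal section $\eta^{\dot\alpha}=\iota_{\bm\eta}{\rm d}_xF^{\dot\alpha}=\sum_{s\geq1}\eta^{\dot\alpha\alpha(s)}\lambda_{\alpha(s)}$, is produced by some $\xi=\sum_{l\geq0}\xi^{\mu\alpha(l)}\lambda_{\alpha(l)}\partial_\mu$. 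This amounts to solving $\eta^{\dot\alpha\alpha(s)}=\sum_{k+l=s}\xi^{\mu\alpha(l)}E_\mu^{\dot\alpha\alpha(k)}$. It can be solved recursively in $s$, and only because the lowest frame $E_\mu^{\dot\alpha\alpha}$ is the invertible spin-$2$ vielbein. Without this step you only have a bijection between solutions and deformations \emph{together with a choice of section}.

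Step 4 also skips the mechanism that produces the fields at all. Integrability of your Lax distribution is automatic, since its leaves are fibres of $\pi_3$. What is actually needed is that $\mathcal{F}=\epsilon_{\dot\alpha\dot\beta}\,{\rm d}_xF^{\dot\alpha}\wedge{\rm d}_xF^{\dot\beta}$ is holomorphic in $\lambda$. This holds because ${\rm d}_xF$ takes values in holomorphic sections of the normal bundle, which is trivial, and (2) preserves the fibrewise pairing. By Hartogs, $\mathcal{F}$ is then a power series in $\lambda$; it is horizontally closed and simple, so $\mathcal{F}={\rm d}_x\mathcal{A}$ and any pull-back satisfies $F\wedge F=0$.

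\textbf{Second gap: the justification in Step 2 is wrong and hides the structure everything else relies on.} Hartogs controls $H^0$, not $H^1$: $H^1(\IC^2_0,\mathcal{O})$ is infinite-dimensional, with homogeneous pieces $H^1(\mathbb{C}P^1,\mathcal{O}(k))\neq0$ for $k\leq-2$. Two facts make the plane equation solvable. First, (3) forces $\h=h\,\bar\lambda^{\dot\alpha}{\rm d}\bar\lambda_{\dot\alpha}$, so integrability becomes the linear condition $\bar\partial\h=0$ and $\h$ splits into homogeneous Penrose representatives. Second, (4) holds iff only helicities $s\geq2$ occur.

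Then $F=\sum_{l\geq1}F_{(l)}$ and the equation is triangular in $l$. The $l=1$ part is exactly Penrose's nonlinear graviton equation for $\h_{(2)}$. Each $l>1$ part is a linear inhomogeneous $\bar\partial$-problem on $\hat N\otimes\mathcal{O}(l-1)\cong\mathcal{O}(l)\oplus\mathcal{O}(l)$ over that curve, with vanishing $H^1$ and with $H^0$ parametrised by $x^{\dot\alpha\alpha(l)}$. This structure is what yields three things you need:
\begin{itemize}
\item the canonical projection $\HST\to\M$ onto a \emph{curved} self-dual spacetime (your ``tangent planes at the origin'' are really cones over twistor lines of the curved $\PT$, not linear data in a flat $\IC^4$);
\item the triviality of the normal bundle used above;
\item the graded structure that makes the gauge recursion close.
\end{itemize}

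A contraction in spaces of merely bounded functions would not even secure $|F|<R|\lambda|$. For example, $\bar\partial(\lambda_1\log|\lambda|^2)$ is bounded, yet every solution of $\bar\partial u=\bar\partial(\lambda_1\log|\lambda|^2)$ differs from $\lambda_1\log|\lambda|^2$ by a function holomorphic across the origin, so none is $O(|\lambda|)$. It is the $\bar E$-invariance and the homogeneity decomposition that exclude such terms.
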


The main weaking of Penrose's classical theorem \cite{Penrose:1976js}, 
is in the third assumption: the  holomorphicity of the Euler vector field of Penrose's construction  
is here weakened to the assumption of simply being $(1,0)$ but no longer holomorphic. 
As a result, the generic complex deformations that we will consider 
cannot be interpreted as complex deformations of $\mathbb{PT}$, 
and higher spin fields will arise from this extra freedom.  
The fourth assumption would automatically follow from the classical ones; 
as we shall see, it effectively prevents the appearance of negative 
helicity higher-spin fields. We conjecture that lifting this fourth 
requirement would allow, together with the use of the Moyal product as in 
\cite{Mason:2025pbz}, to describe solutions of chiral HiSGra. 
See also the further discussion at the end of this article. Finally, integrability is manifested in our construction first by the existence of a Lax pair, which follows from the generalisation of a result from \cite{Herfray:2022prf}, and second, from the nonlinear graviton construction of Theorem \ref{theorem intro}. \\

The article is organised as follows. 
In section \ref{section: Higher-spin self-dual gravity} we review, from 
\cite{Krasnov:2021nsq}, the field equations and action for higher-spin self-dual gravity. 
In the following section \ref{section: Twistor space of HS-SDGR}, 
we explain how the twistor space of higher-spin self-dual gravity with zero cosmological constant 
is obtained, which is essentially a generalization of the results of 
\cite{Herfray:2022prf} in the complexified rather than Euclidean context. 
However, due to the fact that we are here working with zero cosmological 
constant, higher spin gauge symmetry can be given a complete treatment. 
Section \ref{section: Nonlinear HS graviton theorem} contains our main 
results: we consider a deformation of the complex structure on $\mathbb{T}$, 
along the line of \cite{Mason:2025pbz} but without the need to assume reality 
conditions, and prove that in the linearized theory it generically describes 
a full tower of higher-spin fields. We then study the space $\HST$ of 
holomorphic planes in the curved twistor space and show that it is an infinite 
dimensional complex manifolds $\HST \to \M$ which fibers over a 4-dimensional 
self-dual spacetime $\M$. We then prove that a choice of embedding $\M\hookrightarrow \HST$ yields a solution to higher-spin self-dual gravity. In section \ref{section: Higher-spin space geometry and higher-spin symmetries}, we close the loop by proving that two solutions obtained via two different embeddings are related by a higher-spin symmetry and that all higher-spin symmetries can be realized in that way. In the last section, we derive the Plebanski potentials of higher-spin self-dual gravity from our twistor space description and show that there is precise room for obtaining interactions with negative helicity fields.

\section{Higher-spin self-dual gravity}\label{section: Higher-spin self-dual gravity}

We here review from \cite{Krasnov:2021nsq}, the field equations and action for higher-spin  self-dual  gravity with zero cosmological constant.\\

Let $\M$ be a $4d$ manifold with coordinates $x^{\mu}$ and let us note 
$\alpha\in 0,1$ some global spinor indices equipped with a canonical volume form 
$\epsilon_{\alpha\beta}\,$. 
Everywhere $\alpha(s)= (\alpha_1 \cdots \alpha_s)$ is a multi-index notation 
standing for the symmetrised products of such indices. 
Our convention is that if the same letter is used, symmetrisation 
is implicit e.g. $\alpha(s)\alpha(s')=\alpha(s+s')\,$. 
The fields for higher-spin self-dual gravity are
\begin{align}
    \Psi_{\alpha(l+4)}&, & A^{\alpha(l+2)} = 
    {\rm d}x^{\mu}A_{\mu}^{\alpha(l+2)}, \hspace{3cm}\forall\; l\in 
    \mathbb{N}^*\;,
\end{align}
here $l =2s-4$, $\Psi_{\alpha(l+4)}$ is a field of helicity $ - s$ 
and $A_{\mu}^{\alpha(l+2)}$ is a field of helicity $+s\,$.

The spacetime action for higher-spin self-dual gravity is \cite{Krasnov:2021nsq}
\begin{align}\label{Lagrangian of SD HS gravity}
 	S[\Psi, A] &= \frac{1}{2} \int \sum_{l\geq 0} \frac{1}{(l+4)!}
    \Psi_{\alpha(l+4)} \,(F\wedge F)^{\alpha(l+4)}\;,
 \end{align}
where 
\begin{align}\label{SD HS gravity: def of FwF}
    F^{\alpha(k)} &:= {\rm d} A^{\alpha(k)}, & (F\wedge F)^{\alpha(k)} := 
    \sum_{l+m=k} F^{\alpha(l)} \wedge F^{\alpha(m)}\;.
\end{align}

Varying with respect to $\Psi^{\alpha(s)}$ this action one obtains the 
field equations
\begin{align}\label{Field equations of SD HS gravity}
	\sum_{k+l=s} F^{\alpha(k)} \wedge F^{\alpha(l)} &= 0\;.
\end{align}
where we recall that it is our conventions that indices with the same 
name are implicitly symmetrized.

The field equations have the following (infinitesimal) gauge freedom:
\begin{align}\label{HS SD gravity: gauge}
\begin{alignedat}{2}
      \delta A^{\alpha(s)} &= {\rm d} \chi^{\alpha(s)},\\
    \delta A^{\alpha(s)} & = {\rm d}x^{\mu} \left( \sum_{k+l=s} 
    \eta^{\alpha(k) \nu}\, F^{\alpha(l)}_{\mu\nu} \right)\;.
\end{alignedat}
\end{align}

\section{Twistor space of higher-spin self-dual gravity}\label{section: Twistor space of HS-SDGR}

We will first show how to obtain the twistor space associated to a solution 
of the higher-spin self-dual gravity field equations. The present construction is an adaptation, 
to complexified spacetime solutions of higher-spin self-dual gravity with $\Lambda=0$, of the 
result of \cite{Herfray:2022prf} (which considered solutions with 
$\Lambda\neq0$ and in Euclidean signature).

\subsection{Correspondence space}

The correspondence space $\F$ is defined as the product 
$\F = \M \times \IC^2_0 $ of the spacetime and the space $\IC^2_0$ 
of non zero spinors $\lambda^{\alpha}$, accordingly one will take 
\begin{equation}
   (x^{\mu}, \lambda_\alpha) \in  \;\M \times \IC^2_0 \,=\, \F
\end{equation}
 as our coordinates on the correspondence space. 
 This is a product manifold with two projections 
 $\pi_1 : \F \to \M$ and $\pi_2 : \F \to \IC^2_0$ --  which is in 
 line with the fact that our indices $\alpha$ are global. 
 This also means that the exterior derivative naturally is the sum 
 \begin{equation}
     {\rm d} = {\rm d}_x +  {\rm d}_{\lambda}
 \end{equation}
 of the spacetime derivative (``horizontal'') 
 ${\rm d}_x = {\rm d}x^{\mu} \frac{\partial}{\partial x^{\mu}}$ 
 and the vertical derivative 
 ${\rm d}_{\lambda} = {\rm d}\lambda^{\alpha} \frac{\partial}{\partial 
 \lambda^{\alpha}}\,$.

 The fields of higher-spin self-dual gravity naturally define a horizontal 1-form on 
 $\M \times \IC^2_0 \,=\, \F$
 \begin{equation}\label{A on correspondence space}
     A := {\rm d}x^{\mu} \left( \sum_{l\geq2} A^{\alpha(l)}_{\mu} 
     \lambda_{\alpha(l)} \right)
 \end{equation}
which is holomorphic along $\IC^2_0\,$. 
The other way round, any horizontal 1-form on $\F$ which is holomorphic 
along $\IC^2_0$ and vanishes at the origin to first order can be written 
as \eqref{A on correspondence space} (at least in a neighborhood of $\lambda=0$) 
since, by Hartogs's theorem, it must in fact be holomorphic on $\mathbb{C}^2$.

Introducing the horizontal 2-form
\begin{equation}\label{F on correspondence space}
    \begin{alignedat}{2}
    F &:=  {\rm d}_x A \\
    &=   \frac{1}{2}{\rm d}x^{\mu}\wedge {\rm d}x^{\nu}\; \left( \sum_{l\geq2} 
    F^{\alpha(l)}_{\mu\nu} \lambda_{\alpha(l)} \right)
    \end{alignedat}
\end{equation}
where $F^{\alpha(l+2)} =  \frac{1}{2}\, {\rm d}x^{\mu}\wedge {\rm d}x^{\nu}\; F^{\alpha(l+2)}_{\mu\nu}$ 
is the spacetime 2-form \eqref{SD HS gravity: def of FwF} appearing 
in the Lagrangian \eqref{Lagrangian of SD HS gravity}. Now, since
\begin{align}
\begin{alignedat}{2}
    F \wedge F  &= \left( \sum_{k\geq2} F^{\alpha(k)} \lambda_{\alpha(k)} \right) 
    \wedge \left( \sum_{l\geq2} F^{\alpha(l)} \lambda_{\alpha(l)}\right) \\
    &= \sum_{s\geq4}  \left( \sum_{k+l=s} F^{\alpha(k)}\wedge F^{\alpha(l)}  \right) 
    \lambda_{\alpha(s)}\;,
    \end{alignedat}
\end{align} 
the field equations \eqref{Field equations of SD HS gravity} 
of higher-spin self-dual gravity are in fact equivalent to
\begin{equation}\label{Field equation for F in correspondence space}
    F \wedge F = 0\;.
\end{equation}

\paragraph{Gauge symmetry in correspondence space}

The infinitesimal gauge symmetry \eqref{HS SD gravity: gauge} has a natural nonlinear realization in the correspondence space in terms of horizontal, holomorphic, diffeomorphisms :
\begin{align}\label{Diffeo in correspondence space}
	\Phi \left|\begin{array}{ccc}
	     \M \times \mathbb{C}_0^2 & \longrightarrow &
	\M \times \mathbb{C}_0^2 \\
	    (x^\mu, \lambda_\alpha) &\longmapsto& (\varphi^{\mu}(x, \lambda_\alpha), \lambda_\alpha)
	\end{array}
	\right.
\end{align}
with $\frac{\partial}{\partial \bar\lambda}\varphi^{\mu} = 0$. 
The action, via pull-back, of these diffeomorphisms
\begin{equation}
 \varphi^{\mu}(x, \lambda_\alpha) = \sum_{k} \varphi^{\mu \alpha(k)}(x) \lambda_{\alpha(k)}   
\end{equation}
gives 
$\Phi^*A= {\rm d} \varphi^{\mu} A_{\mu}(\varphi(x,\lambda),\lambda)
+{\rm d}\lambda^\alpha A_\alpha\,$. 
Projecting the resulting form on the horizontal directions\footnote{One 
can consider a one-form $A$ with both vertical and horizontal components, 
and from it, its two form $F={\rm d}A\,$.
The equations $F\wedge F=0$ can be solved explicitly, 
and again imply that that the horizontal part $F_{\mu\nu}$ is simple. 
The purely vertical 
components $F_{\alpha\beta}=\xi_{\alpha\dot\alpha}\xi_{\beta\dot\beta}
\epsilon^{\dot{\alpha}\dot{\beta}}$ of the total two-form involve a 
vertical frame $\xi_{\alpha\dot\alpha}$ that has received no 
interpretation so far.} 
yields an action
\begin{equation}\label{eq action of gauge symmetry on A}
    A= {\rm d}x^{\mu} A_{\mu}(x,\lambda) \qquad \mapsto \qquad 
    \Phi\cdot A := {\rm d}_x \varphi^{\mu} A_{\mu}(\varphi,\lambda)\;,
\end{equation}
which preserves the form \eqref{A on correspondence space} as well as the field equations \eqref{Field equation for F in correspondence space}.

The corresponding infinitesimal transformation of the potential one form \eqref{A on correspondence space} under this diffeomorphism is
\begin{align}
	\delta_{\eta} A = {\rm d}_x(\iota_\eta A)+ \iota_\eta({\rm d}_x A) 
    = {\rm d}_x (\iota_\eta A) + \iota_\eta {F}\;,
\end{align}
with $\eta= \sum_{k} \eta^{\mu \alpha(k)}(x) \lambda_{\alpha(k)} \partial_{\mu}$ 
a horizontal and holomorphic vector field generating the flow of diffeomorphisms. 
Together with the invariance $A \mapsto A + d_x\chi$ of the field equations, this 
geometrically realizes the gauge symmetries \eqref{HS SD gravity: gauge}.

\subsection{Twistor space}\label{subsection Twistor space}

The key point now is that the 2-form \eqref{F on correspondence space} is 
horizontal and therefore, since the spacetime is four dimensional, the field 
equation \eqref{Field equation for F in correspondence space} are equivalent to the 
requirement that $F$ must be simple, i.e., that there exits two horizontal 1-forms 
$E^{\dot{\alpha}}(x,\lambda) = {\rm d}x^{\mu} E^{\dot{\alpha}}_{\mu}(x,\lambda)$ with $\dot{\alpha} \in \dot{0}, \dot{1}$ such that
\begin{equation}
    F = 2\, E^{\dot{0}} \wedge E^{\dot{1}}\;.
\end{equation}
Note that, at this stage, there is no metric on our spacetime and the dotted spinor 
above is just a notation (ultimately it will be identified with dotted spinors of a 
self-dual spacetime). There is some ambiguity in the definition of 
$(E^{\dot{0}}, E^{\dot{1}})$ and, since $F$ is polynomial in $\lambda^{\alpha}$ 
one can always choose them such that
\begin{equation}
    E^{\dot{\alpha}}(x,\lambda) = {\rm d}x^{\mu} \left(\sum_{k\geq1}  
    E^{\dot{\alpha} \alpha(k)}_{\mu}(x) \lambda_{\alpha(k)} \right)\;.
\end{equation}
Making use of this frame one can now define the $2$-dimensional distribution 
$D_p \subset T_p \F$ at a point $p$ of the correspondence space as the kernel 
of the quadruplet 
$\left( {\rm d} \lambda^{0}, {\rm d} \lambda^{1}, E^{\dot{0}}, E^{\dot{1}} \right)$ :
\begin{equation}\label{Correspondence space: lax pair}
    D_p := \ker({\rm d} \lambda^{0})_p \cap \ker({\rm d} \lambda^{1 })_p \cap 
    \ker(E^{\dot{0}})_p \cap \ker(E^{\dot{1}})_p \subset T_p \F\;.
\end{equation}
It turns out that, as a result of the field equations, 
this distribution is integrable. 

\begin{proof}
We need to check that the following integrability constraints are satisfied 
$ {\rm d}({\rm d}\lambda^{0,1})\vert _{D} = 0\,$,
and ${\rm d}E^{\dot 0,\dot 1} \vert_{D} =0 $: 
On one hand we trivially have that
\begin{align}
	 {\rm d}({\rm d}\lambda^\alpha )
	= 0\;.
\end{align}
On the other hand, the two remaining integrability constraints are 
equivalent to
\begin{align}\label{eq equivalent integr cond}
	\forall X, Y \in D, \ {\rm d}F(X, Y, \cdot) = 0\;.
\end{align}
This is because, if $X, Y \in D$, the above reduces to
\begin{align}
	{\rm d}F(X, Y, \cdot) = 2 \,{\rm d}E^{\dot 0}(X,Y) \; E^{\dot 1}(\cdot) 
    - 2\, E^{\dot 0}(\cdot) \; {\rm d}E^{\dot 1}(X,Y)\;,
\end{align}
which, since $E^{\dot 0}$, $E^{\dot 1}$ are linearly independent, is equivalent 
to ${\rm d}E^{\dot 0}(X,Y) = 0\,$, $ {\rm d}E^{\dot 1}(X,Y) = 0\,$.

To prove that integrability follows from the field equations 
\eqref{Field equation for F in correspondence space}, it remains to show 
that \eqref{eq equivalent integr cond} does: On the one hand, we use that 
$F = {\rm d}_xA$ to rewrite \eqref{eq equivalent integr cond} as
\begin{equation}\label{eq equivalent integr cond2}
    \forall X, Y \in D, \;\;\frac{\partial F}{\partial \lambda^{\alpha}}(X, Y) 
    \,{\rm d}\lambda^{\alpha} = 0\;.
\end{equation}
On the other hand, differentiating the fields equations \eqref{Field equation for F in correspondence space}, one has
\begin{align}\label{d F wedge F}
	0 &=   {\rm d}{F}\,\wedge {F}= {\rm d}_{\lambda}F \,\wedge {F} 
    = 2\, \frac{\partial F}{\partial \lambda^{\alpha}} \wedge 
    {\rm d} \lambda^{\alpha}\wedge  E^{\dot 0} \wedge  E^{\dot 1}\;,
\end{align}
and contracting with $X,Y \in D$ gives \eqref{eq equivalent integr cond2} 
as claimed.
\end{proof}

As a result of the integrability of the distribution $D$, 
the 6 dimensional correspondence space $\F$ is foliated by 2 dimensional leaves. 
The twistor space $\T$ of the HS SD gravity solution is then obtained as the 
quotient by these leaves, see the resulting double fibration in Figure 
\ref{Fig: Double fibration}.

\begin{figure}[h]
    \centering
\begin{tikzcd}
     & \F = {\M} \times \mathbb{C}_0^2  \arrow[dl, "\pi_1"] \arrow[dr, "\pi_3"] & \\
	  {\M}   &   & \T \;.
\end{tikzcd}
   \caption{Double fibration of the correspondence space $\F$: 
   as a trivial bundle over spacetime $\M$ and as a space of leaves. } 
   \label{Fig: Double fibration}
\end{figure}
Note that the projection $\pi_2 : \F \to \IC^2_0$ on the correspondence space 
induces on the resulting twistor space $\T$ a natural holomorphic fibration
\begin{equation}\label{Twistor fibration}
    \pi : \T \longrightarrow \IC^2_0\;.
\end{equation}
This follows from the fact that the distribution $D$ is transverse to the 
vertical distribution $V :=\textrm{span}\left(\frac{\partial}{\partial \lambda^{0}}, \frac{\partial}{\partial \lambda^{1}} \right)\,$.

\paragraph{Action of gauge symmetries on twistor space}

As usual in twistor theory, the double fibration \ref{Fig: Double fibration} 
means that for each spacetime point $x^{\mu} \in \M$ one has a corresponding 
embedding 
$X_x : \mathbb{C}_0^2 \hookrightarrow \T\,$, obtained by considering 
$\pi_3 \circ \pi_1^{-1}(x^\mu)\,$. In this way, we obtain a 4-parameter family 
$X_x$ of holomorphic sections of the twistor fibration \eqref{Twistor fibration}. 

Now, a new phenomenon here is that the action of the symmetry 
\eqref{Diffeo in correspondence space} will give rise to a \emph{different}, 
alternative, four parameter family 
$X_{\phi(x)} : \mathbb{C}_0^2 \hookrightarrow \T\,$, by considering 
\begin{align}\label{eq action of the gauge symmetry on the curves}
	\pi_3 \circ \Phi \circ \pi_1^{-1}(x^\mu)\;,
\end{align}
as pictured in Figure \ref{fig: diffeo}. It might be highlighting to consider this phenomenon on the flat model.

\begin{figure}
    \centering
    \includegraphics[width=4in]{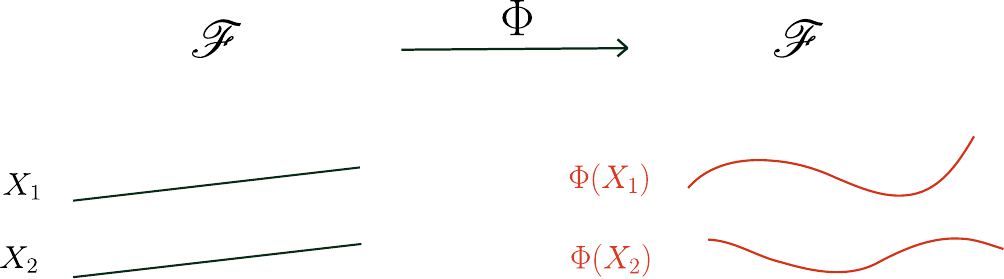}
    \caption{The symmetry $\Phi$ is acting as a diffeomorphism of $\F$. On the left are represented two points of the four-parameter family $\pi_1^{-1}(x^\mu)$ and on the right the same points, after the action of the symmetry $\Phi\circ \pi_1^{-1}(x^\mu)$.}
    \label{fig: diffeo}
\end{figure}

\paragraph{The flat model}

In this case the spacetime is just flat Minkowski space $\mathbb{M}^4$ and the twistor space is
$\mathbb{T}=\mathbb{C}^4_0 \setminus \mathbb{C}^2_0$ with usual complex structure.
Flat spinor indices can therefore be simply converted as spacetime vector indices : $x^\mu = x^{\alpha\dot\alpha}$. The incidence relation 
relates points $x^{\alpha\dot \alpha}$ of $\mathbb{M}^4$ with curves in twistor space and gives a 4-parameter family $X_x: \mathbb{C}^2_0 \hookrightarrow  \mathbb{T}$ of holomorphic embeddings
\begin{align}
X_x\left|
\begin{array}{ccccc}
\mathbb{C}^2_0 & \quad \longrightarrow &  \mathbb{C}^4_0 \setminus \mathbb{C}^2_0 \\[0.4em]
\lambda_\alpha & \quad \mapsto &
	\begin{pmatrix}
	x^{\dot\alpha\alpha}\lambda_{\alpha}	\\ \lambda_\alpha
	\end{pmatrix}.
\end{array}	\right.
\end{align}

Under a diffeomorphism of the type \eqref{Diffeo in correspondence space}, this relation get transformed.
In the correspondance space, it acts on the points as 
\begin{align}\label{Diffeo in flat space}
	\Phi \left|\begin{array}{ccc}
	    (x^\mu, \lambda_\alpha) &\longmapsto& (\varphi^{\mu}(x, \lambda_\alpha), \lambda_\alpha),
	\end{array}
	\right.
\end{align}
with\footnote{In principle the first term in the expansion might be a diffeomorphism; we drop this contribution which does not bring anything new.} 
\begin{align}\label{Flat model: correspondence space diffeo}
	\varphi^{\mu}(x, \lambda) 
	& = x^\mu + \sum_{k\geq 1} \varphi^{\mu\alpha(k)}(x)\lambda_{\alpha(k)}
\end{align}
Because the diffeomorphism is horizontal it will preserve the Lax pairs, and hence, the twistor space.
The projection to flat twistor space, given 
here by the incidence relation, implies that the initial 4-parameter family of holomorphic embeddings $X_x: \mathbb{C}^2_0 \hookrightarrow  \mathbb{C}^4_0$ is transformed into
\begin{align}
X_{\phi(x)}\left|
\begin{array}{ccccc}
\mathbb{C}^2_0 & \quad \longrightarrow &  \mathbb{C}^4_0 \\[0.2em]
\lambda_\alpha & \quad \mapsto &
	\begin{pmatrix}
	\varphi_x^{\alpha\dot\alpha}\lambda_\alpha\\ \lambda_\alpha
	\end{pmatrix}.
\end{array}	\right.
\end{align}
with 
\begin{align}
	\varphi_x^{\alpha\dot\alpha}\lambda_\alpha = x^{\alpha\dot\alpha}\lambda_\alpha + \sum_{k\geq0}
	\varphi^{\dot\alpha\alpha(k+1)}(x)\lambda_{\alpha(k+1)}.
\end{align}
Note in particular that the subspace of diffeomorphisms of the correspondence space \eqref{Flat model: correspondence space diffeo} such that $\varphi^{\dot\alpha\alpha, \alpha(k)} =0$ (with symmetrization implied on the $\alpha$ indices) acts trivially in twistor space.

\section{Nonlinear higher-spin graviton theorem}\label{section: Nonlinear HS graviton theorem}

\subsection{Curved twistor space}

Let $(\mu^{\dot \alpha} , \lambda_{\alpha})$ be holomorphic coordinates 
on the flat twistor space and let 
$\left\{ - , -\right\} = \epsilon^{\dot \alpha \dot \beta} \frac{\partial}{\partial \mu^{\dot \alpha}}\frac{\partial}{\partial \mu^{\dot \beta}}$ 
be the flat holomorphic Poisson structure.

Motivated by the previous section and the Euclidean results from 
\cite{Herfray:2022prf,Mason:2025pbz} we consider a deformation $\T$ 
of the complex structure on the non projective twistor space $\mathbb{C}^4_0\setminus \mathbb{C}^2_0\,$. 
In practice, we will work on the following open set
\begin{equation}\label{Boundness domain}
    \mathbb{T} = \left\{\; (\mu^{\dot \alpha}, \lambda_{\alpha} ) \in \mathbb{C}^4 \; \vert\; \big|\mu^{\dot \alpha}\big| < R \big|\lambda^{\alpha}\big| \;, \;  \big|\lambda^{\alpha}\big| < C \;\right\}\;.
\end{equation}
The first condition, with a fixed real constant $R$, is needed to avoid 
working too close to the plane $\lambda=0\,$, where singularities of the 
complex structure are expected\footnote{From the spacetime perspective, 
this amounts to restricting to  $|x|<R$ so as to provide a local 
construction on spacetime avoiding potential singularities as $x\rightarrow \infty$. 
See Appendix \ref{Apdx: Euclidean fields} for a discussion in Euclidean signature.}. 
The second condition, with a real constant $0<C<1$ that will be adapted along the 
article, is introduced in order to  keep control of the convergence of the different 
series at play. In what follows, in order to lighten the text, we will sometime keep 
implicit the condition $0< |\lambda| < C$ and refer abusively to this domain as 
$\mathbb{C}^2_0\,$.\\

Thus we will consider deformations of the complex structure of $\mathbb{T}$ such that
\begin{enumerate}[(i)]
    \item the holomorphic projection $\mathbb{T} \to \mathbb{C}^2_0\,$,
    \label{item holom projection}
    \item the holomorphic Poisson structure $\left\{ - , -\right\}$ is preserved,
    \label{item holom poisson}
    \item the Euler vector field $E=\mu^{\dot{\alpha}} \frac{\partial}{\partial \mu^{\dot{\alpha}}} + \lambda_{\alpha} \frac{\partial}{\partial \lambda_{\alpha}}$ is $(1,0)\,$, but \emph{is not necessarily holomorphic}.
    \label{item euler not holom}
\end{enumerate}
Crucially, as compared to the classical nonlinear graviton theorem 
\cite{Penrose:1976js} (see in particular \cite{eastwood_tod_1982} for a 
presentation closely related to ours), we relax the holomorphicity of the 
Euler vector field. This implies that these deformations can not be identified 
with deformations of the projective twistor space 
$\mathbb{PT} = \mathbb{C}\textrm{P}^3 \setminus \mathbb{C}\textrm{P}^1\,$.

Assumption \ref{item holom poisson} imply that the 
deformations are locally given by perturbations of the Dolbeault operator 
of the form  $\bar{\nabla} = \bar{\partial} + \{\h, - \}\,$ where initially $\h$ is an arbitrary $(0,1)$-form on $\mathbb{T}$.
Integrability is then given by \begin{equation}\label{eq integrability}
    \bar{\partial} \h + \{ \h,\h\}=0\;.
\end{equation}
We can further restrict the form of $\h$ as follows. It is standard result in twistor theory that the requirements 
\ref{item holom projection} and \ref{item holom poisson} imply that the 
deformations are locally given by $\bar{\nabla} = \bar{\partial} + \{\h, - \}\,$, 
where $\h = h_{\dot\alpha}\, {\rm d}{\bar{\lambda}}^{\dot{\alpha}} \in \Omega^{(0,1)} (\mathbb{T})$ is a horizontal $(0,1)$-form. 
This is because, at fixed $\lambda$, the fibres of the projection 
to $\mathbb{C}^2_0$ are Stein (being convex subsets of $\mathbb{C}^2$ as 
follows from \eqref{Boundness domain}).  Thus firstly we can find holomorphic coordinates $\mu^{\dot \alpha}$ that can then also be further restricted  to put the Poisson structure into its normal form by Darboux's theorem.  This can be done smoothly  in $\lambda$ but not necessarily holomorphically.  

After this, the almost complex structure is given by
\begin{align}
    T^{0,1}\T = Span\left( \frac{\partial}{\partial \bar\mu^{\alpha}} , \frac{\partial}
    {\partial \bar{\lambda}^{\dot\alpha}} + \left\{ h_{\dot\alpha}, - \right\} \right)\;,
\end{align}
Requiring \ref{item euler not holom}, i.e. that the Euler vector field 
is $(1,0)$, together with \eqref{eq integrability}
now restricts to deformations given by $\h$ of the form
\begin{align}\label{Twistor space deformations: deformation we consider}
   \h = h \, \bar{\lambda}^{\dot\alpha}{\rm d}\bar{\lambda}_{\dot\alpha} &\in \Omega^{(0,1)} (\mathbb{T})\;, & \text{ with } && \bar{\partial}\h &=0\;.
\end{align}
We have now completely solved the integrability condition \eqref{eq integrability} and all the information of the deformation is encoded in $\h$.\\

The integrability condition is now equivalent to the 
holomorphicity of $\h$ i.e.
\begin{align}\label{Twistor space deformations: nice h properties}
    \frac{\partial}{\partial \bar{\mu}^{\alpha}}h &=0\;, 
    & \bar{\lambda}_{\dot\alpha} \frac{\partial}{\partial \bar{\lambda}_{\dot\alpha}}h = -2 h\;.
\end{align}
As a result $\h$ must be homogeneous of degree $0$ in the anti-holomorphic coordinates:
\begin{align}
    \mathcal{L}_{\bar{E}} \h &=0,& \mathcal{L}_{E} \h &\neq 2 \h\;.
\end{align}
The second equation here is meant to emphasize the difference with the standard nonlinear graviton construction:

In the classical nonlinear graviton theorem \cite{Penrose:1976js}, 
one would require that the Euler vector field $E$ is \emph{holomorphic}, 
and not merely $(1,0)$, which is equivalent to $\mathcal{L}_{E} \h = 2 \h\,$. 
As a result, in the classical construction, $\h$ defines a section of 
$\Omega^{(0,1)}(\mathbb{PT},\mathcal{O}(2))$ on the projective twistor space 
$\mathbb{PT}$ and the corresponding complex structure is interpreted as a 
deformation $\PT$ of the complex structure on $\mathbb{PT}\,$. 
However, in the present article, the weaker requirement 
\ref{item euler not holom}  means that there will be no constraint 
on the homogeneity of $\h$ in the holomorphic coordinates. 
Therefore, as opposed to the classical twistor construction, 
$\h$ does not in general define a section of 
$\Omega^{(0,1)}(\mathbb{PT}, \mathcal{O}(2))\,$ and, as a result, 
generically does not parametrize a deformation of projective twistor 
space $\mathbb{PT}\,$.\\

The fact that, under the present assumption, $\h$ has no fixed homogeneity also means  that it cannot be interpreted, via the Penrose transform, as a spin-$2$ field. Therefore we loose the interpretation that an infinitesimal deformation of the complex structure is generated by a graviton. Nevertheless, the deformations that we consider are still rigid enough to be interpreted as being generated by a tower of higher spin fields.

\begin{Proposition}\label{Prop decomposition of deformations}
	 For a deformation of the twistor space $\bar{\nabla} = \bar{\partial} + \{\h, - \}$ given by \eqref{Twistor space deformations: deformation we consider}, the $(0,1)$-form $\h$ can always be 
	written, on  \eqref{Boundness domain}, as 
	\begin{align}\label{Decomposition of h in homogenoeity}
		\h(\mu^{\dot\alpha}, \lambda_\alpha)&= 
		\sum_{k\in \frac{1}{2}\mathbb{Z}} \h^{(k)}\;, 
	\end{align}
    where $\h^{(k)} = h^{(k)}(\mu , \lambda)\; \bar{\lambda}^{\dot\alpha}
    {\rm d}\bar{\lambda}_{\dot\alpha}$ 
    are holomorphic $(0,1)$-forms on $\mathbb{T}$ of homogeneity $(2k-2,0)\,$. 
    They thus descend to $\mathcal{O}(2k-2)$-valued holomorphic forms on 
    projective twistor space $\mathbb{PT}$ and, by Penrose transform, 
    correspond to linearized spin-$k$ fields propagating on Minkowski space. 
\end{Proposition}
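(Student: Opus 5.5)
The plan is to decompose $h$ by homogeneity in the holomorphic variables, using the $\mathbb{C}^*$-action generated by the real and imaginary parts of the Euler field.

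\textbf{Step 1: the coefficient is holomorphic in $\mu$ and of fixed antiholomorphic degree.} From \eqref{Twistor space deformations: nice h properties}, the coefficient $h(\mu,\lambda,\bar\lambda)$ is holomorphic in $\mu$. It is also homogeneous of degree $-2$ in $\bar\lambda$, and it is smooth in $\lambda$.

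\textbf{Step 2: restrict to orbits of the scaling action.} The domain \eqref{Boundness domain} is invariant under $(\mu,\lambda)\mapsto(t\mu,t\lambda)$ for $|t|\le1$, with $t\neq0$. Using this, fix a point $(\mu,\lambda)$ and consider the function of one complex variable
\begin{equation*}
f(t,\bar t)=h(t\mu,t\lambda,\bar t\bar\lambda)\,\bar t^{2}.
\end{equation*}
Since $h$ has $\bar\lambda$-degree $-2$, we get $f(t,\bar t)=\bar t^{2}\,\bar t^{-2}\,h(t\mu,t\lambda,\bar\lambda)=h(t\mu,t\lambda,\bar\lambda)$. This is a smooth function of $t$ on the punctured disc. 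It is not holomorphic in $t$, because $h$ is not holomorphic in $\lambda$.

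\textbf{Step 3: Fourier decomposition in the phase.} Do not expand in powers of $t$. Instead, expand in the phase $t=re^{i\theta}$: write $h=\sum_{n\in\mathbb{Z}}h_n$, where
\begin{equation*}
h_n(\mu,\lambda,\bar\lambda)=\frac{1}{2\pi}\int_0^{2\pi} e^{-in\theta}\,h(e^{i\theta}\mu,e^{i\theta}\lambda,\bar\lambda)\,d\theta .
\end{equation*}
Each $h_n$ has charge $n$ under the $U(1)$ part of the Euler action, i.e.\ $(E-\bar E)h_n=n\,h_n$. The $\mu$-holomorphicity and the $\bar\lambda$-degree are preserved by averaging over the circle. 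The constraint $\bar\lambda\cdot\partial_{\bar\lambda}h=-2h$ then fixes $\bar E h_n=-2h_n$.

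\textbf{Step 4: fix the real scaling and identify the weight.} The radial part of the scaling is the remaining issue. The proposition claims homogeneity $(2k-2,0)$, which means $Eh^{(k)}=(2k-2)h^{(k)}$. Only then does $h^{(k)}$ descend to an $\mathcal{O}(2k-2)$-valued form on $\mathbb{PT}$. This is equivalent to $\mathcal{L}_E\h^{(k)}=2k\,\h^{(k)}$, since $\bar\lambda^{\dot\alpha}d\bar\lambda_{\dot\alpha}$ has $\bar E$-weight $2$. Combining $\bar E$-weight $-2$ for $h$ with $U(1)$-charge $n$ gives $E$-weight $n-2$, provided $h_n$ is also an eigenfunction of the real dilation. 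So one must arrange that $h_n$ is exactly homogeneous, with $2k=n$ and $k\in\tfrac12\mathbb{Z}$ as stated.

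I would try to use gauge freedom here. The coordinates were only fixed smoothly, not holomorphically, in $\lambda$ via Darboux, and the representative $\h$ can be shifted by $\bar\partial$-exact pieces $\{\h,-\}$-compatible with the gauge. The aim is to move the radial dependence of each $h_n$ into exact terms, so that only the homogeneous part survives. Concretely, on each $U(1)$-charge sector one solves a $\bar\partial$-equation along the $r$-direction to trade $r\partial_r$-dependence for a constant eigenvalue. I expect this gauge-fixing of the radial dependence, together with convergence of $\sum_n h_n$ on the domain with $|\lambda|<C<1$, to be the main obstacle. Convergence should follow from smoothness in $\theta$ (rapid decay of the Fourier coefficients), uniformly on compacta of \eqref{Boundness domain}.

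\textbf{Step 5: Penrose transform.} Finally, each $\h^{(k)}$ is $\bar\partial$-closed and of weight $2k-2$, so it defines a class in $H^{0,1}(\mathbb{PT},\mathcal{O}(2k-2))$. The linear Penrose transform quoted earlier then yields a linearized helicity $k$ field on Minkowski space.
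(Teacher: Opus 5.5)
There is a genuine gap, and it is exactly the point you flag as the main obstacle in Step 4. You never establish that the pieces $h_n$ are homogeneous under the real dilation. The gauge-fixing you propose for this is left unargued and is unnecessary. It also would not prove the statement as posed, which decomposes the given $\h$, not some gauge-equivalent representative.

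The source of the trouble is your assertion in Step 2 that $f$ is not holomorphic in $t$. It is holomorphic. Since $t\mapsto(t\mu,t\lambda)$ is holomorphic, the chain rule gives, with $\bar E:=\bar\mu^{\alpha}\partial_{\bar\mu^{\alpha}}+\bar\lambda_{\dot\alpha}\partial_{\bar\lambda_{\dot\alpha}}$,
\begin{equation*}
\frac{\partial}{\partial\bar t}\Bigl[h(t\mu,t\lambda)\Bigr]=\frac{1}{\bar t}\,\bigl(\bar E h\bigr)(t\mu,t\lambda)=-\frac{2}{\bar t}\,h(t\mu,t\lambda),
\end{equation*}
using precisely the two conditions in \eqref{Twistor space deformations: nice h properties}. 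Hence $\partial_{\bar t}f=2\bar t\,h-2\bar t\,h=0$. The non-holomorphy of $h$ in $\lambda$ sits entirely in its $\bar\lambda$-dependence, and the factor $\bar t^{2}$ cancels it exactly.

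With this, your argument closes immediately. On the line through $(\mu,\lambda)$, the domain \eqref{Boundness domain} is the punctured disc $0<|t|<C/|\lambda|$, which contains $t=1$. So $f(t)=\sum_{n\in\mathbb{Z}}a_n(\mu,\lambda)\,t^{n}$ there, and $h=f(1)=\sum_n a_n$. The coefficients $a_n=\frac{1}{2\pi i}\oint_{|t|=\rho}f(t)\,t^{-n-1}{\rm d}t$ are independent of $\rho$. From $f_{(s\mu,s\lambda)}(t)=\bar s^{-2}f_{(\mu,\lambda)}(st)$ one gets $a_n(s\mu,s\lambda)=s^{n}\bar s^{-2}a_n(\mu,\lambda)$ for all $s\in\mathbb{C}^*$, which is the radial homogeneity you were missing. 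Each $a_n$ is still holomorphic in $\mu$ and of $\bar\lambda$-degree $-2$. Therefore $a_n\,\bar\lambda^{\dot\alpha}{\rm d}\bar\lambda_{\dot\alpha}$ is $\bar\partial$-closed of homogeneity $(n,0)$, i.e.\ it is $\h^{(k)}$ with $2k-2=n$.

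In your own Fourier language the same point is pure linear algebra. If $(E-\bar E)h_n=n\,h_n$ and $\bar E h_n=-2h_n$, then $Eh_n=(n-2)h_n$ and $(E+\bar E)h_n=(n-4)h_n$, so your proviso holds automatically.

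This is exactly the mechanism in the paper's Lemma \ref{lemma decomposition on C^2}, where the $\bar\lambda$-homogeneity forces the radial coefficients $\tilde f_{(s)}(r_0)$ to be powers of $\langle\lambda\hat\lambda\rangle$. The paper reaches it by Taylor-expanding in $\mu$ and applying Peter--Weyl on $S^3$. That route also yields the finer expansion \eqref{Proposition: homogeneity decomposition equation} used later. The Laurent argument gives the homogeneity decomposition itself more directly, in the spirit of the paper's footnote.

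Two smaller slips. First, your integral in Step 3 rotates only the holomorphic arguments, so it projects onto $E$-weight $n$ rather than $(E-\bar E)$-charge $n$. Second, $E$ annihilates $\bar\lambda^{\dot\alpha}{\rm d}\bar\lambda_{\dot\alpha}$, so the correct condition is $\mathcal{L}_E\h^{(k)}=(2k-2)\h^{(k)}$, not $2k\,\h^{(k)}$.
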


Although this can be proved quite directly,\footnote{Here we can trivialize the line bundle $\mathbb{C}^4\rightarrow \mathbb{C}\textrm{P}^3$ over the complements of the coordinate planes so that the bundle given by $Z\rightarrow e^{i\theta}Z$ becomes trivial on each open set and we can simply perform Fourier expansion on each such set, but then each $h^{(k)}$ will patch with the appropriate weight on each overlap making it a section of $\mathcal{O}(n)$ as desired. } as a result of the proof, we will see that $h^{(k)}(\mu , \lambda)$ has the following expansion that will be needed later:
	\begin{align}\label{Proposition: homogeneity decomposition equation}
		h^{(k)}(\mu^{\dot\alpha}, \lambda_\alpha)\; 
		&= \sum_{n\in \mathbb{N}}\sum_{j\in \frac{1}{2}\mathbb{N}} h_{(l)\;\dot\beta(n)}^{\;\alpha(2j)}\mu^{\dot\beta(n)}
		\frac{\lambda_{\alpha(j+l)}\hat\lambda_{\alpha(j-l)}}{\langle\lambda\hat\lambda^{}\rangle^{j-l+2}},&\text{s.t. } &\begin{cases}
		   l = k -\frac{n}{2}\\
          j  \geq |l|
		\end{cases}\;,
	\end{align}
where $\langle \lambda \hat\lambda \rangle = \lambda_{\alpha}\hat{\lambda}^{\alpha} = \lambda_{\alpha}h^{\alpha \dot \alpha }\bar{\lambda}_{\alpha}$ 
is a choice\footnote{As a result, the decomposition \eqref{Proposition: homogeneity decomposition equation}, which depends on this choice, breaks $SL(2,\mathbb{C})$ 
invariance, note however that the resulting homogeneity decomposition \eqref{Decomposition of h in homogenoeity} is $SL(2,\mathbb{C})$ invariant and thus independent of this choice.} of positive definite inner product on $\mathbb{C}^2$. \\

\begin{proof}
The proof start by making use of 
\eqref{Twistor space deformations: nice h properties} to write, 
in a neighbourhood of $\mu=0$ and therefore adapting the constant 
$C$ in \eqref{Boundness domain},
	\begin{align}
		h(\mu^{\dot\alpha}, \lambda_\alpha)\; 
		= \sum_{n \in \mathbb{N}} h_{\dot\beta(n)}(\lambda,\bar\lambda)\,\mu^{\dot\beta(n)}\;,
	\end{align}
where $h_{\dot\beta(n)}(\lambda,\bar\lambda)$ is of homogeneous degree $-2$ 
in $\bar\lambda_{\dot\alpha}\,$. 
The decomposition \eqref{Proposition: homogeneity decomposition 
equation} is then obtained by making use of 
Lemma \ref{lemma decomposition on C^2} below.
\end{proof}

\begin{Lemma}\label{lemma decomposition on C^2}
	Every  smooth function $f(\lambda_\alpha,\bar\lambda_{\dot\alpha})$ on $\IC^2_0$ which is of 
	homogeneity $\bar{h}$ in $\bar\lambda_{\dot\alpha}$ can 
	be written as
	\begin{align}\label{Lemma: decomposition in spin weight (equation)}
		f(\lambda_\alpha,\bar\lambda_{\dot\alpha}) =
		\sum_{s\in \frac{1}{2}\mathbb{Z}}\; \left(\sum_{j\geq \vert s\vert}f^{\alpha(2j)}_{(s)}\frac{\lambda_{\alpha(j-s)}\hat\lambda_{\alpha(j+s)}}{\langle\lambda\hat\lambda\rangle^{j+s-\bar{h}}} \right)
	\end{align}
	where $f^{\alpha(2j)}_{(s)}$ are complex numbers. In particular each term 
    in parenthesis define a section of $\mathcal{O}(2s+\bar{h}, \bar h)$.
\end{Lemma}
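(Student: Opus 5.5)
The plan is to reduce the problem to harmonic analysis on the unit sphere $S^3\subset\IC^2$, using the chosen positive definite inner product, and then to apply the Peter--Weyl decomposition of $L^2(S^3)$ under $SU(2)\times U(1)$. First I will use homogeneity in $\bar\lambda$ to strip off the radial dependence. The function $\langle\lambda\hat\lambda\rangle$ is real, positive and of bidegree $(1,1)$. Hence $f$ is determined by its restriction to $S^3=\{\langle\lambda\hat\lambda\rangle=1\}$, together with the remaining holomorphic degree $h_{\rm tot}$ along rays $\lambda\mapsto r\lambda$, $r>0$.

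A caveat is needed here: $f$ is only assumed homogeneous in $\bar\lambda$, not in $\lambda$. I would first note that this forces smooth dependence on $|\lambda|$ in a way compatible with the stated form. Concretely, I would expand the dependence on $\langle\lambda\hat\lambda\rangle$ as well, or, more simply, read the lemma as being applied (as in Proposition \ref{Prop decomposition of deformations}) to functions whose full scaling is controlled, and treat the radial profile separately. The overall phase $\lambda\mapsto e^{i\theta}\lambda$ acts on $f$, and Fourier-expanding in $\theta$ gives the decomposition into charges. Pairing this with the fixed $\bar h$ fixes the holomorphic degree $2s+\bar h$, which labels the sum over $s\in\frac12\mathbb{Z}$.

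Next, on $S^3\simeq SU(2)$, each charge-$2s$ component (a section of the spin-weighted line bundle over $S^2=S^3/U(1)$) is expanded in spin-weighted spherical harmonics. For $s$ and $j$ with $j\ge|s|$, $j-s\in\mathbb{Z}$, these are exactly the functions
\begin{equation*}
f^{\alpha(2j)}\lambda_{\alpha(j-s)}\hat\lambda_{\alpha(j+s)},
\end{equation*}
restricted to $S^3$. The reason is that $\lambda_\alpha$ and $\hat\lambda_\alpha$ together span the spin-$\frac12$ representation. Symmetrised products with $j-s$ factors of $\lambda$ and $j+s$ factors of $\hat\lambda$ then give the spin-$j$ representation at charge $-2s$ (up to sign convention), and Peter--Weyl says these exhaust $L^2$ of each charge sector. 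Dividing by $\langle\lambda\hat\lambda\rangle^{j+s-\bar h}$ extends each term off the sphere with the correct bidegree: the $\bar\lambda$-degree is $(j+s)-(j+s-\bar h)=\bar h$ and the $\lambda$-degree is $(j-s)+(j+s)-(j+s-\bar h)=j-s+\bar h$. From this I read off the $\mathcal{O}(\cdot,\bar h)$ statement, adjusting the sign convention of $s$ if needed.

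The main obstacle is convergence. I need the double series to converge back to $f$, not just in $L^2$. For this I will use smoothness: on $S^3$, applying the Casimir (the Laplacian) repeatedly shows that the coefficients $\|f_{(s)}^{(j)}\|$ decay faster than any power of $j$. Since the sup-norm of a spin-$j$ harmonic grows at most polynomially in $j$, the series converges uniformly with all derivatives on compact subsets of $\IC^2_0$.
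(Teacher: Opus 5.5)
Your route is essentially the paper's: Peter--Weyl on $S^3$, spin-weighted harmonics written homogeneously as symmetrised products of $\lambda$ and $\hat\lambda$, and the radial profile fixed by the $\bar\lambda$-homogeneity. The paper expands sphere by sphere with coefficients depending on $r_0=\langle\lambda\hat\lambda\rangle$ and then shows these must be pure powers of $r_0$; you Fourier-expand in the phase first. Some steps need repair, though.

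First, your degree count is wrong. $\hat\lambda$ is linear in $\bar\lambda$, so it has bidegree $(0,1)$ and contributes nothing to the $\lambda$-degree. The term $\lambda_{\alpha(j-s)}\hat\lambda_{\alpha(j+s)}\langle\lambda\hat\lambda\rangle^{-(j+s-\bar h)}$ therefore has bidegree $(\bar h-2s,\bar h)$, not $(j-s+\bar h,\bar h)$. This is not cosmetic. A $j$-dependent degree would mean the inner sum over $j$ is not a section of a single line bundle, and no sign convention repairs that. The correct count is $j$-independent and agrees with the charge $-2s$ you assigned. The stated $\mathcal{O}(2s+\bar h,\bar h)$ then follows after $s\mapsto -s$; the paper's own proof uses $\lambda_{\alpha(j+s)}\hat\lambda_{\alpha(j-s)}$, the opposite labelling to the statement.

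Your hedge about restricting to functions whose full scaling is controlled is also unnecessary. $\bar{E}=\bar\lambda_{\dot\alpha}\partial/\partial\bar\lambda_{\dot\alpha}$ commutes with $\lambda\mapsto e^{i\theta}\lambda$, so a charge-$q$ mode satisfies $\bar{E}f_q=\bar h f_q$ and $(E-\bar{E})f_q=q f_q$. Hence $E f_q=(q+\bar h)f_q$, so $f_q$ is bihomogeneous with no assumption beyond the lemma's.

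Second, your convergence argument only gives uniform convergence on the unit sphere. Under $\lambda\mapsto r\lambda$ the $(j,s)$ term is multiplied by $r^{2\bar h-2s}$. Since $|s|$ ranges up to $j$, on an annulus $r_1\le|\lambda|\le r_2$ this factor can be of size $\rho^{2j}$ with $\rho=\max(r_2,1/r_1)$, and super-polynomial decay in $j$ does not beat that.

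One fix is to apply your Casimir estimate to $\omega\mapsto f(r\omega)$ on each sphere. Its harmonic coefficients are the unit-sphere ones rescaled by $r^{2\bar h-2s}$, and its $C^k$-norms are bounded uniformly for $r\in[r_1,r_2]$. This is exactly the sphere-by-sphere expansion the paper uses. Alternatively, $F=f\langle\lambda\hat\lambda\rangle^{-\bar h}$ satisfies $\bar{E}F=0$, so $t\mapsto F(t\lambda)$ is holomorphic on $\mathbb{C}^*$. Cauchy estimates on circles of every radius then make the charge components decay faster than any exponential. The paper itself only asserts $L^2$ convergence, so with this fix your treatment is the more complete one.
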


\begin{proof}
	On $S^3$ with complex coordinates $(z_1, z_2)$ the $SU(2)$ invariant inner 
    product is such  that 
	$\vert z_1\vert^2 + \vert z_2\vert^2= r_0$.
	We know from Peter Weyl theorem that $L^2$ functions on $S^3$ can be decomposed 
    in irreducible representations labelled by a non negative half integer $j$ with 
    coefficient the spherical harmonics functions of degree $j$ : 
	\begin{align}\label{eq peter weyl}
		f(r_0,\theta_1, \theta_2, \theta_3) = \sum_{j \in \frac{1}{2}\mathbb{N}}\sum_{m=-j}^j\sum_{s=-j}^j f^{j,m,n}(r_0)\;  {}_s Y_{j,m}(\theta_1, \theta_2, \theta_3)\;,
	\end{align} 
	where $(\theta_i)$ are angular spherical coordinates on $S^3\,$. 
    Now the 3d spherical harmonics ${}_s Y_{j,m}(\theta_1, \theta_2, \theta_3)$ 
    are related to the 2d \emph{spin-weighted} spherical harmonics 
    ${}_sY_{j,m}(\theta_1, \theta_2)$ as ${}_s Y_{j,m}(\theta_1, \theta_2, \theta_3) =  {}_sY_{j,m}(\theta_1, \theta_2)e^{is\theta_3}\,$. 
    The decomposition \eqref{eq peter weyl} can thus equivalently be seen as 
    a decomposition in spin-weighted functions on $S^2\,$,
\begin{align}
	f(r_0,\theta_1, \theta_2, \theta_3) = \sum_{s \in \frac{1}{2}\mathbb{Z}}f^{(s)}(r_0,\theta_1, \theta_2, \theta_3)\;,
\end{align}
where $f^{(s)} = F^{(s)}(r_0,\theta_1, \theta_2) e^{is\theta_3}$ is the 
homogeneous realizations of a spin-$s$ weighted functions:
\begin{align}
	F^{(s)}(r_0,\theta_1, \theta_2) = \sum_{j \geq \vert s \vert} 
    \sum_{m=-j}^{j} f^{j,m,s}(r_0)\; {}_sY_{j,m}(\theta_1, \theta_2)\;.
\end{align}
Making use of the standard coordinates 
$(\lambda_\alpha, \bar\lambda_{\dot\alpha})$ 
on $\IC^2_0$ one obtains
\begin{align}
	f^{(s)}(\lambda_\alpha, \bar\lambda_{\dot\alpha}) = \sum_{j \geq \vert s \vert}  f^{\alpha(2j)}_{(s)}(r_0)\; \frac{\lambda_{\alpha(j+s)}\hat\lambda_{\alpha(j-s)}}{\langle\lambda\hat\lambda\rangle^{j}}\;,
\end{align}
where we used \cite{eastwood_tod_1982} to express the spherical harmonics in 
homogeneous notations. Putting everything together, and since 
$r_0=\langle \lambda \hat{\lambda}\rangle$, one obtains a 
decomposition of functions on $\IC^2_0 = S^3 \times \IR_{>0}$ as
\begin{align}\label{eq decomposition with radius}
	f(\lambda_\alpha, \bar\lambda_{\dot\alpha}) = \sum_{s \in \frac{1}{2}\mathbb{Z}}\; \sum_{j \geq \vert s \vert}  \tilde{f}_{(s)}^{\alpha(2j)}(r_0)\; \lambda_{\alpha(j+s)}\hat\lambda_{\alpha(j-s)}\;.
\end{align}

Now, by hypothesis, $\tilde{f}(\lambda, \bar\lambda)$ is homogeneous degree  $\bar{h}$ in $\bar\lambda_{\dot\alpha}$, therefore $\tilde{f}_{(s)}^{\alpha(2j)}(r_0 = \lambda\hat\lambda)$ is of homogeneity $(\bar{h}-j+s)\,$. As a result 
 $\tilde{f}_{(s)}^{\alpha(2j)}(r_0) = f_{(s)}^{\alpha(2j)} \langle\lambda\hat\lambda \rangle^{\bar{h}-j+s}$ and the decomposition \eqref{eq decomposition with radius} can then be rewritten as  \eqref{Lemma: decomposition in spin weight (equation)}.
\end{proof}

At this stage, one sees from \eqref{Decomposition of h in homogenoeity} 
that the deformations are sourced by both positive and negative helicity 
higher-spin fields. Deformations which are only generated by positive 
helicity fields can however be characterized by their good behaviour 
at the origin $(\mu^{\dot\alpha}, \lambda_{\alpha}) =(0,0)$ of the twistor space : 
\begin{Proposition}\label{prop h starts at 2}
The deformation $\bar{\nabla} = \bar{\partial} + \{\h, - \}$ satisfying 
\eqref{Twistor space deformations: deformation we consider} 
is bounded \footnote{Singularities necessarily arise along 
$\left(\mu\neq 0,\lambda=0\right)$ in order to have a non trivial deformation. 
One therefore cannot expect the deformation to be bounded in a neighbourhood 
of the origin. See Appendix \ref{Apdx: Euclidean fields} for a discussion in 
the context of Euclidean signature.} on the twistor space
\eqref{Boundness domain} if and only if the decomposition of 
Proposition \ref{Prop decomposition of deformations} 
is such that $k \geq 2\,$, i.e., if they are only generated by higher 
spin fields of positive helicity $s\geq2\,$.
\end{Proposition}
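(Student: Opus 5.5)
The plan is to translate boundedness of the operator $\{\h,-\}$ on \eqref{Boundness domain} into a condition on the scaling behaviour of each homogeneous piece $\h^{(k)}$ of Proposition \ref{Prop decomposition of deformations}. I will use the weighted rescaling that preserves the domain, and separate the pieces by Fourier analysis in the phase of the rescaling.

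\textbf{Step 1: the object to bound.} Concretely, $\bar\nabla-\bar\partial=\{\h,-\}$ is the vector-valued $(0,1)$-form
\begin{equation*}
\epsilon^{\dot\alpha\dot\beta}\,\frac{\partial h}{\partial\mu^{\dot\alpha}}\,\bar\lambda^{\dot\gamma}{\rm d}\bar\lambda_{\dot\gamma}\otimes\frac{\partial}{\partial\mu^{\dot\beta}} .
\end{equation*}
Boundedness means that its components $\bar\lambda^{\dot\gamma}\partial_{\mu^{\dot\alpha}}h$ in the flat frame are bounded on $\mathbb{T}$. The $n=0$ part of the expansion \eqref{Proposition: homogeneity decomposition equation} drops out of $\partial_\mu h$, so it is pure gauge. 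I will discard it from the start and also drop any $\h^{(k)}$ that consists only of $n=0$ terms. After this reduction the statement concerns only the $k$ for which $\partial_\mu h^{(k)}\neq0$.

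\textbf{Step 2: homogeneity count.} The domain \eqref{Boundness domain} is invariant under $(\mu,\lambda)\mapsto(t\mu,t\lambda)$ for $t\in\mathbb{C}^*$ with $|t|\le 1$. By Proposition \ref{Prop decomposition of deformations}, $h^{(k)}$ has bi-homogeneity $(2k-2,-2)$. Hence each component $\bar\lambda\,\partial_\mu h^{(k)}$ has bi-homogeneity $(2k-3,-1)$ and transforms as
\begin{equation*}
g_k(t\mu,t\lambda)=t^{2k-3}\bar t^{-1}g_k(\mu,\lambda)=r^{2k-4}e^{i(2k-2)\theta}g_k(\mu,\lambda),\qquad t=re^{i\theta}.
\end{equation*}
Since $|\mu|<R|\lambda|$, every point of $\mathbb{T}$ is of the form $t\cdot p$ with $p$ in the compact shell $|\lambda|=C'$, $|\mu|\le R C'$, where $g_k$ is smooth and bounded. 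Therefore a single term is bounded on $\mathbb{T}$ iff $2k-4\ge0$, i.e.\ $k\ge2$. The converse direction of the proposition then follows once convergence is controlled. For $k\ge2$ each term is bounded by $\sup_{\rm shell}|g_k|$, and by shrinking $C$ as in the proof of Proposition \ref{Prop decomposition of deformations} the sum converges uniformly. The leftover factor $r^{2k-4}\le1$ only improves this.

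\textbf{Step 3: from the sum to each piece (main obstacle).} The main difficulty is the forward direction: a bounded sum must not hide individually unbounded pieces through cancellation. I will handle this by fixing $p$ on the shell and considering
\begin{equation*}
G_p(r,\theta)=\sum_k r^{2k-4}e^{i(2k-2)\theta}g_k(p).
\end{equation*}
The phase frequency $2k-2$ determines $k\in\frac12\mathbb{Z}$ uniquely. Thus the Fourier coefficient in $\theta$ recovers $r^{2k-4}g_k(p)$, and it is bounded by $\sup_\theta|G_p(r,\theta)|$. If $\{\h,-\}$ is bounded, each $r^{2k-4}g_k(p)$ is therefore bounded as $r\to0$. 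This forces $g_k(p)=0$ whenever $k<2$, and since $p$ is arbitrary, $\partial_\mu h^{(k)}\equiv0$ for $k<2$. The point requiring care is justifying termwise Fourier projection, i.e.\ that the homogeneity series converges in $L^2$ of each circle orbit. This should follow from the Peter--Weyl/$L^2$ convergence used in Lemma \ref{lemma decomposition on C^2}, combined with the power-series convergence in $\mu$. Finally, with $l=k-\frac n2$, the label $k$ is identified with the helicity through the Penrose transform quoted in Proposition \ref{Prop decomposition of deformations}, which gives the interpretation $s=k\ge2$.
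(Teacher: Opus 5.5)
Your proposal is correct and follows the paper's own argument. The deformation term $\bar\lambda^{\dot\alpha}{\rm d}\bar\lambda_{\dot\alpha}\,\epsilon^{\dot\beta\dot\gamma}\partial_{\mu^{\dot\beta}}h^{(k)}\partial_{\mu^{\dot\gamma}}$ scales in modulus like $|t|^{2k-4}$ along $(t\mu,t\lambda)$ as $t\to 0$, so it is bounded iff $k\geq 2$; the paper's ``$k-2$'' is the same count written in its later convention $k=2s-2$. For the converse, the paper cites Appendix~\ref{Apdx: Euclidean fields} for boundedness of each field on the domain, while you assume it on a compact shell. Two of your additions sharpen steps the paper takes for granted: the Fourier projection in the phase of $t$ excludes cancellations between different $k$, and only $\partial_\mu h^{(k)}$ is constrained, not the $\mu$-independent part.
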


\begin{proof}
    From \eqref{Decomposition of h in homogenoeity} and 
    $\bar{\nabla} = \bar{\partial} + \{\h, - \}\,$, one has
    \begin{equation}
        \bar{\nabla} = {\rm d}\bar{\lambda}_{\dot\alpha}
        \frac{\partial }{\partial \bar{\lambda}_{\dot\alpha} } 
        + {\rm d}\bar{\mu}^{\alpha}\frac{\partial }{\partial \bar{\mu}^{\alpha} }
        +  \sum_{k\in \frac{1}{2}\mathbb{Z}} \left( 
        \bar{\lambda}^{\dot\alpha}d\bar{\lambda}_{\dot\alpha} \,
        \epsilon^{\dot\beta\dot\gamma}\,
        \frac{\partial h^{(k)}}{\partial \mu^{\dot \beta} }  
        \frac{\partial }{\partial \mu^{\dot \gamma} }\right)\;,
    \end{equation}
where each term in the sum has homogeneity $k-2$ in 
$(\mu^{\dot\alpha}, \lambda_{\alpha})$ and zero in 
$(\bar{\mu}^{\alpha}, \bar{\lambda}_{\dot\alpha})\,$. 
Therefore $\bar{\nabla}\big|_{(t\mu_0,t\lambda_0)}$ is bounded as $t$ 
goes to zero if and only if $h^{(k)}=0$ for any $k<2\,$. As a result the 
deformed complex structure can only be bounded on \eqref{Boundness domain} 
if $k\geq2\,$. Finally, the requirement that the fields are bounded on 
\eqref{Boundness domain} is not a constraint when the spin is higher or equal 
to two: one way to see this is to consider the Euclidean realization of the 
fields as in Appendix \ref{Apdx: Euclidean fields}.
\end{proof}

The deformations satisfying the requirement 
\ref{item holom projection}, \ref{item holom poisson} 
and \ref{item euler not holom} naturally incorporate both negative and 
positive helicity higher-spin fields. As such, and in view of the results of 
\cite{Mason:2025pbz}, this is a very promising road to incorporate the 
interactions of chiral HiSGra. We shall come back on this idea by the 
end of the present article. In the present work we will rather try to 
connect with higher-spin self-dual gravity, whose field equations 
only involve positive higher-spin fields. 
As a result, and in order to prevent the interaction of 
negative helicity fields, 
in what follows we will require as a final constraint on the deformations:

\begin{enumerate}[(i),resume]
    \item\label{item bounded deformation} the deformation $\bar{\nabla}$ of the complex structure must remain bounded on the twistor space \eqref{Boundness domain}. 
\end{enumerate}

As we shall now see, this assumption greatly simplifies the analysis of the moduli space of holomorphic planes.

\subsection{Holomorphic planes in twistor space}

\subsubsection{The higher-spin space $\HST$}

By analogy with the case of self-dual gravity with zero cosmological constant, we will define the curved higher spin spacetime $\HST$ as the space of holomorphic sections $\Gamma_{1,0}\left( \T \right)$ of
\begin{equation}
    \T \to \IC^2_0\;.
\end{equation}
Such sections are given by
\begin{equation}\label{Sections of T}
   X: \left| \lambda_{\alpha} \mapsto \begin{pmatrix} F^{\dot \alpha}(\lambda, \bar{\lambda}) \\ \lambda_{\alpha} \end{pmatrix} \right.
\end{equation}
where $F$ is not required to have any homogeneity property but satisfies
\begin{equation}\label{eq holomorphicity of curves}
   \frac{\partial}{\partial \bar{\lambda}^{\dot{\beta}}} F^{\dot \alpha} 
   -  \bar{\lambda}_{\dot{\beta}}\{ h , \mu^{\dot \alpha} \} =0\;.
\end{equation}
In particular we expect the higher spin spacetime $\HST$ to be infinite 
dimensional (see the flat model below). Now the situation is not as 
straightforward as might first appear as projectively,
\begin{equation}
    \begin{alignedat}{2}
    H^{(0,1)}\left( \mathbb{C}P^1 , \mathcal{O}(2s-2) \right)&\neq0& \qquad\text{for}&\quad s\leq 0\;,\\
    H^{(0,1)}\left( \mathbb{C}P^1 , \mathcal{O}(2s-2) \right)&=0& \qquad\text{for}&\quad s>0\;,
\end{alignedat}
\end{equation}
and anticipating the detailed analysis of the solutions which 
will be provided below, one should not expect that there is any solutions 
to \eqref{eq holomorphicity of curves} if negative helicity fields are sourcing 
the deformation of the complex structure\footnote{As a result, whenever negative 
helicity fields are sourcing the deformations, one is forced to to allow for 
singularity in the embedding \eqref{Sections of T}, we will come back on 
this in the final section of this article.}. 
For this reason, we will from now on suppose that our deformations satisfy the 
requirement \ref{item bounded deformation}, and are only sourced by fields 
of helicity greater than or equal to $2$ (see Proposition \ref{prop h starts at 2}).

The deformation $\h$ is not expected to be regular on the plane 
$\lambda_\alpha = (0,0)$, however the requirement \ref{item bounded deformation} 
means that the origin $(\mu^{\dot\alpha}, \lambda_{\alpha})=(0,0)$ is
well behaved as long as we stay in the 
domain \eqref{Boundness domain}. 
As a result, it is natural to define the higher spins spacetime $\HST$ 
as the space of sections \eqref{Sections of T} passing through the twistor 
space \eqref{Boundness domain}, i.e. such that, 
near $\lambda=0$,
\begin{equation}\label{holomorphic curves: 0 condition}
    \left|F^{\dot\alpha}(\lambda,\bar{\lambda}) \right| < R|\lambda^{\alpha}|\;.
\end{equation}
Such sections are depicted in Figure \ref{fig: section of T}.\\
\begin{figure}
    \centering
    \includegraphics[width=3in]{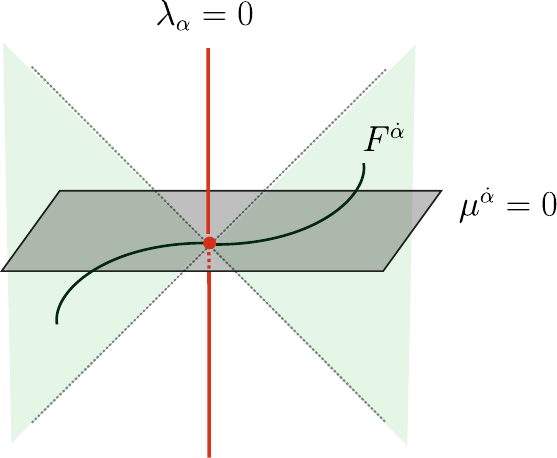}
    \caption{The curved twistor space might be expected to be singular in some neighbouhood $\lambda_\alpha = (0,0)$, depicted in red. The holomorphic planes are required to stay within the green region of the twistor space which will be better behaved.}
    \label{fig: section of T}
\end{figure}

From equation \eqref{eq holomorphicity of curves} one can deduce that 
the curves are of homogeneity zero in $\bar\lambda_{\dot\alpha}\,$. 
This fact, together with
the condition \eqref{holomorphic curves: 0 condition}, 
implies the following.
\begin{Proposition}\label{prop F starts at 1}
    If $F$ is a smooth function on $\IC^2_0$ of homogeneity zero in $\bar\lambda_{\dot\alpha}$  such that, for some arbitrary constant $K$ and in some neighbourhood of the origin, $\left|F^{\dot\alpha}(\lambda,\bar{\lambda}) \right| < K |\lambda_{\alpha}|$, then it can be expanded as 
    \begin{align}
        F(\lambda_\alpha,\bar\lambda_{\dot\alpha}) = 
        \sum_{\overset{l \geqslant 1}{l \in \mathbb{Z}}} F_{(l)}(\lambda_\alpha,\bar\lambda_{\dot\alpha})\;,
    \end{align}
    where $F_{(l)}(\lambda_\alpha,\bar\lambda_{\dot\alpha})$ is a section of $\mathcal{O}(l)\,$.
\end{Proposition}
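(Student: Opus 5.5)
The plan is to apply Lemma \ref{lemma decomposition on C^2} to each component $F^{\dot\alpha}$ with $\bar h = 0$, and then use the bound near the origin to discard the unwanted pieces. Specialised to $\bar h=0$, the Lemma gives
\begin{equation*}
F(\lambda,\bar\lambda)=\sum_{s\in\frac12\mathbb{Z}} F^{(s)}, \qquad F^{(s)}=\sum_{j\geq |s|} f^{\alpha(2j)}_{(s)}\,\frac{\lambda_{\alpha(j-s)}\hat\lambda_{\alpha(j+s)}}{\langle\lambda\hat\lambda\rangle^{j+s}}.
\end{equation*}
Each $F^{(s)}$ is a section of $\mathcal{O}(2s,0)$. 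Under the real rescaling $\lambda\mapsto t\lambda$ with $t>0$ it scales exactly as $t^{2s}$, since the numerator has total degree $2j$ and the denominator degree $2(j+s)$, giving net degree $-2s$ in the convention of the Lemma. Only the sign convention for $s$ differs, so after relabelling $l=\pm 2s$ the pieces are indexed by their holomorphic homogeneity $l\in\mathbb{Z}$ and scale as $t^{l}$. Half-integer $s$ gives integer $l$, which explains why the sum in the statement runs over $l\in\mathbb{Z}$.

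The key step is to show that the bound $|F|<K|\lambda|$ forces $F_{(l)}=0$ for $l\leq 0$. I would isolate each homogeneous piece by a projection that commutes with radial rescaling. Concretely, I would integrate against the $U(1)$ phase $\lambda\mapsto e^{i\theta}\lambda$ with weight $e^{-i l\theta}$, and then integrate against the $SU(2)$ characters on the sphere $S^3_r=\{\langle\lambda\hat\lambda\rangle=r^2\}$.

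The projection onto $F_{(l)}$ is an averaging operator on $S^3_r$. It is bounded in sup norm by the sup norm of $F$ on $S^3_r$, up to a constant independent of $r$. For fixed $j$ the projection is an integral against a bounded kernel. For the full $l$-component I would restrict to the phase-averaged function $\frac{1}{2\pi}\int e^{-il\theta}F(e^{i\theta}\lambda)\,d\theta$, which is clearly bounded by $\sup_{S^3_r}|F|$.

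Hence $\sup_{S^3_r}|F_{(l)}|\leq K r$ for small $r$. But $F_{(l)}(r\lambda_0)=r^{l}F_{(l)}(\lambda_0)$, so $r^{l}\sup_{S^3_1}|F_{(l)}|\leq K r$ for all small $r$. For $l\leq 0$ this forces $F_{(l)}\equiv 0$. The neighbourhood condition suffices, because homogeneity propagates the vanishing to all of $\mathbb{C}^2_0$.

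The main obstacle is a subtlety in this projection. The $U(1)$ phase average picks out $l$ from the combined weight, that is $\lambda$-degree minus $\bar\lambda$-degree. Since $\bar h=0$ is fixed, this coincides with the holomorphic homogeneity $l$, so the argument goes through. One must also check that the Lemma's series converges well enough, say in $L^2(S^3)$ or pointwise for smooth $F$, that the projected pieces are genuinely the terms $F_{(l)}$. Smoothness of $F$ on $S^3$ ensures rapid decay of the coefficients, so this is routine.
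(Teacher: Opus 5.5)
Your proof is correct and follows the paper's route: you decompose $F$ using Lemma~\ref{lemma decomposition on C^2}, then use the scaling $\lambda\mapsto t\lambda$ together with the bound $|F|<K|\lambda|$ near the origin to eliminate the components with $l\leq 0$. What you add is the $U(1)$ phase average, which isolates $F_{(l)}$ with a sup-norm bound on each sphere and so makes rigorous the paper's bare assertion that the terms with $l<1$ admit ``no possible cancellation''.
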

\begin{proof}
    The expansion on $l\in \mathbb{Z}$ comes from Lemma \ref{lemma decomposition on C^2}. It remains to show that the sum is only on $l\geq 1\,$.
    On the one hand
    \begin{align}
        \lim_{t\rightarrow 0} \left|F(t\lambda_\alpha, \bar t\bar\lambda_{\dot\alpha})\right| < K\lim_{t\rightarrow 0} \left(t |\lambda_{\alpha}|\right) =0\;,
    \end{align}
    on the other hand, for each homogeneity $l$ term, 
    \begin{align}
        \lim_{t\rightarrow 0} F_{(l)}(t\lambda_\alpha, \bar t\bar\lambda_{\dot\alpha}) =\lim_{t\rightarrow 0} t^l F_{(l)}(\lambda_\alpha, \bar \lambda_{\dot\alpha})\;, 
    \end{align}
    which is zero if and only if $l\geq 1\,$.
    Therefore, in the sum, introducing the terms with 
    $l <1$ will lead to nonzero terms, possibly divergent, with no possible cancellation because of the form of $F_{(l)}\,$.
\end{proof}

Before further investigating the higher-spin space $\HST\,$, 
let us consider the flat model.

\paragraph{The flat model}

For the flat twistor space, holomorphic sections \eqref{Sections of T} 
satisfying \eqref{holomorphic curves: 0 condition} would be given 
by\footnote{The function $F(\lambda,\bar{\lambda})$ is here only assumed to 
holomorphic on $\mathbb{C}^2_0$ but by Hartog's theorem it must in fact uniquely 
extend to a holomorphic function on $\mathbb{C}^2$, resulting in the existence 
of the Taylor expansion.}
\begin{equation}\label{Flat model: holomorphic planes}
    F^{\dot \alpha}(\lambda, \bar{\lambda})  
    =  \sum_{s\geq 1} X^{\dot \alpha \alpha(s)}\lambda_{\alpha(s)}
\end{equation}
where $X = \left( X^{\dot \alpha \alpha}, X^{\dot \alpha \alpha(2)}, ...  \right)$ 
are coordinates on the higher-spin spacetime $\HST$. This infinite dimensional space 
is an affine spacetime isomorphic to
\begin{equation}
    \HST \simeq \frac{SL(2,\mathbb{C}) \ltimes 
    \left( \mathcal{O}_0(\mathbb{C}^2) \oplus  \mathcal{O}_0(\mathbb{C}^2) 
    \right)}{SL(2,\mathbb{C}) }\;,
\end{equation}
where we denote by $\mathcal{O}_0(\mathbb{C}^2)$ the space of holomorphic functions 
on $\mathbb{C}^2$ vanishing at the origin. The condition 
\eqref{holomorphic curves: 0 condition} then amounts to restricting 
$X^{\dot \alpha \alpha}$ to a finite ball around the origin.
Note that there is here a canonical embedding of the Poincaré group inside 
this group corresponding to the fact that usual spacetime, parametrized by 
$ X^{\dot \alpha \alpha}$, is canonically embedded inside the higher spacetime 
$\HST$ by restricting to homogeneous degree 1 sections.

\subsubsection{Holomorphic planes in twistor space}

Let us now investigate further the properties of the higher-spin space $\HST$, 
defined as the moduli space of holomorphic embeddings \eqref{Sections of T} 
satisfying \eqref{holomorphic curves: 0 condition}, under the assumption 
\ref{item bounded deformation} on the complex structure.\\

To do so, we start from equation \eqref{eq holomorphicity of curves}, 
which can be rewritten as
\begin{align}\label{eq holomorphicity of curves 2}
	\bar{\partial} F^{\dot\beta} = \frac{\partial \h}{\partial \mu_{\dot\beta}}\Big|_{\mu^{\dot\alpha} = F^{\dot\alpha}}\;, 
\end{align}
and decompose it, on the domain \eqref{Boundness domain}, 
in homogeneity degrees of $\lambda_\alpha$ by making use of Propositions 
\ref{Prop decomposition of deformations}, \ref{prop h starts at 2}, 
\ref{prop F starts at 1}, and Lemma \ref{lemma decomposition on C^2}.  
Namely, from Proposition \ref{prop F starts at 1}, one has
\begin{align}\label{Formal solution of curves}
	F^{\dot\beta} = \sum_{l \geq 1} F^{\dot\beta}_{(l)}\;, \quad \text{with } l\in \mathbb{Z} \text{ and } F^{\dot\beta}_{(l)} \text{ section of } 
    \mathcal{O}(l)\;,
\end{align}
whereas, from Proposition \ref{Prop decomposition of deformations},
\begin{align}
	 h = \sum_{\overset{k\in\mathbb{Z}}{k\geq 2}}\sum_{n \in \mathbb{N}}{h}_{(k)}^{\dot\beta(n)}\mu_{\dot\beta(n)} \quad \text{with } {h}_{(k)}^{\dot\beta(n)} \text{ section of } \mathcal{O}(k-n,-2)\;.
\end{align}
The integer $k$ being here related to the physical helicity  $s$ 
via Penrose transform and $k = 2s-2$. The right-hand-side of 
\eqref{eq holomorphicity of curves 2} can then be expressed as
\begin{align}
	\sum_{k\geq 2}\sum_{n\geq 1} n\,{h}_{(k)}^{\dot\beta\dot\beta(n-1)}
    F_{\dot\beta(n-1)} 
	= \sum_{k\geq 2}\sum_{n\geq 1} \left( 
    n\,{h}_{(k)}^{\dot\beta\dot\beta(n-1)}\left(\prod_{i=1}^{n-1} 
    \sum_{m_i \geq 1} F_{\dot\beta_i(m_i)} \right)\right)\;.
\end{align}
At fixed $(n,k,m_i)$ the right-hand-side has homogeneity degree 
$k-n + \sum_{i=1}^{n-1} m_i$ in $\lambda_\alpha\,$. 
Therefore, the homogeneity $l$ terms of the right-hand-side of \eqref{eq holomorphicity of curves 2} are given by the combinations of 
$(n,k,m_i)$ such that 
\begin{align}\label{eq relation between homogeneities}
	l = k -1   +\sum_{i=1}^{n-1} (m_i-1)  \;.
\end{align}
Note that, because $\forall i\,, m_i \geq 1$, then for a fixed 
homogeneity $l$, the allowed $k$ are such that 
\begin{align}
	 2 \leq k \leq l+ 1\;.
\end{align}
In particular for $l=1$, the only possible value for $k$ 
is $k=2$ (and thus $m_i=1)$:
\begin{align}\label{holomorphic embeding: spin 2}
	\bar{\partial} F^{\dot\beta}_{(1)} = \frac{\partial \h_{(2)}}{\partial \mu_{\dot\beta}}\Big|_{\mu^{\dot\alpha} = F^{\dot\alpha}_{(1)}}\;.
\end{align}
From the usual nonlinear graviton theorem\footnote{Here and everywhere $\PT$ will denote the curved projective twistor space whose deformation is sourced by $\h_{(2)}\in\Omega^{(0,1)}(\mathbb{PT},\mathcal{O}(2))$ and $M$ the (four dimensional) self-dual spacetime associated to it by the standard nonlinear graviton theorem. This is as opposed to our curved twistor space $\T$ associated with the (infinite dimensional) HS space $\HST$.}, it admits a 4-parameter family of solutions $F_{(1)}(x): \mathbb{C}P^1 \hookrightarrow \PT$ with $\PT$ the projective curved  twistor space constructed from $\h_{(2)}\,$.

Furthermore, $k \geq 2$ means, together with \eqref{eq relation between homogeneities}, that $l+n -\sum m_i \geq 2\,$, and thus
\begin{align}\label{eq inequalities on m_i}
    \sum_{i=1}^{n-1}m_i \leq l+n-2  \qquad \Rightarrow \qquad \ 
   n-2 + m_{i^*} \;\leq\;  \sum_{i=1}^{n-2}m_i + m_{i^*} \;\leq\; l+n-2\;,
\end{align}
from which one deduces
\begin{align}
 \forall i^*,\    m_{i^*} \leq l\;.
\end{align}
This reproduces the triangular structure of higher-spin self-dual gravity 
equations: the HS fields sourcing the equation at order $l$ are all 
the fields of helicity $s$ such that $s \leq (l+3)/2\,$.

From these inequalities one can deduce that in the limit case 
where $k=l+1$, 
the only possible $m_i$ are the integers such that 
 $\forall i, m_i= 1\,$. 
 This will thus give, for the right-hand-side of \eqref{eq holomorphicity of curves 2} and at each order $l$, a term of the type
 \begin{align}
 	&\frac{\partial \h_{(l+1)}}{\partial \mu_{\dot\beta}}\Big|_{\mu^{\dot\beta} = F_{(1)}^{\dot\beta}}\;. 
 \end{align}

There is another important implication of \eqref{eq inequalities on m_i}. From the last inequality, one can see that if one of the $m_i$ is chosen equal to $l$, all the others are forced to be $1$, and therefore $k$ to be $2$. Altogether, this means that,
at fixed homogeneity $l$, the only term involving $F_{(l)}$ will bes
\begin{align}
    -\frac{\partial^2 \h_{(2)}}{\partial \mu_{\dot\beta}\partial \mu^{\dot\gamma}}\Big|_{\mu^{\dot\beta} = F_{(1)}^{\dot\beta}} F_{(l)}^{{\dot\gamma}}\;,
\end{align}
which is linear in $F_{(l)}\,$.
All the other terms appearing with homogeneity $l$
will involve factors $F_{(m)}$ with $m<l\,$.\\

Putting all the previous remarks together, one deduces that the structure of the homogeneity $l>1$ equation is
\vspace*{3pt}%
\begin{align}\label{holomorphic embeding: structure of the equation}
	\bar{\partial} F^{\dot\beta}_{(l)} + \frac{\partial^2 \h_{(2)}}{\partial \mu_{\dot\beta}\partial \mu^{\dot\gamma}}\Bigg|_{\mu = F_{(1)}} \hspace{-0.8cm}F_{(l)}^{{\dot\gamma}} = f^{\dot \beta}\Big(h_{(2)}, \dots, h_{(l+1)}, F_{(1)}, \dots, F_{(l-1)} \Big)\;.
\end{align}
\vspace{-0.2cm} \\

This equation has a direct interpretation\footnote{Here we will 
use the `hat' notation to signal geometrical objects related to the standard 
twistor space, e.g. $\hat{X} : \mathbb{C}\textrm{P}^1 \to \PT$ for a 
holomorphic curve and $\hat{N} \to Im \hat{X}$ for its normal bundle in $\PT$.} 
on $\PT$: The left hand side is the action of the Dolbeaut derivative, 
on a section of the weighted normal bundle 
$\hat{N} \times \mathcal{O}(l-1) = \mathcal{O}(l)\oplus \mathcal{O}(l)$ 
of the curve $F_{(1)}: \mathbb{C}P^1 \hookrightarrow \PT$.
This, together with the usual nonlinear graviton theorem,  
proves that equations \eqref{eq holomorphicity of curves}
can be solved recursively, order by order in the homogeneity degree. 
At each step, the solution is unique up to a solution of the homogeneous 
equation, a finite dimensional term which one can take to be parametrized by 
$x^{\underline{\dot{\alpha}}\alpha(l)}\,$. 
What is more, we saw that the lowest homogeneity degree, $l=1$,  of \eqref{eq holomorphicity of curves 2} corresponds to equations for the spin-$2$ field 
\eqref{holomorphic embeding: spin 2}, which by Penrose's nonlinear graviton 
theorem, possess a four dimensional space of solutions $\M$, 
here parametrized by $x^{\mu}=x^{\underline{\dot{\alpha}}\alpha}\,$. 
This means that there is a canonical projection $\HST \to \M$ from the 
(infinite dimensional) space of holomorphic planes $\HST$ to a usual four 
dimensional self-dual spacetime $\M\,$. Finally, given smallness 
assumptions on the $h_{(k)}$ and $F_{(k<l)}$, the right hand sides of 
\eqref{holomorphic embeding: structure of the equation} can be guaranteed 
to be small. This is because, by elliptic regularity, an $L^\infty$ bound on the source terms in \eqref{holomorphic embeding: structure of the equation} will imply one for the solution. 
This ensures that the $F_{(l)}$ in \eqref{Formal solution of curves} can 
be taken to be small deformations of the flat solutions and, therefore 
and up to restricting $C$ in \eqref{Boundness domain}, that the formal solution 
\eqref{Formal solution of curves} converge.\\

All of this proves the following.

\begin{Theorem}\label{theorem existence of solutions}
    For deformations $\T$ of the twistor space \eqref{Boundness domain} satisfying the assumptions \ref{item holom projection}-\ref{item bounded deformation}, there exists an infinite dimensional space $\HST$ of holomorphic sections $\mathbb{C}^2_0 \hookrightarrow \T$.
    
    The space $\HST$ is an infinite dimensional complex manifold whose local coordinates can be taken to be $x^{I} = (x^{\underline{\dot{\alpha}}\alpha}, x^{\underline{\dot{\alpha}}\alpha(2)}, x^{\underline{\dot{\alpha}}\alpha(3)}, \dots )$. It is the total space of a fibre bundle
    \begin{equation}\label{Projection from HST to M}
   \pi :  \HST \to \M\;,
\end{equation}
where $\M$ is a four-dimensional holomorphic self-dual spacetime, 
the projection being given in local coordinates by 
$x^I \to x^{\underline{\dot{\alpha}}\alpha}\,$.
\end{Theorem}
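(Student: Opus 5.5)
The plan is to solve the holomorphicity equation \eqref{eq holomorphicity of curves 2} order by order in the homogeneity decomposition \eqref{Formal solution of curves}, and then read off the manifold structure from the free data appearing at each order. First I use Proposition \ref{prop F starts at 1} to write any section satisfying \eqref{holomorphic curves: 0 condition} as $F=\sum_{l\geq1}F_{(l)}$ with $F_{(l)}$ a section of $\mathcal{O}(l)$. I then use Propositions \ref{Prop decomposition of deformations} and \ref{prop h starts at 2}, under assumption \ref{item bounded deformation}, to expand $h=\sum_{k\geq2}h_{(k)}$. Projecting \eqref{eq holomorphicity of curves 2} onto homogeneity $l$ with Lemma \ref{lemma decomposition on C^2} gives a triangular system, by the counting \eqref{eq relation between homogeneities}--\eqref{eq inequalities on m_i}. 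At $l=1$ this is exactly the equation \eqref{holomorphic embeding: spin 2} for degree-one curves in the projective twistor space $\PT$ built from $\h_{(2)}$. Penrose's nonlinear graviton theorem, applied for $\h_{(2)}$ small, then gives a four-parameter family $F_{(1)}(x)$, $x^{\underline{\dot\alpha}\alpha}\in\M$, with $\M$ self-dual.

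For $l\geq2$ I use the structure \eqref{holomorphic embeding: structure of the equation}. The left-hand side is $\bar\partial$ on the twisted normal bundle $\hat N\otimes\mathcal{O}(l-1)\simeq\mathcal{O}(l)\oplus\mathcal{O}(l)$ of the curve $F_{(1)}(x)$. The right-hand side depends only on $h_{(2)},\dots,h_{(l+1)}$ and the previously determined $F_{(1)},\dots,F_{(l-1)}$. The key facts are
\[
H^{1}(\mathbb{CP}^1,\mathcal{O}(l)^{\oplus2})=0,\qquad \dim H^{0}(\mathbb{CP}^1,\mathcal{O}(l)^{\oplus 2})=2(l+1).
\]
These hold because the normal bundle of a small deformation of a line stays $\mathcal{O}(1)\oplus\mathcal{O}(1)$, by stability of the splitting type. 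Consequently a solution $F_{(l)}$ always exists. It is unique up to adding a holomorphic section, which I parametrize by $x^{\underline{\dot\alpha}\alpha(l)}$ after fixing a normalization, for instance by choosing a complement to the kernel via the flat model. By induction the solutions are parametrized by $x^I=(x^{\underline{\dot\alpha}\alpha},x^{\underline{\dot\alpha}\alpha(2)},\dots)$. The dependence on $x^I$ is holomorphic, since the inverse $\bar\partial$-operator depends holomorphically on the holomorphically varying curve $F_{(1)}(x)$ and on the sources. This supplies the charts of $\HST$.

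Finally I address convergence and the fibration. For convergence I bound $\|F_{(l)}\|_{L^\infty}$ inductively, using elliptic estimates for the inverse of $\bar\partial+\partial^2\h_{(2)}$ on $\mathcal{O}(l)^{\oplus2}$. Combined with smallness of the $h_{(k)}$, this should give geometric growth of at most $c^l$. Shrinking $C$ in \eqref{Boundness domain} then makes $\sum_l F_{(l)}$ converge uniformly on $|\lambda|<C$ and satisfy \eqref{holomorphic curves: 0 condition}. The projection $\pi:\HST\to\M$ is $x^I\mapsto x^{\underline{\dot\alpha}\alpha}$, which is well defined because $F_{(1)}$ is the homogeneity-one part of $F$, an intrinsic object. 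The fibres are the affine spaces of higher parameters, and changes of normalization act triangularly on the $x^I$. This triangular action gives the transition functions and hence the fibre bundle structure.

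The main obstacle is the uniform estimate. The inverse of $\bar\partial$ on $\mathcal{O}(l)$ must have norm controlled uniformly in $l$, since for large $l$ the source terms involve combinatorially many products. I would first try weighted norms $\sup|F_{(l)}|\,\rho^{l}$ and a majorant-series argument, Cauchy--Kovalevskaya style, comparing with the scalar recursion implied by \eqref{eq relation between homogeneities}.
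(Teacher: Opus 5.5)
Your proposal follows the paper's proof essentially step for step. Both decompose the holomorphicity equation by homogeneity in $\lambda$ using Propositions~\ref{Prop decomposition of deformations}, \ref{prop h starts at 2}, \ref{prop F starts at 1} and Lemma~\ref{lemma decomposition on C^2}. The counting then identifies order $l=1$ with Penrose's nonlinear graviton equation for $F_{(1)}$, and order $l\geq2$ with a linear equation for the Dolbeault operator of $\hat N\otimes\mathcal{O}(l-1)\simeq\mathcal{O}(l)\oplus\mathcal{O}(l)$. That equation is solvable, and its $2(l+1)$-dimensional freedom is parametrized by $x^{\underline{\dot\alpha}\alpha(l)}$; convergence follows after shrinking $C$. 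The paper handles convergence with a one-line appeal to elliptic regularity and smallness of the sources, so your concern about estimates uniform in $l$ (weighted norms, majorants) refines its argument rather than departing from it.
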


\subsection{Higher spin frames}

\paragraph{The tangent higher-spin space}

Let $x^I \in \HST$ be a point of the higher spin spacetime, i.e. a solution $F^{\dot{\beta}} = \sum_{k\leq1} F_{(l)}^{\dot{\beta}}$ of \eqref{eq holomorphicity of curves 2} corresponding to a holomorphic plan $X : \mathbb{C}^2_0 \hookrightarrow \T$. Let $x^{\mu}$ be the space time point given by the projection \eqref{Projection from HST to M}, i.e. the holomorphic curve  in the projective twistor space $\hat{X} : \mathbb{C}P^1 \hookrightarrow  \PT$ given by $F^{\dot{\beta}}_{(1)}$. We recall that a tangent vector $V^{\mu}\in T_x\M$ at $x^{\mu}$ is identified with a holomorphic section $\delta F_{(1)}^{\dot{\beta}} \frac{\partial}{\partial \mu^{\dot\beta}}$ of the normal bundle $\hat{N}$ to $\hat{X}$, i.e. satisfying
\begin{equation}
    \bar{\nabla} \delta F_{(1)}^{\dot \beta}=0
\end{equation}
where 
$\bar{\nabla} = \bar{\partial} + \{\h_{(2)},- \}$
is the Dolbeault operator on $\hat{N}\,$.

Similarly, a tangent vector $V^I \in T\HST$ to the point $x^I$ is given by an infinitesimal deformation $\delta F^{\dot{\beta}} = \sum_{k\leq1} \delta F_{(l)}^{\dot{\beta}}$ satisfying the linearized equations. As a result of \eqref{holomorphic embeding: structure of the equation}, they form a triangular inferior system of linear equations
\begin{align}
\begin{array}{lllll}
    \bar{\nabla} \delta F_{1}^{\dot \beta}& & &&\,=0\\
    A_{21}  \delta F_{(1)}^{\dot \beta}  &+ \bar{\nabla} \delta F_{(2)}^{\dot \beta}&  &&=\,0\\
     A_{31}  \delta F_{(1)}^{\dot \beta} &+  A_{32}  \delta F_{(2)}^{\dot \beta} &+ \bar{\nabla} \delta F_{(3)}^{\dot \beta}&&\,=0\\
     A_{41}  \delta F_{(1)}^{\dot \beta} &+ A_{42}  \delta F_{(2)}^{\dot \beta} &+ A_{43}  \delta F_{(3)}^{\dot \beta} &+ \bar{\nabla} \delta F_{(4)}^{\dot \beta} &\,=0\\
     \vdots&&&
     \end{array}
\end{align}
where each of the $\delta F_{(l)}^{\dot{\beta}}$ here defines a section of 
the weighted normal bundle $\hat{N} \otimes \mathcal{O}(l-1)$ to $\hat{X}$ 
and $A_{ij}$ is a homogeneous differential operator of degree $i-j\,$. 
By a suitable linear redefinition of the fields $\delta F_{(i)}^{\dot \beta} \mapsto (\delta F_{(i)}^{\dot \beta})'= \delta F_{(i)}^{\dot \beta} + \sum_{1\leq k < i}  \bar{\nabla}^{-1} A_{ik} \delta F_{(k)} $ one can diagonalize the system. As a result the space of solutions is isomorphic to the space of holomorphic sections of $\underset{1\leq l}{\oplus}\Big( \hat{N} \otimes \mathcal{O}(l-1) \Big)= \underset{1\leq l}{\oplus} \Big(\mathcal{O}(l)\oplus\mathcal{O}(l)\Big)$ and therefore,
\begin{equation}
    \begin{alignedat}{2}
  T_x\HST &= \Gamma\left[X,N\right] \simeq \bigoplus_{1\leq l} \Gamma\left[\hat{X},\hat{N} \otimes \mathcal{O}(l-1)\right]\\
  &= \Gamma\left[\hat{X}, \underset{1\leq l}{\oplus}  \mathcal{O}(l)\right] \oplus \Gamma\left[\hat{X}, \underset{1\leq l}{\oplus}\mathcal{O}(l)\right]
  = \mathcal{O}_0(X) \oplus  \mathcal{O}_0(X) 
\end{alignedat}
\end{equation}
with $\mathcal{O}_0(X)$ the space of holomorphic 
functions on $X = \mathbb{C}^2_0$ vanishing at the origin. In particular, the normal bundle $N\to X$ to the holomorphic plane $X : \mathbb{C}^2_0 \hookrightarrow \T$ 
is trivial, $N \simeq \mathbb{C}^2 \times \mathbb{C}^2_0$, a fact that we will shortly use.

\paragraph{Correspondence space}

The higher-spin spacetime $\HST$ being defined as the set of holomorphic 
sections $X: \IC^2_0 \hookrightarrow \T$, the (infinite dimensional) 
correspondence space is defined as the product manifold  
$\F = \HST \times \IC^2_0\,$. 
We will write the corresponding  coordinates as $(x^I , \lambda^{\alpha})$ 
and in particular the exterior derivative 
${\rm d}= {\rm d}_x +{\rm d}_{\lambda}$ is the sum of a horizontal and 
a vertical exterior derivative:
\begin{align}
    {\rm d}_x&= {\rm d}x^I \frac{\partial}{\partial x^I}\;, & 
    {\rm d}_{\lambda}&= {\rm d}\lambda^{\alpha} 
    \frac{\partial}{\partial \lambda^{\alpha}}\;.
\end{align}
From the definition of $\HST$ as a space of sections, the correspondence space has a natural double fibration, $\pi_1 : \F \to \HST$  and $\pi_3 : \F \to \T$: if \begin{equation}
   X: \left| \lambda_{\alpha} \mapsto \begin{pmatrix} F^{\dot \alpha}(\lambda, \bar{\lambda}) \\ \lambda_{\alpha} \end{pmatrix} \right.
\end{equation} is the map corresponding to the point $x^I$ of the higher spin space then the second fibration is just its evaluation at the point $\lambda$ i.e.
\begin{equation}
   \pi_3: (x^I , \lambda^{\alpha})  \mapsto \begin{pmatrix} F^{\dot \alpha}(\lambda, \bar{\lambda}) \\ \lambda_{\alpha} \end{pmatrix}.
\end{equation}

\paragraph{Higher spin frames}

It follows from the definitions that the derivative 
${\rm d}_xF^{\dot{\alpha}}$ is the map identifying tangent vectors 
$T_x \HST$ at a point $x^I$ of the higher spin space with holomorphic 
sections $\Gamma[X, N]$ of the normal bundle,
\begin{equation}
{\rm d}_x F^{\dot\alpha} \left|
\begin{array}{ccc}
     T\HST&\longrightarrow& \Gamma[X,N]  \\[0.6em]
    V^{I} & \longmapsto &V^I\frac{\partial F^{\dot{\alpha}}}{\partial x^I}(\lambda,\bar{\lambda}) \frac{\partial}{\partial \mu^{\dot{\alpha}}}
\end{array}.
\right. 
\end{equation}
It is useful to write this map in the abstract holomorphic 
local coordinates on the curved twistor space $(z^{\underline{\dot\alpha}})$ 
given by the integrability of the deformed complex structure, 
\begin{align}
	{\rm d}_xF(\lambda,\bar{\lambda})^{\dot\alpha}\partial_{\dot\alpha} 
    = \left( {\rm d}_x F^{\dot\alpha}(H^{-1})_{\dot\alpha}{}^{\underline{\dot\alpha}} \right)(\lambda,\bar{\lambda}) \, \partial_{\underline{\dot\alpha}}
\end{align}
where $(H^{-1})_{\dot\alpha}{}^{\underline{\dot\alpha}}(\lambda,\bar{\lambda})$ 
is the transition map between the basis 
$\frac{\partial}{\partial \mu^{\dot{\alpha}}}$ on the vertical tangent bundle 
$V \to X \simeq \IC^2_0$ induced by the flat complex coordinates 
$(\mu^{\dot{\alpha}}, \lambda_{\alpha})$ and the holomorphic basis 
$\frac{\partial}{\partial z^{\underline{\dot{\alpha}}}}$ induced by 
the deformed complex structure. 
Let us note 
\begin{equation}
   {\rm d}x^I E_{I}{}^{\underline{\dot\alpha}}(x, \lambda_\alpha)
   := {\rm d}_x F^{\dot\alpha}(H^{-1})_{\dot\alpha}{}^{\underline{\dot\alpha}} 
\end{equation}
the corresponding components. In the holomorphic basis, the fact that 
${\rm d}_xF^{\dot\alpha}$ defines an holomorphic section simply reads
\begin{equation}
    \frac{\partial}{\partial \bar{\lambda}^{\dot{\beta}}} 
    \Bigg( {\rm d}_x F^{\dot\alpha}(H^{-1})_{\dot\alpha}{}^{\underline{\dot\alpha}} \Bigg)=0\;.
\end{equation}
Therefore $E^{\underline{\dot\alpha}}(x, \lambda_\alpha) = \sum_k \lambda_{\alpha(k)}E^{\underline{\dot\alpha}\alpha(k)}(x)$ and the differential ${\rm d}_xF$ can be rewritten as
\begin{align}\label{eq spacetime vielbein}
	{\rm d}_xF^{\dot\alpha}\partial_{\dot\alpha} = \left(\sum_{k\in \mathbb{N}} \lambda_{\alpha(k)}E^{\alpha(k) \underline{\dot\alpha}}(x)\right)\, \partial_{\underline{\dot\alpha}}\;.
\end{align}
The higher-spin space $\HST$ is thus equipped with an infinite number of higher-spin frames 
\begin{equation}\label{HS frames}
E^{ \underline{\dot\alpha}\alpha(k)}(x)
= {\rm d}x^I E^{\underline{\dot\alpha}\alpha(k) }_I(x)
\end{equation}
 where $k\in \mathbb{N}$ is related to the helicity by $k=2s-3$.

\subsection{Correspondence with higher-spin self-dual gravity}

\paragraph{The 2-form field of the HS space}
The holomorphic symplectic pairing $\epsilon_{\alpha\beta}$ induced 
on the normal bundle $N \to X\simeq \IC^2_0$ by the Poisson structure 
$\left\{ \cdot , \cdot \right\} = \epsilon^{\dot \alpha \dot \beta} \frac{\partial}{\partial \mu^{\dot \alpha}}\frac{\partial}{\partial \mu^{\dot \beta}}$ 
now allows to construct a two-form on the correspondence space:
\begin{equation}\label{eq favorite object}
   \mathcal{F}(x, \lambda) := {\rm d}_x F^{\dot \alpha} 
   \wedge {\rm d}_x F^{\dot \beta} \epsilon_{\dot{\alpha} \dot{\beta}} 
   = E^{\underline{\dot\alpha}}(x,  \lambda) 
   \wedge E^{\underline{\dot\beta}}(x,  \lambda)
   \epsilon_{\underline{\dot{\alpha}} \underline{\dot{\beta}}} 
   \quad\in \quad \Omega^{2}(\F)\;,
\end{equation}
where we made use of the fact that the deformation of the complex structure 
must preserve the Poisson structure $\epsilon_{\dot{\alpha} \dot{\beta}} = \epsilon_{\underline{\dot{\alpha}} \underline{\dot{\beta}}}\,$. 
Making use of the higher spin frames \eqref{HS frames} 
this 2-form can be rewritten as
\begin{align}\label{2-form, twistor}
	\mathcal{F}(x, \lambda_\alpha) 
	= \left( \sum_s \lambda_{\alpha(s)}E^{\alpha(s)\underline{\dot\alpha}}(x)\right) 
    \wedge \left( \sum_{s'} \lambda_{\beta(s')}E^{\beta(s') 
    \underline{\dot\beta}}(x) \right) 
    \epsilon_{\underline{\dot\alpha}\underline{\dot\beta}}\;.
\end{align}
By construction the 2-form \eqref{2-form, twistor} is horizontally closed 
and simple:
\begin{align}
   {\rm d}_x\mathcal{F} &=0\;, &\mathcal{F} &= 2 \,E^{\underline{\dot{0}}}(x,  \lambda) \wedge E^{\underline{\dot{1}}}(x,  \lambda)\;.
\end{align}
The second equation trivially implies
\begin{equation}
    \mathcal{F}  \wedge \mathcal{F}  =0\;,
\end{equation}
while the closure condition means that there must exist a horizontal 1-form 
$\mathcal{A}= {\rm d}x^I \mathcal{A}_I(x,\lambda)$ on the correspondence space 
such that
\begin{equation}
    \mathcal{F} = {\rm d}_x \mathcal{A}\;.
\end{equation}
Since $\mathcal{F}$ is polynomial in $\lambda$ we must have
\begin{equation}\label{Nonlinear graviton thrm: potential in the correspondence space}
    \mathcal{A}= \sum_s {\rm d}x^I \mathcal{A}_I{}^{\alpha(s)}(x) 
    \lambda_{\alpha(s)}\;,
\end{equation}
and we thus see that the fields $\mathcal{A}{}^{\alpha(s)}(x)$ play the 
role of potentials for the higher spin frames 
$E^{\dot{\alpha}\alpha(s)}(x)\,$.\\

\paragraph{Choice of gauge}

Recall that we have a canonical projection 
\begin{align}
 \pi :  \HST \to \M\;.
\end{align}
In the other direction, there is no canonical way to embed  the space 
$\M$ in $\HST$ : a solution to the homogeneity $l=1$ part of the equations 
\eqref{eq holomorphicity of curves} does not extend  to a unique solution 
in the superior homogeneities.
To recover a solution of higher-spin self-dual gravity \eqref{Field equations of SD HS gravity}, one needs to
choose a section
\begin{align}\label{eq choice of section}
    s : \M \hookrightarrow \HST\;.
\end{align}

This section can be extended in a trivial way to the correspondence spaces
\begin{align}
    \tilde{s}: \M \times \IC^2_0 \rightarrow \HST \times \IC^2_0
\end{align}
by defining $\tilde{s}(x^\mu, \lambda_\alpha) = (s(x^\mu), \lambda_\alpha)\,$. 
It is then clear that by pulling back the potential 
\eqref{Nonlinear graviton thrm: potential in the correspondence space} 
by such a choice of embedding the resulting field $A = \tilde{s}^*\mathcal{A}$ 
will be a solution to higher-spin self-dual gravity: the potentials
\begin{equation}\label{HS field: pull back of potential}
    {\rm d}x^{\mu} A_{\mu}^{\alpha(s)} = s^*\left(\mathcal{A}{}^{\alpha(s)} \right) 
\end{equation}
will automatically satisfy \eqref{Field equations of SD HS gravity} as a result of
\begin{equation}
    \tilde{s}^*\left( \mathcal{F} \wedge \mathcal{F} \right) =0\;.
\end{equation}

We therefore proved an extension of the nonlinear graviton theorem for higher-spin self-dual gravity:
\begin{Theorem}\label{theorem NLG}
There is a 1-1 correspondence between, small solutions of self-dual higher spin gravity with zero cosmological constant and small deformations of the complex structure of $\mathbb{T}$ satisfying the assumptions \ref{item holom projection}-\ref{item bounded deformation} together with a choice of section \eqref{eq choice of section}.
\end{Theorem}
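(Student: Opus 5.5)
The correspondence of Theorem \ref{theorem NLG} has to be established in both directions, and then shown to be a bijection.

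\emph{From twistor data to solutions.} Start from a deformation satisfying \ref{item holom projection}--\ref{item bounded deformation}. Theorem \ref{theorem existence of solutions} gives the higher-spin space $\HST$ together with its projection $\pi:\HST\to\M$. On the correspondence space $\HST\times\IC^2_0$, build the 2-form $\mathcal{F}$ of \eqref{eq favorite object}. It is horizontally closed and simple, so it admits the potential $\mathcal{A}$ of \eqref{Nonlinear graviton thrm: potential in the correspondence space}. To see that $\mathcal{A}$ is polynomial in $\lambda$ and vanishes to first order at $\lambda=0$, use that the frames $E^{\underline{\dot\alpha}}$ are holomorphic in $\lambda$ and invoke Hartogs, exactly as in \eqref{A on correspondence space}. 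Then pull back along $\tilde s$, built from a section $s:\M\hookrightarrow\HST$. The resulting field $A=\tilde s^*\mathcal{A}$ has $F={\rm d}_xA=\tilde s^*\mathcal{F}$, so $F\wedge F=0$, which is \eqref{Field equation for F in correspondence space}. Smallness of $A$ follows from smallness of $\h$, via the elliptic estimates used in the recursive solution \eqref{holomorphic embeding: structure of the equation}. The ambiguity $\mathcal{A}\mapsto\mathcal{A}+{\rm d}_x\chi$ is exactly the first gauge symmetry in \eqref{HS SD gravity: gauge}.

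\emph{From solutions to twistor data.} Start from a small solution $A$. The construction of Section \ref{section: Twistor space of HS-SDGR} gives the integrable distribution $D$ and the leaf space $\T=\F/D$, fibred over $\IC^2_0$. I must check properties \ref{item holom projection}--\ref{item bounded deformation}:
\begin{itemize}
\item[(a)] $D$ is transverse to the vertical distribution, which gives the holomorphic projection \ref{item holom projection}.
\item[(b)] $F$ descends to $\T$ and supplies the fibrewise Poisson structure \ref{item holom poisson}, since $F$ is closed, simple, and annihilates $D$.
\item[(c)] The Euler field $\lambda\partial_\lambda$ on $\F$ maps to a $(1,0)$ field \ref{item euler not holom}. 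It is not holomorphic because the $A^{\alpha(l)}$ carry different weights.
\item[(d)] Boundedness \ref{item bounded deformation} holds because $A$ starts at order $\lambda^2$, so by Proposition \ref{prop h starts at 2} only $k\ge2$ appears.
\end{itemize}
For smallness, identify $\T$ with a small deformation of $\mathbb{T}$ by solving for the Darboux coordinates perturbatively from the flat incidence relation. The leaves through $x\in\M$ give a section of $\HST$, namely the map $\M\to\HST$, $x\mapsto X_x$. This is the required $s$.

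\emph{Bijectivity.} I need to show the two constructions are mutually inverse up to the gauge symmetry \eqref{Diffeo in correspondence space}. In one direction, starting from $(\T,s)$, the leaves of $D$ built from $\tilde s^*\mathcal{F}$ are the fibres of $\pi_3\circ\tilde s$. This holds because $E^{\underline{\dot\alpha}}$ is the pullback of ${\rm d}_xF$ expressed in holomorphic coordinates, so $\ker E=\ker {\rm d}_x(\pi_3\circ\tilde s)$. Hence the reconstructed twistor space is $\T$, with the same section. In the other direction, starting from $A$, the frame ${\rm d}_x F$ of the constructed planes reproduces $E$ up to the $SL(2)$ ambiguity, so it reproduces $F$. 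Therefore $A$ is recovered up to ${\rm d}_x\chi$.

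\emph{Main obstacle.} The hardest step is making precise that the leaf space $\T$ is genuinely a small deformation of the domain \eqref{Boundness domain} with a bounded $\bar\partial$-operator. The leaf space is only locally a manifold, and $\lambda$ must be kept in $0<|\lambda|<C$ while $x$ stays in a ball. I would first treat the linearised problem, where the leaf map is the flat incidence relation plus a small correction. I would then use an implicit-function argument in a Banach space of bounded holomorphic-in-$\lambda$ series to control the nonlinear corrections uniformly, shrinking $C$ and $R$ as needed.
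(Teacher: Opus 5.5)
Your overall route is the paper's. The paper calls Theorem \ref{theorem NLG} an immediate consequence of three things: Theorem \ref{theorem existence of solutions}, the pull-back $A=\tilde s^*\mathcal{A}$, and the leaf-space construction of Section \ref{section: Twistor space of HS-SDGR}. It never checks that this leaf space satisfies \ref{item holom poisson}--\ref{item bounded deformation}. Your attempt to supply that check breaks at step (c).

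The field $\lambda_\alpha\partial/\partial\lambda_\alpha$ is holomorphic on $\F$ and $D$ is a holomorphic foliation. So if its flow preserved $D$, it would push forward to a \emph{holomorphic} vector field on $\T$, which is exactly what you do not want. In fact it does not preserve $D$. The flow $(x,\lambda)\mapsto(x,c\lambda)$ carries $D_{(x,\lambda)}=\ker E^{\dot\alpha}(x,\lambda)\subset T_x\M$ to the same subspace over $(x,c\lambda)$, whereas $D_{(x,c\lambda)}=\ker E^{\dot\alpha}(x,c\lambda)$. These two kernels generically differ as soon as the frame has components $E^{\dot\alpha\alpha(k)}$ of different degrees in $\lambda$, which is the very feature you invoke for non-holomorphicity. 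So $\lambda\partial_\lambda$ does not descend at all, and (c) produces no Euler field on $\T$.

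The missing idea is that \ref{item euler not holom} is not intrinsic to $\T$. It constrains the smooth identification of the leaf space with the domain \eqref{Boundness domain}, and has to be achieved as a gauge choice. Choose fibrewise holomorphic Darboux coordinates $\mu^{\dot\alpha}$ (Stein fibres plus Darboux, as in the paper). The operator then reads $\bar\partial+{\rm d}\bar\lambda^{\dot\alpha}\{h_{\dot\alpha},-\}$, and $E$ is $(1,0)$ if and only if $\bar\lambda^{\dot\alpha}h_{\dot\alpha}$ is constant on the fibres. Equivalently, the $\mu^{\dot\alpha}$ must be holomorphic on each $3$-fold $\pi^{-1}(L)$, with $L$ a punctured complex line through $0\in\IC^2$. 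This is what gives $h_{\dot\alpha}\propto\bar\lambda_{\dot\alpha}$, i.e.\ the form \eqref{Twistor space deformations: deformation we consider}.

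You must then show this gauge is reachable by a $\lambda$-dependent fibrewise holomorphic symplectomorphism. At linear order its Hamiltonian $g$ obeys an equation of the form $\bar\lambda^{\dot\alpha}\partial g/\partial\bar\lambda^{\dot\alpha}=-\bar\lambda^{\dot\alpha}h_{\dot\alpha}$. On each line $\lambda=t\lambda_0$ this is a $\bar\partial$-equation $\bar t\,\partial_{\bar t}g=\dots$ on a punctured disc. It has to be solved with bounds as $t\to0$ and smoothly in $[\lambda_0]$, then iterated for small data. Equivalently, you need a horizontal, necessarily non-holomorphic, $(1,0)$ correction $V$ making $\lambda\partial_\lambda+V$ projectable along $D$.

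Two smaller points.
\begin{enumerate}[(1)]
\item Step (d) is only a linearised argument. Proposition \ref{prop h starts at 2} concerns the homogeneity decomposition of $\h$, and its relation to the $\lambda$-degree of $A$ is the linear Penrose transform. Boundedness near the origin must therefore be proved directly in the gauge above. One possible route is the rescaling $A(x,\lambda)\mapsto t^{-2}A(x,t\lambda)$, which preserves $F\wedge F=0$ and tends to the spin-$2$ part as $t\to0$.
\item Smallness of $A$ needs the section to be close to a reference one, not just $\h$ small. In the flat model, $\h=0$ with large $\varphi^{\dot\alpha\alpha(k)}$ already gives large pure-gauge fields.
\end{enumerate}
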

This theorem is an immediate consequence of Theorem \ref{theorem existence of solutions} and the previous discussion. Since our twistor space is restricted to the open set \eqref{Boundness domain}, it holds for solution of higher-spin self-dual gravity in a neighbourhood of the origin. In the next section we will prove that the choice of section appearing in this theorem is just a gauge choice:
\begin{Theorem}\label{theorem NLG with gauge}
There is a 1-1 correspondence between small solutions of higher spin self-dual gravity with zero cosmological constant \underline{up to gauge} and small deformations of the complex structure of $\mathbb{T}$ satisfying the assumptions \ref{item holom projection}-\ref{item bounded deformation}.
\end{Theorem}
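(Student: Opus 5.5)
Given Theorem \ref{theorem NLG}, what remains is to show that the choice of section \eqref{eq choice of section} is pure gauge. Concretely, we will prove two things: (a) two sections $s,s':\M\hookrightarrow\HST$ yield potentials $A=\tilde s^*\mathcal{A}$ and $A'=\tilde s'^*\mathcal{A}$ related by a (finite) gauge transformation \eqref{eq action of gauge symmetry on A}, up to $A\mapsto A+{\rm d}_x\chi$; and (b) every such gauge transformation of a solution obtained from a section $s$ is again obtained from some section $s'$. Together with the forward construction of Section \ref{section: Twistor space of HS-SDGR} (which produces $\T$ from any solution, and $\T$ is manifestly invariant under horizontal holomorphic diffeomorphisms since these preserve the distribution $D$), this upgrades the correspondence to one between gauge orbits and deformations.

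For (a), the key point is that $\mathcal{F}$ and $\mathcal{A}$ live on $\F=\HST\times\IC^2_0$ and are determined by the map $\pi_3$. The two pulled-back structures $\tilde s^*$ and $\tilde s'^*$ define the same twistor space $\T$: the leaves of $D$ on $\M\times\IC^2_0$ are precisely the fibres of $\pi_3\circ\tilde s$. Hence for each $\lambda$ the maps $x\mapsto \pi_3(s(x),\lambda)$ and $x\mapsto\pi_3(s'(x),\lambda)$ are both local diffeomorphisms of $\M$ onto the fibre $\pi^{-1}(\lambda)\subset\T$ modulo the leaves. For small deformations and small sections they are near the identity, so I would define $\varphi(x,\lambda)$ by solving $\pi_3(s'(x),\lambda)=\pi_3(s(\varphi(x,\lambda)),\lambda)$ along the leaf directions (implicit function theorem). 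Holomorphy of $\varphi$ in $\lambda$ follows because both sides are holomorphic sections evaluated at $\lambda$, written in the holomorphic coordinates $z^{\underline{\dot\alpha}}$. Then $\Phi(x,\lambda)=(\varphi,\lambda)$ satisfies $\pi_3\circ\tilde s'=\pi_3\circ\tilde s\circ\Phi$. Since $\mathcal{F}$ is the pull-back by $\pi_3$ of the fibrewise symplectic form $\epsilon_{\underline{\dot\alpha}\underline{\dot\beta}}{\rm d}z^{\underline{\dot\alpha}}\wedge {\rm d}z^{\underline{\dot\beta}}$ (restricted horizontally), we get $F'=\Phi\cdot F$ horizontally. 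Thus $A'-\Phi\cdot A$ is ${\rm d}_x$-closed and polynomial in $\lambda$, hence ${\rm d}_x\chi$ on the contractible domain.

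For (b), we reverse the argument: given $\Phi$, we want a section $s'$ with $\pi_3\circ\tilde s'=\pi_3\circ\tilde s\circ\Phi$. For fixed $x$, the map $\lambda\mapsto \pi_3(s(\varphi(x,\lambda)),\lambda)$ is a section of $\T\to\IC^2_0$. I would check that it is holomorphic by noting that its $\bar\partial$ only involves $\partial_{\bar\lambda}$, under which $\varphi$ is constant, and that the map $\lambda\mapsto\pi_3(s(y),\lambda)$ is holomorphic for each fixed $y$. Its smallness gives the bound \eqref{holomorphic curves: 0 condition}, so it defines a point $s'(x)\in\HST$. To see that $s'$ is a section of $\pi:\HST\to\M$ possibly composed with a diffeomorphism of $\M$, note that the homogeneity-one part of the section is determined by $\varphi^{\mu}$ at $\lambda$-order zero. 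The footnote in the flat model allows us to absorb that piece as an ordinary diffeomorphism.

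The main obstacle I expect is the holomorphy in $\lambda$ in step (a). The map $\varphi$ is defined implicitly through the non-holomorphic flat coordinates $\mu$. One must argue in the holomorphic coordinates $z^{\underline{\dot\alpha}}$ and use that $\mathcal{F}$ is polynomial in $\lambda$ via the frames \eqref{eq spacetime vielbein}, checking order by order in homogeneity, as in \eqref{holomorphic embeding: structure of the equation}. Convergence would then follow from the same smallness estimates used in Theorem \ref{theorem existence of solutions}.
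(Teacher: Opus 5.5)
Your overall architecture matches the paper's. Your (b) is the direction the paper calls easy: a horizontal holomorphic $\Phi$ sends the four-parameter family of planes to $F(\varphi(x,\lambda),\lambda)$. Your (a) is a finite version of Lemma~\ref{Lemma: gauge invariance}. Linearising $\pi_3(s'(x),\lambda)=\pi_3(s(\varphi(x,\lambda)),\lambda)$ around $s'=s$, $\varphi=x$ gives exactly the paper's condition \eqref{eq isomorphism of symmetries}. Your remark that $\mathcal F$ is the horizontal pull-back of the fibrewise symplectic form is the finite form of the diagram \eqref{eq commutative diagram}.

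\textbf{The gap.} It lies in the step that carries all the content, holomorphy of $\varphi$ in $\lambda$, and the reason you give for it does not work. At fixed $\lambda$ your equation is two equations for four unknowns, and its solution set is a whole leaf. The implicit function theorem only selects a solution once you choose a slice transverse to the leaves, and the result is holomorphic in $\lambda$ only if that slice is. No such slice exists over all of $\mathbb{C}^2_0$, already in the flat model. A holomorphic complement to the $\alpha$-planes $\{\lambda^\alpha\pi^{\dot\alpha}\}$ would give a holomorphic $\rho(\lambda)$ with $\rho(\lambda)_{\alpha\dot\alpha}{}^{\dot\beta}\lambda^\alpha=\delta_{\dot\alpha}{}^{\dot\beta}$. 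By Hartogs this extends over $\lambda=0$, where the identity fails. The Hermitian choice $\varphi^{\alpha\dot\alpha}=x^{\alpha\dot\alpha}+\hat\lambda^\alpha\,\delta F^{\dot\alpha}/\langle\lambda\hat\lambda\rangle$ does solve the flat equation, but it is not holomorphic. Existence of a holomorphic $\varphi$ is really a division problem. In the flat linearised case you need $\delta F^{\dot\alpha}=\lambda_\alpha\xi^{\alpha\dot\alpha}$ with $\xi$ holomorphic, and this is possible if and only if $\delta F^{\dot\alpha}$ vanishes at $\lambda=0$. So the argument must use that the planes start at homogeneity one (Proposition~\ref{prop F starts at 1}), and your step (a) never uses this.

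\textbf{How the paper supplies it.} The paper's proof of the hard direction is exactly this missing step, carried out infinitesimally. In the homogeneity expansion, \eqref{eq isomorphism 2} is triangular. At order $s$ the new unknown $\xi^{\mu\alpha(s-1)}$ enters only through $\xi^{\mu\alpha(s-1)}E_\mu^{\underline{\dot\alpha}\alpha}$, which can be solved for because the spin-2 vielbein is invertible. All other terms involve already-determined $\xi^{\mu\alpha(l)}$ with $l\le s-2$. Since $\eta$ has no homogeneity-zero part there is never an obstruction, and the output is automatically a power series in $\lambda$. Your closing paragraph mentions working \emph{order by order in homogeneity}, but that recursion is the proof, not a technical check.

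\textbf{How to complete your version.} Set $Z_s(x,\lambda):=z^{\underline{\dot\alpha}}(\pi_3(s(x),\lambda))$ and run the same recursion on $Z_{s'}(x,\lambda)=Z_s(\varphi(x,\lambda),\lambda)$. Taylor-expanding in $\varphi-x$, the unknown $\varphi^{\mu\alpha(s-1)}$ again appears at order $s$ only linearly, through $E_\mu^{\underline{\dot\alpha}\alpha}$. You would then need to prove convergence. That would give a finite statement, stronger than the paper's infinitesimal Lemma.

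\textbf{A smaller point in (b).} $\partial_{\bar\lambda}$ does not annihilate $\bar\varphi$. So holomorphy of $\lambda\mapsto\pi_3(s(\varphi(x,\lambda)),\lambda)$ also requires the planes to depend holomorphically on the point of $\M$ (i.e.\ $s$ holomorphic), not only $\varphi$ to be holomorphic in $\lambda$.
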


\section{Higher-spin space geometry and higher-spin symmetries}\label{section: Higher-spin space geometry and higher-spin symmetries}

\subsection{Higher-spin space geometry}

Before discussing how higher-spin symmetries are realized in the present context, we briefly recapitulate the higher-spin geometry which emerged from our twistor space realization. The flat model is detailed below as a concrete illustration.

\paragraph{Higher-spin space geometry} From Theorem \ref{theorem existence of solutions} the higher-spin space $\HST$ is an infinite dimensional complex manifold with coordinates $x^{I} = (x^{\underline{\dot{\alpha}}\alpha}, x^{\underline{\dot{\alpha}}\alpha(2)}, x^{\underline{\dot{\alpha}}\alpha(3)}, \dots )$. It naturally is the total space of a fibre bundle $\HST \to \M$ over a four dimensional holomorphic self-dual spacetime, the projection being given by $x^I \to x^{\mu}= x^{\underline{\dot{\alpha}}\alpha}$. It comes equipped with higher-spin frames, $k\in \mathbb{N}$ being related to the helicity via $k=2s-3$,
\begin{equation}\label{FrameSDHS}
 E^{\underline{\dot\alpha}\alpha(k)} 
 = {\rm d}x^I E^{\underline{\dot\alpha}\alpha(k)}_I(x)\;,
\end{equation}
satisfying the differential equation 
\begin{align}
    {\rm d}_x \mathcal{F}^{\alpha(k)} &=0\;,& \textrm{where}\quad 
    \mathcal{F}^{\alpha(k)} = 
    \sum_{l+m=k}E^{\underline{\dot\alpha}\alpha(l)} \wedge 
    E^{\underline{\dot\beta}\alpha(m)} \epsilon_{\dot{\underline{\alpha}} 
    \dot{ \underline{\beta}}}\;.
\end{align}
The higher-spin frame derive from potential 1-form 
$\mathcal{A}^{\alpha(k)}$ via the relation 
$\mathcal{F}^{\alpha(k)} = {\rm d}_x \mathcal{A}^{\alpha(k)}\,$. 
The fields and equations of higher-spin self-dual gravity are obtained by making a choice of embedding 
of spacetime into the higher-spin space $s: \M \to \HST$ (more precisely choosing 
a section of the bundle $\HST \to \M$) and pulling back. 
As we shall see in the coming section, this freedom in realizing spacetime 
inside the higher-spin space is the way higher-spin symmetries are realized 
in the present context.

\paragraph{The flat model}

In the flat model the holomorphic planes are given by \eqref{Flat model: holomorphic planes} 
with $\{ X^{\dot \alpha \alpha(k)}\} \to X^{\dot \alpha \alpha}$ playing the 
role of coordinates on $\HST \to \M\,$. 
The higher-spin frames are simply 
$E^{\dot \alpha \alpha(k)} = {\rm d}X^{\dot \alpha \alpha(k)}\,$, 
and are generated by the potentials
\begin{equation}
    \mathcal{A}^{\alpha(k)} =  \sum_{l+m=k} X^{\dot \alpha \alpha(l)}
    \,{\rm d}X^{\dot \beta \alpha(m)} \epsilon_{\dot \alpha \dot \beta}\;.
\end{equation}
A choice of section $s:\M\to \HST$ reads
\begin{equation}
    s: x^{\alpha\dot\alpha} \mapsto F^{\dot \alpha}(x) = x^{\alpha\dot\alpha} \lambda_{\alpha}  + \sum_{k\geq 2} \varphi^{\dot \alpha \alpha(k)}(x) \lambda_{\alpha(k)}
\end{equation}
and we therefore obtain the higher-spin self-dual gravity fields via the pull-back
\begin{equation}
    \begin{alignedat}{3}
    A^{\alpha(2)}(x) &= x^{\dot \alpha \alpha}\,{\rm d}x^{\dot \beta \alpha} \epsilon_{\dot \alpha \dot \beta}\;,\\
     & ~\;\vdots\\
    A^{\alpha(k)}(x) &=  {\rm d}x^{\gamma \dot\gamma} 
    \left(\;\sum_{l+m=k} \varphi^{\dot \alpha \alpha(l)} 
    \frac{\partial \varphi^{\dot \beta \alpha(m)}}{\partial x^{\gamma \dot \gamma}} 
    \,\epsilon_{\dot \alpha \dot \beta} \right)\;.
\end{alignedat}
\end{equation}
The spin-2 field here coincide with the potential 
$A^{\alpha(2)}_0(x) := x^{\dot \alpha \alpha}{\rm d}x^{\dot \beta \alpha} \epsilon_{\dot \alpha \dot \beta}$ 
for the Minkowski frame field 
$e^{\alpha\dot \alpha}_0 = {\rm d}x^{\alpha \dot \alpha}$ 
and the higher-spin fields are pure gauge in the sense that they can be put to 
zero\footnote{This can also be seen from the fact that, in correspondence space, 
$A(x,\lambda) = \sum_{2\leq k} A^{\alpha(k)}(x) \lambda_{\alpha(k)}$ is obtained 
from $A_0 = A^{\alpha(2)}_0(x) \lambda_{\alpha(2)}$ via 
\eqref{eq action of gauge symmetry on A} and taking 
$\varphi^{\dot \alpha\alpha}(x,\lambda) = \sum_{1\leq k}\varphi^{\dot \alpha \alpha(k)}(x)\lambda_{\alpha(k-1)}\,$.} 
by choosing $\varphi^{\dot \alpha \alpha(l)}=0$ for all $l \geq2\,$. 
Deforming the section $s$ infinitesimally by 
$\delta \varphi^{\dot \alpha \alpha(k)}$ is then equivalent to a higher-spin 
symmetry \eqref{HS SD gravity: gauge} since
\begin{equation}
\begin{alignedat}{3}
        \delta A^{\alpha(k)}(x) &=  {\rm d}x^{\gamma \dot\gamma} \sum_{l+m=k}  \left(\; \delta\varphi^{\dot \alpha \alpha(l)} \frac{\partial \varphi^{\dot \beta \alpha(m)}}{\partial x^{\gamma \dot \gamma}} + \varphi^{\dot \alpha \alpha(l)} \frac{\partial \delta\varphi^{\dot \beta \alpha(m)}}{\partial x^{\gamma \dot \gamma}} \right) \epsilon_{\dot \alpha \dot \beta},\\
        &= {\rm d} \left( \sum_{l+m=k}  \varphi^{\dot \alpha \alpha(m)}\delta\varphi^{\dot \beta \alpha(l)} \epsilon_{\dot \alpha \dot \beta}\right) +2{\rm d}x^{\gamma \dot\gamma} \sum_{l+m=k}  \; \delta\varphi^{\dot \alpha \alpha(l)} \frac{\partial \varphi^{\dot \beta \alpha(m)}}{\partial x^{\gamma \dot \gamma}}
        \epsilon_{\dot \alpha \dot \beta}\;,
\end{alignedat}
\end{equation}
where, to match with \eqref{HS SD gravity: gauge}, 
one only has to recognize that the second term is of the form
\begin{equation}
\begin{alignedat}{2}
        2{\rm d}x^{\gamma \dot\gamma} \sum_{n+p=k}  \; \eta^{\dot\delta  \delta\alpha(n)}  F_{\dot\delta \delta \dot \gamma \gamma}^{\alpha(p)} &= 2{\rm d}x^{\gamma \dot\gamma} \sum_{n+p=k}  \; \eta^{\dot\delta \delta \alpha(n)}   \sum_{q+m=p} \frac{\partial \varphi^{\dot \alpha \alpha(q)}}{\partial x^{\delta \dot \delta}}\frac{\partial \varphi^{\dot \beta \alpha(m)}}{\partial x^{\gamma \dot \gamma}} \epsilon_{\dot \alpha \dot \beta} \\
    &= 2{\rm d}x^{\gamma \dot\gamma} \sum_{n+q+m=k}  \; 
    \eta^{\dot\delta \delta\alpha(n)}    \frac{\partial \varphi^{\dot \alpha \alpha(q)}}{\partial x^{\delta \dot \delta}}\;
    \frac{\partial \varphi^{\dot \beta \alpha(m)}}{\partial x^{\gamma \dot \gamma}} \epsilon_{\dot \alpha \dot \beta}
\end{alignedat}
\end{equation}
which is obtained by the identification\footnote{The map $\{\eta^{\dot\alpha \alpha(l)}\}  \mapsto \{\delta\varphi^{\dot \alpha \alpha(l)}\}$ is here an isomorphism since it can be rewritten as \begin{equation*}
\delta\varphi^{\dot \alpha \alpha(l)} =  \eta^{\dot\alpha \alpha(l)} +  \sum_{n < l-1,\; n+q=l}  \; \eta^{\dot\delta \delta \alpha(n)}    \frac{\partial \varphi^{\dot \alpha \alpha(q)}}{\partial x^{\dot \delta \delta}}    
\end{equation*} 
which can be inverted order by order in $l$.}
\begin{equation}
\begin{alignedat}{2}
\delta\varphi^{\dot \alpha \alpha(l)} = \sum_{n+q=l}  \; \eta^{\dot\delta \delta \alpha(n)}    \frac{\partial \varphi^{\dot \alpha \alpha(q)}}{\partial x^{\dot \delta \delta}}\;.
\end{alignedat}
\end{equation}

\subsection{Gauge symmetries}\label{ssection: gauge symmetries}

The goal of this section is to prove Theorem \ref{theorem NLG with gauge}.
Theorem \ref{theorem NLG} already gives us the correspondence
between complex deformations $\T$ of twistor space, together with choice of embedding \eqref{eq choice of section},
and solutions to higher-spin self-dual gravity \eqref{Field equations of SD HS gravity}. The remaining content of Theorem \ref{theorem NLG with gauge} states the equivalence of two (a priori) different notions of symmetries on the solution space we introduced in the previous sections:

\begin{enumerate}[(a)]
\item Horizontal, holomorphic diffeomorphisms $\Phi : \M \times \IC^2_0 \rightarrow \M\times \IC^2_0$ acting on higher-spin self-dual gravity solutions via \eqref{eq action of gauge symmetry on A} and resulting in the higher spin symmetry \eqref{HS SD gravity: gauge};
\label{item gauge symmetry diffeo}
\item  Choices of embedding $s : \M \hookrightarrow \HST$ of the type \eqref{eq choice of section} giving rise to higher-spin self-dual gravity solutions.
\label{item gauge symmetry section}
\end{enumerate}

At this stage, the fact than any symmetry of the type \ref{item gauge symmetry diffeo} can be seen as a symmetry of the type 
\ref{item gauge symmetry section} is easy to see and prove:
The double fibration on the correspondence space, see Figure \ref{Fig: Double fibration}, defines a 4-parameter family of holomorphic sections \eqref{Sections of T}. As we already pointed out, diffeomorphisms of the form \eqref{Diffeo in correspondence space} on the correspondence space then act on this 4-parameter family, via \eqref{eq action of the gauge symmetry on the curves}, producing a new 4-parameter family of holomorphic sections \eqref{Sections of T}:
\begin{align}
    F^{\dot{\beta}}(x^\mu, \lambda_\alpha) \qquad\mapsto\qquad  F^{\dot{\beta}}(\varphi^{\mu}(x, \lambda), \lambda_\alpha)
\end{align}
These two families are solutions of the same equation \eqref{eq holomorphicity of curves} (thanks to the assumption that $\varphi$ is holomorphic in $\lambda$) and define distinct sections of \eqref{Projection from HST to M}. One can then check that, consistently, the HS fields obtained on $\M$ via \eqref{HS field: pull back of potential} in this way are in fact gauge equivalent. Gauge equivalent solutions \eqref{eq action of gauge symmetry on A} are thus related to choices of different embeddings $\M \hookrightarrow \HST$.

The converse, i.e. the fact that any symmetry of the type 
\ref{item gauge symmetry section} can be seen as a symmetry
of the type \ref{item gauge symmetry diffeo} is much less obvious. 
The reason is that the space of possible embeddings
$s : \M \hookrightarrow \HST$ has little structure 
and is thus difficult to describe non linearly. The following Lemma 
will give a proof at infinitesimal level.

\begin{Lemma}\label{Lemma: gauge invariance}
	Two different choices of embeddings $\M \hookrightarrow \HST$ correspond (in the sense of Theorem \ref{theorem NLG}) to HS gauge fields which are gauge equivalent.
\end{Lemma}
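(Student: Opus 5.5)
We will work at the infinitesimal level, as announced before the statement. Let $s:\M\hookrightarrow\HST$ be an embedding and $s_t$ a one-parameter family of embeddings with $s_0=s$. Since both are sections of $\pi:\HST\to\M$ (Theorem \ref{theorem existence of solutions}), the variation $V=\frac{d}{dt}s_t|_{t=0}$ is a vertical vector field along $s(\M)$. Through ${\rm d}_xF^{\dot\alpha}$ it defines, at each $x\in\M$, a holomorphic section of the normal bundle $N\to X_{s(x)}$. We write it in the holomorphic frame as
\begin{equation*}
\delta F^{\dot\alpha}(H^{-1})_{\dot\alpha}{}^{\underline{\dot\alpha}}=\sum_{k}\lambda_{\alpha(k)}v^{\underline{\dot\alpha}\alpha(k)}(x).
\end{equation*}
Verticality means that the homogeneity-one part of $\delta F$ vanishes. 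By the triangular structure of \eqref{holomorphic embeding: structure of the equation}, the nonzero components then start at $k\geq 2$ in the diagonalised variables.

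The key step is to use the frame isomorphism \eqref{eq spacetime vielbein}, restricted to $s(\M)$. Given an $x$-dependent, $\lambda$-holomorphic normal section, we want to solve
\begin{equation*}
\iota_\eta\big(\tilde s^*E^{\underline{\dot\alpha}}\big)(x,\lambda)=v^{\underline{\dot\alpha}}(x,\lambda)
\end{equation*}
for a horizontal vector field $\eta=\sum_k\eta^{\mu\alpha(k)}(x)\lambda_{\alpha(k)}\partial_\mu$ on $\M\times\IC^2_0$, holomorphic in $\lambda$. At lowest order $\tilde s^*E^{\underline{\dot\alpha}}=e^{\underline{\dot\alpha}\alpha}\lambda_\alpha+O(\lambda^2)$, where $e^{\alpha\underline{\dot\alpha}}$ is the invertible self-dual tetrad of $\M$. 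The equation can therefore be solved order by order in $\lambda$. At each order it is an invertible linear equation for $\eta^{\mu\alpha(k)}$, with right-hand side built from lower-order $\eta$'s. This is exactly the inversion performed in the flat model, where $\delta\varphi^{\dot\alpha\alpha(l)}=\sum\eta^{\dot\delta\delta\alpha(n)}\partial_{\delta\dot\delta}\varphi^{\dot\alpha\alpha(q)}$. Since $v$ starts at homogeneity $\geq 2$, we get $\eta=O(\lambda)$. With this, $\eta$ generates a flow $\Phi_t$ of the form \eqref{Diffeo in correspondence space}.

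Next we compare the two deformations of the pulled-back potential $A=\tilde s^*\mathcal A$. Varying the section gives
\begin{equation*}
\delta A=\tilde s^*(\mathcal L_{V}\mathcal A)=\tilde s^*\big({\rm d}_x\iota_V\mathcal A+\iota_V\mathcal F\big).
\end{equation*}
We use $\mathcal F=E^{\underline{\dot\alpha}}\wedge E^{\underline{\dot\beta}}\epsilon_{\underline{\dot\alpha}\underline{\dot\beta}}$, so that $\iota_V\mathcal F=2\,v^{\underline{\dot\alpha}}E^{\underline{\dot\beta}}\epsilon_{\underline{\dot\alpha}\underline{\dot\beta}}$. After pulling back, this equals $2\,\iota_\eta(\tilde s^*E^{\underline{\dot\alpha}})\,\tilde s^*E^{\underline{\dot\beta}}\epsilon_{\underline{\dot\alpha}\underline{\dot\beta}}=\iota_\eta F$, by the choice of $\eta$ and the simplicity of $F=\tilde s^*\mathcal F$. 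Hence
\begin{equation*}
\delta A={\rm d}_x\big(\tilde s^*\iota_V\mathcal A\big)+\iota_\eta F,
\end{equation*}
which is an exact term $\delta A={\rm d}_x\chi$ plus the diffeomorphism term $\delta_\eta A-{\rm d}_x(\iota_\eta A)$. Both are gauge transformations \eqref{HS SD gravity: gauge}. Integrating along $t$ then connects any two embeddings, because the space of sections of the fibre bundle $\HST\to\M$ is connected, being modelled on a vector space of small deformations. Expanding in $\lambda_{\alpha(k)}$ recovers the component form of \eqref{HS SD gravity: gauge}.

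The main obstacle is the solvability step, $\iota_\eta(\tilde s^*E)=v$. Two things must be checked here. First, $v$ must have no $\bar\lambda$ dependence; this holds because $v$ is a holomorphic normal section and the normal bundle is trivial, as shown in the previous subsection. Second, the recursive inversion must converge on the domain \eqref{Boundness domain}. For convergence we would rely on the smallness assumptions and the restriction of $C$, as in the proof of Theorem \ref{theorem existence of solutions}.
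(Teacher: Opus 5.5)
Your proposal is correct and follows essentially the paper's proof: you identify the variation of the section with a holomorphic normal section via ${\rm d}_xF$, then solve $i^*\eta^{\dot\alpha}=\iota_\xi(i^*{\rm d}_{\bm x}F^{\dot\alpha})$ order by order in $\lambda$ using the invertibility of the spin-2 vielbein $E_\mu^{\dot\alpha\alpha}$. Working with the potential $A$ rather than $\mathcal F$, so that the exact piece ${\rm d}_x\chi$ appears explicitly, and integrating along a path of sections are harmless refinements of the paper's infinitesimal argument; one small correction is that for $k\geq 2$ the linear map determining $\eta^{\mu\alpha(k-1)}$ is surjective rather than invertible, which is all you need.
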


\begin{proof}
    
Two infinitesimally close, sections $i : \M\hookrightarrow\HST$ of \eqref{Projection from HST to M} can be parametrized as
\begin{align}\label{eq: two different choices of embeddings}
	      x^{\mu}  &\mapsto  \bm{x}^I(x^{\mu})& x^{\mu} \mapsto  \bm{x}^I(x^{\mu}) + \delta \bm{x}^I(x^{\mu})\;.
\end{align}
where
\begin{align}
	      \delta \bm{x}^I(x^{\mu})=: \bm{\eta}^I(x^{\mu})
\end{align}
can always be though of as the restriction of an extended vector field 
$\bm{\eta}^I(\bm{x})$ on whole $\HST$.
The proof will be independent of the latter extension. 
Now from the perspective of $\HST$, this vector field can be identified \emph{uniquely}, through ${\rm d}_{\bm{x}}F^{\dot\alpha}$, 
with a holomorphic section of the normal bundle over the corresponding curve 
$X : \mathbb{C}_0^2 \hookrightarrow \T$ in twistor space :
\begin{align}\label{eq: from eta I to section}
	\bm{\eta}^I(\bm{x})\partial_{\bm{x}^I} \mapsto \iota_{\bm{\eta}}{\rm d}_{\bm{x}} F^{\dot\alpha} \partial_{{\dot\alpha}} =: \eta^{\dot\alpha}(\bm{x}, \lambda)\partial_{{\dot\alpha}} =\sum_{s\geq 1}\eta^{\underline{\dot\alpha}\alpha(s)}(\bm{x}^I)\lambda _{\alpha(s)}
	\partial_{\underline{\dot\alpha}}\;,
\end{align}
where in the last equality we used an holomorphic basis, see \eqref{eq spacetime vielbein}.

Now,  under the change of embedding \eqref{eq: two different choices of embeddings}, 
the two-form 
$\mathcal{F} = {\rm d}_{\bm{x}}F^{\dot\alpha}\wedge {\rm d}_{\bm{x}}F_{\dot\alpha}$ 
is transformed as
\begin{align}\label{eq symmetry infini dim}
	\mathcal{L}_{\bm{\eta}}(\mathcal{F})  := \mathcal{L}_{\bm{\eta}}({\rm d}_{\bm{x}}F^{\dot\alpha}\wedge {\rm d}_{\bm{x}}F_{\dot\alpha}) 
    =  2\,{\rm d}_{\bm{x}}\left( \iota_{\bm{\eta}} 
    {\rm d}_{\bm{x}}F^{\dot\alpha}\right)\wedge {\rm d}_{\bm{x}}F_{\dot\alpha}
	= 2\,{\rm d}_{\bm{x}}{\eta}^{\dot\alpha}\wedge {\rm d}_{\bm{x}}F_{\dot\alpha}\;,
\end{align}
where $\eta^{\dot\alpha}$ is defined by \eqref{eq: from eta I to section}.

On the other hand, once the choice of embedding $i$ has been made, one can act on the two-form $i^*\mathcal{F}$ with a diffeomorphisms \eqref{eq action of gauge symmetry on A}. The corresponding infinitesimal action is:
\begin{align}\label{eq symmetry four dim}
	\delta_{\xi} (i^*\mathcal{F}) = {\rm d}_x\left(\iota_\xi i^*\mathcal{F}\right)
	= 2\,{\rm d}_x((\iota_{\xi}(i^*{\rm d}_{\bm{x}}F^{\dot\alpha})) 
    \;i^*{\rm d}_{\bm{x}}F_{\dot\alpha})
    = 2\,{\rm d}_x(\iota_{\xi}(i^*{\rm d}_{\bm{x}}F^{\dot\alpha}))
    \wedge i^*{\rm d}_{\bm{x}}F_{\dot\alpha}\;,
\end{align}
with $\xi(x^\mu,\lambda) = \sum_{l\geq 0} \xi^{\mu \alpha(l)}(x) \lambda_{\alpha(l)} \partial_{\mu}\,$.

We would like to relate the two type of transformations \eqref{eq symmetry infini dim} and 
\eqref{eq symmetry four dim} in such a way that the following diagram commute :
\begin{equation}\label{eq commutative diagram}
	\begin{tikzcd}
      \mathcal{F} \ \arrow[r, "\bm{\eta}", shorten=-1mm, end anchor={[xshift=6ex]west}] \arrow[d, "i^*"] &  \quad \quad \quad \mathcal{L}_{\bm{\eta}} \mathcal{F}  \arrow[d, "i^*", shift left=4.5ex]  \\
	  i^*\mathcal{F}  \arrow[r, "\xi"]  & \delta_\xi(\iota^* \mathcal{F}) = \iota^*(\mathcal{L}_\eta\mathcal{F})  
\end{tikzcd}
\end{equation}
Comparing \eqref{eq symmetry four dim} and \eqref{eq symmetry infini dim} one sees that one needs to have
\begin{align}\label{eq isomorphism of symmetries}
	i^*{\eta}^{\dot\alpha} = \iota_{\xi}(i^*{\rm d}_{\bm{x}}F^{\dot\alpha})\;.
\end{align}
In order to prove Lemma \ref{Lemma: gauge invariance} 
(and thus Theorem \ref{theorem NLG with gauge}), we therefore have to show 
that for every infinitesimally close choice of embedding
\eqref{eq: two different choices of embeddings} (i.e. for every
$\bm{\eta}$), there exists a symmetry
$\xi$ satisfying \eqref{eq isomorphism of symmetries}.

To see that it can always be done, make use of \eqref{eq spacetime vielbein} and \eqref{eq: from eta I to section} to rewrite \eqref{eq isomorphism of symmetries} as
\begin{align}\label{eq isomorphism 2}
	\eta^{\underline{\dot\alpha}\alpha(s)}(x^\mu)\lambda_{\alpha(s)}
	= \sum_{k+l = s}\xi^{\mu\alpha(l)}E_\mu^{\underline{\dot\alpha}\alpha(k)}(x^\mu)
	\lambda_{\alpha(k+l)},\qquad\qquad \forall s\in \mathbb{N}.
\end{align}
This can then be solved recursively, order by order in $s$:
For $s= 1$, equation \eqref{eq isomorphism 2} is
\begin{align}
	\eta^{\dot\alpha\alpha} = \xi^\mu E_\mu^{\dot\alpha\alpha}, 
\end{align}
which can be inverted because, from the usual nonlinear graviton theorem, $E_\mu^{\dot\alpha\alpha}$ is the vielbein for the spin $2$ field.

For the equation at order $s$, assuming all the lower
spins have been solved, i.e. that all the 
$\xi^{\mu\alpha(l)}$ are known 
for $l \leq s-2$, then equation \eqref{eq isomorphism 2} is
\begin{align}
	\eta^{\dot\alpha\alpha(s)} = 
	\xi^{\mu\alpha(s-1)}E_\mu^{\dot\alpha\alpha}
	+ \sum_{l \leq s-2, \ k+l = s} \xi^{\mu\alpha(l)} E_\mu^{\dot\alpha\alpha(k)},
\end{align}
from which we can express $\xi^{\mu\alpha(s-1)}$
thanks to the invertibility of $E_\mu^{\dot\alpha\alpha}
$.
\end{proof}

\subsection{Other notions of higher-spin spacetime in the literature.}

There are not many consistent, interacting higher-spin gauge theories
for propagating fields in a spacetime of Lorentzian signature.
A large body of works, including the construction of exact solutions
to interacting higher-spin field equations, has rested upon Vasiliev's 
model as formulated in \cite{Vasiliev:1990en}. 
The latter model is fully consistent\footnote{It however shows a 
severe nonlocality issue when one tries to express the nonlinear dynamics 
perturbatively around (A)dS$_4$, as shown in \cite{Boulanger:2015ova}.
That the theory should be nonlocal in spacetime had already been foreseen in 
\cite{Vasiliev:1999ba}, but contrary to String Theory where there are consistent
and local low-energy supergravity limits, Vasiliev's theory does not seem 
to admit any consistent interacting limit which would be local in spacetime, 
and at the same time which would preserve its infinite-dimensional higher-spin 
symmetry. For related discussions about possible and tractable 
higher-spin symmetry breaking mechanisms, see \cite{Bekaert:2010hw}.}
and admits in its free limit an infinite spectrum of massless higher-spin
gauge fields of unbounded spin propagating on (anti) de Sitter 
four-dimensional spacetime. 

In Appendix \ref{Appendix:Vasiliev} we propose a critical review 
of the infinite-dimensional higher-spin spacetime proposed by the authors 
of \cite{Sezgin:2011hq}, and make a comparison with the construction 
obtained in the present paper for self-dual higher-spin gravity.

In the Cartan-like reformulation of Vasiliev's equations 
proposed in \cite{Sezgin:2011hq}, the way the usual spacetime 
$\cal M$ is embedded in the infinite-dimensional higher-spin spacetime
${\cal M}_{HS}$ is determined by the solution, at $Z=0$ -- where 
$Z$ collectively denotes the non-commutative twistor-like coordinates 
in Vasiliev's formulation \cite{Vasiliev:1990en} -- of the field 
equations for the various fields, and this should be compared with 
the space of possible embeddings $s : \M \hookrightarrow \HST$ in 
self-dual higher-spin gravity. 
The latter space has little structure at the nonlinear level, 
as explained before Lemma \ref{Lemma: gauge invariance}. 
In the same way, for Vasiliev's theory the embedding 
$\M \hookrightarrow {\cal M}_{HS}$ is also complicated to describe 
because of the high degree of nonlinearity of Vasiliev's equation, 
not to mention its nonlocal nature.

\section{Plebanski potentials}\label{section: Plebanski potentials}

In this section we explain how the Plebanski potential formulation of higher-spin self-dual gravity from \cite{Monteiro:2022xwq} can be recovered from our twistor space construction.

The main motivation for this, apart from being a reality check, is that, as we shall see, the derivation extends straightforwardly to the Plebanski formulation of $chs(0)$ (the Poisson contraction of Chiral higher spin gravity, see \cite{Monteiro:2022xwq}): negative helicity fields  then lead to  singularities of the holomorphic planes $\mathbb{C}_0^2 \hookrightarrow \T$ as $\lambda_\alpha\rightarrow 0$. This strongly suggests the existence of a nonlinear graviton theorem relating generic twistor space deformations, i.e. not assuming \ref{item bounded deformation}, to solutions of $chs(0)$. See more on this in the final discussion.

\subsection{Plebanski potentials for spin $s\geq 2$ fields}\label{subsec hdSDG potentials}
\label{subsec: Plebanski formulation for Higher spins SDG}

We will here recover the Plebanski potential formulation of higher-spin self-dual gravity \cite{Monteiro:2022xwq} from our twistor space construction: We start by showing how these potentials can be derived from a natural gauge fixing of the holomorphic planes in twistor space and we then recover the fields equations for every positive helicity $s$ field.

An integrable $\bar \partial$-operator can always be trivialized on a Stein (suitably convex)  open set by introducing holomorphic coordinates on that set.  These coordinates can be smoothly extended away from that set using bump functions to provide new coordinates for which the $\bar{d}$-operator is trivial on the original open set.  We now do so on some small neighborhood $V_{\lambda_2=0 }$ of $\lambda_2=0$.  On this set the  holomorphic plane $F^{\dot\alpha}$ will be holomorphic and so we expand in a Taylor series in $\lambda_2$ around $\lambda_2=0$. 

In general $F^{\dot\alpha}$ will be expanded as
\begin{align}\label{eq spin s expansion of F}
	F^{\dot\alpha} = \lambda_1z^{\dot\alpha} + 
	\lambda_2 w^{\dot\alpha} +
	\sum_{s\geq 2} \left(\frac{\lambda_2^s}{\lambda_1}\theta_s^{\dot\alpha}
	+ \frac{\lambda_2^{s+1}}{\lambda_1^2}\varphi_s^{\dot\alpha}
	+ \frac{\lambda_2^{s+2}}{\lambda_1^3}\chi_s^{\dot\alpha} \right)
	+...
\end{align}

 We know from the previous sections that in the SD-HSGR theory we can rewrite this holomorphic two-form as a 
sum on the different homogeneities in $\lambda_\alpha$ :
\begin{align}
	{\rm d}_xF^{\dot\alpha}\wedge {\rm d}_xF_{\dot\alpha}= 
\sum_{s}\lambda_{\alpha(s)}
\Sigma^{\alpha(s)}_{[2]}
\end{align}
where 
\begin{align}
	\Sigma^{\alpha(s)} = \sum_{s_1,s_2 ~s.t.~ s_1+s_2=s}
	E^{\alpha(s_1)\underline{\dot\alpha}}(X)\wedge 
	E^{\beta(s_2)\underline{\dot\beta}}(X) \epsilon_{\underline{\dot\alpha}\underline{\dot\beta}}.
\end{align}
Therefore we can analyse ${\rm d}_xF^{\dot\alpha}\wedge {\rm d}_xF_{\dot\alpha}$
for each homogeneity $s$ separately and then sum over all of them. 
We can compute that the homogeneity $s>2$ terms are given by
\begin{align}
	\lambda_{\alpha(s)}\Sigma^{\alpha(s)}
	& =  \ \lambda_2^s(2 {\rm d}z^{\dot\alpha} \wedge {\rm d}\theta_{s\dot\alpha})
	+ \frac{\lambda_2^{s+1}}{\lambda_1} \left(2 {\rm d}z^{\dot\alpha}\wedge 
    {\rm d}\varphi_{s\dot\alpha} + 2 {\rm d}w^{\dot\alpha} 
    \wedge {\rm d}\theta_{s\dot\alpha}\right) \\
	 + & \ \frac{\lambda_2^{s+2}}{\lambda_1^2} 
	\left(2 {\rm d}z^{\dot\alpha} \wedge {\rm d}\chi_{s\dot\alpha}
	+ 2 {\rm d}w^{\dot\alpha}\wedge {\rm d}\varphi_{s\dot\alpha}
	+ \sum_{\overset{s_1,s_2 \text{ st }}{s_1+s_2= s+2}}
	2 {\rm d}\theta_{s_1}^{\dot\alpha} \wedge {\rm d}
    \theta_{s_2\dot\alpha} \right) \\
	+ & \  o (\lambda_1^{-2})\;.
\end{align}

This two-form being holomorphic and global, 
we impose that the poles in $\lambda_1$ vanish, and this gives the two conditions 
\begin{align}
	{\rm d}z^{\dot\alpha}\wedge {\rm d}\varphi_{s\dot\alpha} 
    + {\rm d}w^{\dot\alpha} \wedge {\rm d}\theta_{s\dot\alpha} = 0
	\label{eq spin s first vanishing cond} \\
	2 {\rm d}z^{\dot\alpha}\wedge {\rm d}\chi_{s\dot\alpha}
	 + 2 {\rm d}w^{\dot\alpha} \wedge {\rm d}\varphi_{s\dot\alpha}
	 + \sum_{\overset{(s_1,s_2) \text{ st }}{s_1+s_2- s=2}}
	{\rm d}\theta_{s_1}^{\dot\alpha} \wedge {\rm d}\theta_{s_2\dot\alpha} =0\;.
	 \label{eq spin s second vanishing cond}
\end{align}
The first condition \eqref{eq spin s first vanishing cond} is solved by the existence of potentials $\theta_s$, for each $s$,
that we will call the \emph{Plebanski potentials} for the spin $s$ field :
\begin{align}
	\eqref{eq spin s first vanishing cond} \iff 
	\exists \;\theta_s, \varphi_s \text{ st. } 
	\theta_{s \dot\alpha} = \frac{\partial \theta_s}{\partial w^{\dot\alpha}} \  \text{ and } \ \varphi_{s\dot\alpha} = \frac{\partial \varphi_s}{\partial z^{\dot\alpha}} \notag
	\\ \text{ and }\ \left( \frac{\partial^2 \theta_s}{\partial z^{\dot\beta}\partial w_{\dot\alpha}} = \frac{\partial^2 \varphi_s}{\partial w_{\dot\alpha}\partial z^{\dot\beta}}
	\iff  \frac{\partial\theta_{s\dot\alpha}}{\partial z^{\dot\beta}} - \frac{\partial\varphi_{s\dot\beta}}{\partial w^{\dot\alpha}} = 0 \right) \label{eq plebanski potentials}
.
\end{align}
Taking the ${\rm d}w\wedge {\rm d}w$ part of the second condition 
\eqref{eq spin s second vanishing cond}, and
using \eqref{eq plebanski potentials}, one recovers
\begin{align}
	 2 \frac{\partial^2\theta_{s}}{\partial z^{[\dot\delta}\partial w^{\dot\gamma]}} +
	\sum_{\overset{(s_1,s_2) \text{ st }}{s_1+s_2-s=2}}
	 \frac{\partial^2\theta_{s_1}}{\partial w^{[\dot\delta}\partial w_{\dot\alpha}}\frac{\partial^2\theta_{s_2}}{\partial w^{\dot\gamma]}\partial w^{\dot\alpha}} = 0\;.
\end{align}

This is exactly the same formula as in \cite{Monteiro:2022xwq} for higher-spin self-dual gravity : 
\begin{align}
	\square \phi_s +\sum_{\overset{s_1,s_2 \geq 2}{s_1 +s_2 -s = 2}} \phi_{s_1} \overset{\leftrightarrow}{P^2} \phi_{s_2} = 0\;.
\end{align}
with
\begin{align}
	\square & := \epsilon^{\alpha\beta}\epsilon^{\dot\alpha\dot\beta}\frac{\partial}{\partial x^{\dot\alpha\alpha}}\frac{\partial}{\partial x^{\dot\beta\beta}} = 2 \epsilon^{\dot\alpha\dot\beta}
	\frac{\partial}{\partial z^{\dot\alpha}}\frac{\partial}{\partial w^{\dot\beta}} \\
	\text{ and }\quad f\ \overset{\leftrightarrow}{P}^2 g
	& := \epsilon^{\dot\alpha\dot\gamma}\epsilon^{\dot\beta\dot\delta}\partial_{w^{\dot\alpha}}\partial_{w^{\dot\beta}}f
	\partial_{w^{\dot\gamma}}\partial_{w^{\dot\delta}}g\;.
\end{align}

\subsection{Plebanski potentials for spin $s < 2$ fields}

From \cite{Monteiro:2022xwq}, one sees that the equations of motion for the spin $s$ field of the HS theory
chs(0) have the
exact same form, namely:
\begin{align}
	\square \phi_s +\sum_{\overset{s_1,s_2 st.}{s_1 +s_2 -s = 2}} \phi_{s_1} \overset{\leftrightarrow}{P^2} \phi_{s_2} = 0\;,
\end{align}
but now the spins are allowed to be smaller than 2.
Furthermore, in the computations above concerning the higher-spin self-dual gravity Plebanski potentials, we have actually never used the fact 
that $s, s_1,s_2$ needed to be positive or greater than 2.

This means that we can extend our ansatz for the holomorphic plane \eqref{eq spin s expansion of F} to all spins, positive and negative :
\begin{align}\label{eq spin s negative expansion of F}
	F^{\dot\alpha} = \lambda_1z^{\dot\alpha} + 
	\lambda_2 w^{\dot\alpha} +
	\sum_{s\in \mathbb{Z}} \left(\frac{\lambda_2^s}{\lambda_1}\theta_s^{\dot\alpha}
	+ \frac{\lambda_2^{s+1}}{\lambda_1^2}\varphi_s^{\dot\alpha}
	+ \frac{\lambda_2^{s+2}}{\lambda_1^3}\chi_s^{\dot\alpha} \right)
	+o(\lambda_1^{-3})\;.
\end{align}

Proceeding in the exact same way as section \ref{subsec hdSDG potentials} 
will give the same formulae for homogeneity $s$ terms, even with $s<2$. 
With this we recover the same equations, extended to all $s \in 
\mathbb{Z}$, for the theory called chs(0) in \cite{Monteiro:2022xwq}.
This theory including both positive and negative spins is also called Poisson 
chiral higher spin theory \cite{Ponomarev:2017nrr}.

The drastic change lies in the geometrical interpretation of the expansion \eqref{eq spin s negative expansion of F}. 
In fact, when considering negative homogeneities, we will still
ask for the 
vanishing of the poles in $\lambda_1$, producing the same expected equations, but this time we can see from \eqref{eq spin s negative expansion of F} that the fields with $s<0$ will inevitably introduce poles in $\lambda_2$. 
It seems therefore that one needs to choose a specific direction, 
$\lambda_1$ or $\lambda_2$, in which the two-form will have singularities.
The physical interpretation of these singularity still needs to be investigated, as well as the twistor theory behind it.

\section{Discussion}

In this article we studied a generalisation of Penrose's nonlinear graviton theorem \cite{Penrose:1976js} where, instead of considering complex deformation $\textrm{P}\T$ of the projective twistor space $\mathbb{PT}$ preserving the Poisson structure along the holomorphic fibration $\mathbb{PT} \to \mathbb{C}\textrm{P}^1$, we considered a complex deformation $\T$ of the twistor space $\mathbb{T}$: The crucial difference lies in the relaxed requirement \ref{item euler not holom} on the Euler vector field, which generates the quotient $\mathbb{T} \to \mathbb{PT}$. As a result, the complex deformations that we considered are not deformations of $\PT$ but they are rigid enough to be interpreted as being generated by higher-spin fields, see Proposition \ref{Prop decomposition of deformations}. In analogy with Penrose's classical result, the higher-spin space $\HST$ is then defined as the space of holomorphic embeddings $\mathbb{C}^2_0 \hookrightarrow \T$.

Under these general assumptions, negative helicity higher-spin fields will source the deformation and, thus, one is forced to allow for singularities in the embeddings $\mathbb{C}^2_0 \hookrightarrow \T$. This is consistent with the fact that, as discussed in section \ref{section: Plebanski potentials}, such singularities are necessary to obtain the Plebanski formulation of (the Poisson contraction of) chiral higher-spin gravity. Therefore, singularities in the embedding of the spacetime point seem to be a key feature of the presence of negative helicity fields, a point which was not apparent from \cite{Mason:2025pbz}.

On the other hand, the appearance of negative helicity fields, and related singularities of the planes, can be prevented by the extra assumption that the complex deformations are bounded at the origin \ref{item bounded deformation}, which is the main setup of this work. Under this extra assumption, we proved (see Theorem \ref{theorem existence of solutions}) that the moduli space $\HST$ of holomorphic embeddings $\mathbb{C}^2_0 \hookrightarrow \T$ is a infinite dimensional complex manifold and canonically projects $\HST \to \M$ on a usual self-dual four-dimensional  spacetime. This leads to a generalization of the nonlinear graviton theorem, summarized in Theorems $\ref{theorem NLG}$  and $\ref{theorem NLG with gauge}$, identifying complex deformations of twistor space with solutions of higher-spin self-dual gravity. An novel feature of the construction is that higher-spin symmetries are here geometrically realized as choices of embeddings $\M\hookrightarrow \HST$ of spacetime $\M$ into the higher-spin space.\\

There are two main outcomes of the present article, whose investigation is left for future work. On the one hand, our twistor construction naturally produces a notion of higher-spin geometry, in the form of an infinite dimensional complex manifold equipped with an infinite number of higher-spin frames, the higher-spin symmetry appearing as a result of the restriction of this infinite dimensional geometry to a four-dimensional slice. On the other hand, it naturally suggests an extension of our results to include negative helicity interactions. This is because the boundness assumption \ref{item bounded deformation} here seems to merely play the role of excluding such interactions. As we already emphasized, however, such negative helicity interactions have to come, in the present setting, hand-in-hand with singularities in the holomorphic embeddings $\mathbb{C}^2_0 \hookrightarrow \T$. The geometrical understanding, as well as the physical interpretation, of such singularities is likely to come with interesting subtleties that we leave for future work. Finally, Penrose's original result \cite{Penrose:1972ia} was intrinsically an equivalence between conformal self-dual metric and deformation of (projective) twistor space, with Einstein metrics coming as a further restriction. In contrast, for conformal higher-spin theory 
\cite{Segal:2002gd,Tseytlin:2002gz}, 
while related twistor actions have been written some time ago already \cite{Hahnel:2016ihf,Adamo:2016ple}, 
the question as to whether or not there exists a related nonlinear graviton theorem, and how it would relate to the present results, is still open.

\section*{Acknowledgements}

N.B. is grateful to discussions with Thomas Basile, Carlo Iazeolla and 
Evgeny Skvortsov. His work was partially supported by the F.R.S.-FNRS PDR 
grant number T.0047.24. 
The authors would like to thank the Institut Denis Poisson (U. Tours) 
and UMONS for hospitality. N.P. is a FNRS grantee
of the Fund for Scientific Research — FNRS, Belgium.
N.P. wants to thank the SophinaBo\"el foundation and the Mathematical Institute
of Oxford for enabling an internship where part of this work was accomplished. 

The authors are also happy to acknowledge that part of the project was 
carried out during the meetings \href{https://web.umons.ac.be/pucg/en/event/workshop-twistors-and-higher-spins/}{\tt Twistors and Higher Spins} and 
\href{https://web.umons.ac.be/pucg/en/event/workshop-conformal-higher-spins-twistors-and-boundary-calculus/}{\tt Conformal higher spins, twistors and 
boundary calculus} that took place at UMONS on July 2024 and 2025 respectively, 
and would like to thanks the organisers for the invitation.

\appendix 
\section{higher-spin self-dual gravity fields in terms of their Plebanski potentials}

In this appendix we establish the explicit relation 
between the Plebanski potentials from Section \ref{section:  Plebanski potentials} 
and the spacetime fields deduced from twistor space. 
We also compute the Lax pairs and the dual frames from these Plebanski 
potentials. 
We also check that we consistently recover the correct fields 
equations from this relation. 
The relationship is given by putting the HS fields in the gauge :

 \begin{equation}
{\rm d}x^{\alpha\dot\alpha}A_{\alpha\dot\alpha}^{\beta(k)}
= -2\,o_\alpha {\rm d}x^{\alpha\dot\alpha}
o^\gamma\partial_{\gamma\dot\alpha}\theta_{k+2} o^{\beta(k)}\, , 
\qquad o_\alpha x^{\alpha\dot\alpha}= z^{\dot \alpha}  \, , \quad 
o^{\alpha}\partial_{\alpha\dot\alpha}=-\partial/\partial w^{\dot\alpha}\; .
 \end{equation}
Where $o_\alpha $ is a choice of reference spinor and is part of the gauge choice. 
In linear theory the existence of such a gauge potential 
was proved in \cite{Penrose:1965am} and follows from the Penrose 
transform.  
With this
\begin{equation}
    A(x,\lambda)=A_0 + 2\,{\rm d}z^{\dot\alpha} \sum_{k} \lambda_2^k 
    \theta_{k \dot\alpha} \, ,\qquad \partial_{w^{\dot\alpha}}  
    \theta_k =\theta_{k \dot\alpha}\, , 
    \qquad \lambda_2=\lambda_\alpha o^\alpha \;.
\end{equation}
Here $A_0 := x^{\alpha}{}_{\dot \alpha }{\rm d}x^{\dot\alpha \alpha}\lambda_{\alpha(2)}$
is the potential of a flat background: one can check that 
\begin{align}
	A_0 = -\lambda_1^2 z_{\dot\alpha}{\rm d}z^{\dot\alpha} 
    - \lambda_1\lambda_2(z_{\dot\alpha}{\rm d}w^{\dot\alpha} 
    - w_{\dot\alpha}{\rm d}z^{\dot\alpha}) 
    + \lambda_2^2 w_{\dot\alpha} {\rm d}w^{\dot\alpha}\;,
\end{align} 
and thus, 
\begin{equation}\label{eq dA0}
    {\rm d} A_0= \lambda_1^2{\rm d}z^{\dot\alpha}\wedge {\rm d}z_{\dot \alpha} 
    + 2\lambda_2\lambda_1 {\rm d} z^{\dot\alpha} \wedge {\rm d}w_{\dot\alpha}
    + \lambda_2^2 {\rm d} w^{\dot \alpha}\wedge {\rm d} w_{\dot\alpha}\;,
\end{equation}
which automatically satisfies ${\rm d}A_0\wedge {\rm d}A_0 = 0\,$. 

It is also possible to identify the Lax pair (i.e. the vector fields generating the distribution \eqref{Correspondence space: lax pair}) as
\begin{equation}
    L_{\dot\alpha}= \lambda_1 \partial_{w^{\dot\alpha}} -\lambda_2 \left(\partial_{z^{\dot\alpha}} -2\,\sum_k \lambda_2^{k-2} \theta_{k\dot \alpha}{}^{\dot\beta}\partial_{w^{\dot\beta}}\right)\;,
\end{equation}
and the frame, that annihilates the Lax pair, 
$E^{\dot\alpha}(L_{\dot\beta}) = 0\,$, is 
\begin{equation}
	E^{\dot\alpha} = \lambda_1 {\rm d}z^{\dot\alpha}
	+ \lambda_2 \left({\rm d}w^{\dot\alpha} 
	+ 2\,\sum_k \lambda_2^{k-2}
    \theta_k^{\dot\alpha}{}_{\dot\beta}\,dz^{\dot\beta}\right)\;.
\end{equation}
With these results, one can check that imposing 
${\rm d} A \wedge {\rm d}A = 0$ gives the general spin $s$ equations 
(here for even spins)
\begin{align}
	{\rm d}A \wedge {\rm d}A &= 
	-d^2 z\wedge d^2w\; \sum_{n \in \mathbb{N}}\lambda_2^{2n+2}\left(\square \theta_{2n}
	+  \sum_{\overset{p,q st.}{p+q = n+1}}
	\theta_{2p}\ \overset{\leftrightarrow}{P}^2 \ \theta_{2q}  \right) 
\end{align}
with
\begin{align}
	\square & := \epsilon^{\alpha\beta}\epsilon^{\dot\alpha\dot\beta}\frac{\partial}{\partial x^{\dot\alpha\alpha}}\frac{\partial}{\partial x^{\dot\beta\beta}} = 2 \epsilon^{\dot\alpha\dot\beta}
	\frac{\partial}{\partial z^{\dot\alpha}}\frac{\partial}{\partial w^{\dot\beta}} \\
	\text{ and }\quad f\ \overset{\leftrightarrow}{P}^2 g
	& := \epsilon^{\dot\alpha\dot\gamma}\epsilon^{\dot\beta\dot\delta}\partial_{w^{\dot\alpha}}\partial_{w^{\dot\beta}}f
	\partial_{w^{\dot\gamma}}\partial_{w^{\dot\delta}}g.
\end{align}

\section{Fall-off conditions at the origin}\label{Apdx: Euclidean fields}

In this section we expand on the fall-off conditions at the origin of twistor space for the Penrose representatives $\h_{(k)} \in H^{(0,1)}(\mathbb{PT},\mathcal{O}(k))$. 

The discussion is mostly explicitly done in Euclidean signature for which
\begin{equation}
    x^{\alpha\dot\alpha} = \frac{\mu^{\dot\alpha}\hat\lambda^\alpha- \hat\mu^{\dot\alpha}\lambda ^\alpha}{\langle \lambda\hat\lambda\rangle}\;,
\end{equation}
where $\langle \lambda\hat\lambda\rangle = \lambda_{\alpha}\hat\lambda^{\alpha}$ is the $SU(2)$ inner-product corresponding to the choice of Euclidean reality condition. 

Making use of Woodhouse gauge (i.e. harmonic gauge on the $\lambda$-sphere) \cite{Woodhouse:1985id}, one has, for $ n\geq 0$:
\begin{align}
    \h_{(n-1)}&= \frac{{\rm d}x^{\dot\alpha \beta}\hat \lambda_\beta}{\langle \lambda\hat\lambda\rangle}\phi_{\dot\alpha \alpha(n)}\lambda^{\alpha(n)}\; ,  \\
    \h_{(-n-2)}&=\frac{1}{\langle \lambda\hat\lambda\rangle^{n+2}}\left( \hat{\lambda}^{\gamma}{\rm d}\hat\lambda_{\gamma} \psi_{\alpha(n)}\hat\lambda^{\alpha(n)} + {\rm d}x^{\dot\alpha \gamma}\hat\lambda_{\gamma} \hat\lambda^\beta \nabla_{\dot\alpha\beta}\psi_{\alpha(n)} \hat\lambda^{\alpha(n)}\right)\;,
\end{align}
where $\phi_{\dot\alpha \alpha(n)}(x)$, and $\psi_{\alpha(n)}(x)$ are the corresponding fields on spacetime. 

In order for $\phi$ and $\psi$ to be non-trivial solutions of the zero-rest mass equation on $\mathbb{R}^4$ they will have to diverge as $x\rightarrow\infty$: as a result,  when $\mu$ is kept fixed, one expects singularity for $\h_{(k)}$ along $\lambda=0$.  

Because of this, the Penrose representatives will in general be singular at the origin. However, if one restricts to $|\mu|< R|\lambda|$, i.e. to points $x$ in some open ball around $x=0$, then $\h_{(k)}$ will remain bounded.

\section{Infinite-dimensional higher-spin spacetime in Vasiliev's theory}
\label{Appendix:Vasiliev}

In this appendix we review the proposal of \cite{Sezgin:2011hq}, 
in the context of the bosonic models describing an infinite tower of 
integer-spin gauge fields propagating on AdS$_4$. Along those lines, an 
action principle for Vasiliev's equations and various extensions thereof 
were given in \cite{Boulanger:2011dd,Boulanger:2015kfa}. We first 
very briefly summarise some key aspects of Vasiliev's equations.
Detailed, technical expositions can be found e.g. in 
\cite{Vasiliev:1999ba,Didenko:2014dwa}.
Here we mostly follow \cite{Sezgin:2011hq,Boulanger:2011dd}.

\paragraph{Vasiliev's bosonic models.} 
The fields, a zero-form and a one-form $(\hat{\Phi},\hat{A})$,
entering Vasiliev's description \cite{Vasiliev:1990en}, 
are elements of a noncommutative, 
associative algebra 
\begin{equation}
 \hat{\cal A}=\Omega({\cal B})\otimes {\cal Y}   
\end{equation}
where $\Omega({\cal B})$ is the space of differential forms on a 
noncommutative base manifold ${\cal B}$,
and ${\cal Y}$ is an associative, noncommutative higher-spin algebra.
The base manifold ${\cal B}$ receives a trivial bundle structure 
\begin{align}
{\cal B} = {\cal Z}_4 \times {\cal M}_{HS}    
\end{align}
where ${\cal Z}_4$ is a 4-dimensional noncommutative 
space with topology\footnote{Other topologies can be taken, 
like for example $S^2\times S^2$. 
See \cite{Iazeolla:2007wt,Boulanger:2015kfa} 
for detailed discussions of reality conditions and topologies 
to be imposed on ${\cal Z}_4$.} ${\mathbb C}^4$, 
whose coordinates are usually presented as a pair 
of $SL(2,\mathbb{C})$ doublets 
$Z^A = ( z^{\alpha} , \bar{z}^{\dot{\alpha}} )\,$,  
and ${\cal M}_{HS}$ is a commuting manifold with local coordinates 
$X^{M}$ that contains the usual four-dimensional spacetime with 
local coordinates $x^\mu\,$, $\mu = 0, \ldots, 3$.
The exterior derivative on $\cal B$ is locally given by 
${\rm d} = {\rm d}X^{\underline{M}}\, {\partial}_{\underline{M}} 
+ {\rm d}z^\alpha \frac{\partial}{\partial z^\alpha}
+ {\rm d}\bar z^{\dot\alpha} \frac{\partial}{ \partial \bar z^{\dot\alpha}}\,$.
The associative higher-spin algebra ${\cal Y}$ that plays the role 
of the fiber in the trivial bundle ${\cal B}\times {\cal Y}$ 
is the algebra of polynomials\footnote{That is often extended 
to functions or even distributions in the variables $Y^A$.} 
in the four variables 
$Y^A = ( y^{\alpha} , \bar{y}^{\dot{\alpha}})$,  
equipped with the associative Moyal star product on 
each doublet of variables. We denote by the symbol $\star$ the associative, 
noncommutative product on $\hat {\cal A}\,$.
In the Weyl ordering scheme, the variables $Y^A$ and $Z^A$ 
are decoupled, $[Y^A,Z^B]_\star = 0\,$, as it should for a clear 
separation between the base manifold $\cal B$ and the fiber $\cal Y$. 
The Vasiliev equations really are operator-valued equations, and various 
ordering schemes can be adopted for the symbols of the operators.

The associative algebra $\hat{\cal A}= \Omega({\cal B})\otimes {\cal Y}$
admits two automorphisms $(\pi,\bar{\pi})$ that commute with 
de Rham's differential on $\cal B$ and whose actions are defined by
\begin{align}
\pi(X^{M} , y^\alpha, \bar{y}^{\dot\alpha}; z^\alpha, \bar{z}^{\dot\alpha})
& = (X^{M} , -y^\alpha, \bar{y}^{\dot\alpha}; -z^\alpha, \bar{z}^{\dot\alpha})\;,
\\
\bar\pi(X^{M} , y^\alpha, \bar{y}^{\dot\alpha}; 
z^\alpha, \bar{z}^{\dot\alpha})
& = (X^{M} , y^\alpha, -\bar{y}^{\dot\alpha}; z^\alpha, -\bar{z}^{\dot\alpha})\;.
\end{align}
The bosonic, higher-spin Vasiliev models \cite{Vasiliev:1990en} 
are selected by imposing the two conditions 
\begin{align}\label{bosonic_model}
\pi\bar\pi(\widehat{\Phi},\widehat{A} ) =  (\widehat{\Phi},\widehat{A} )\;.
\end{align}
Vasiliev's models contain a scalar field 
$\phi(x^\mu)$ on ${\cal M}_{HS}$,
as a distinguished component (and projection) of $\hat{\Phi}$. 
Depending on whether one
assigns to $\phi$ an intrinsic spacetime parity $\pm 1$ and demanding 
that the nonlinear field equations are parity invariant, one obtains 
two bosonic models for which the field equations read
\begin{align}\label{VE}
\quad \hat{F} := {\rm d} \hat{A} + \hat{A}\star \hat{A} 
= \frac{i}{4}\,(b\,d^2z \;\hat{\Phi}\,\star \kappa 
+ \bar b\,d^2\bar z \;\hat{\Phi}\,\star \bar\kappa)  \;,\quad 
\hat{D}\hat{\Phi} := {\rm d}\hat{\Phi} + [\hat{A},\hat{\Phi}]^\star_\pi = 0\;, 
\end{align}
where in Weyl order\footnote{In the original work \cite{Vasiliev:1990en}, 
a different ordering scheme was adopted for the symbols of the operators. 
In the normal ordering used in \cite{Vasiliev:1990en}, one has the 
Gaussian-like expressions $\kappa = \exp{(i y^\alpha z_\alpha)}$ and 
$\bar\kappa = \exp{(i \bar y^{\dot\alpha} \bar z_{\dot\alpha})} \,$.}, 
one has
$\kappa = 2\pi \,\delta^2(y)\star 2\pi\, \delta^2(z)\,$,
$\bar\kappa = 2\pi\, \delta^2(\bar y)\star 2\pi \,\delta^2(\bar z)$,
and 
\begin{align}
&{\rm {Type~A}}\;({\rm parity~even~scalar}):\qquad b=1\;,\\
&{\rm {Type~B}}\;({\rm parity~odd~scalar}):\qquad b=i\;.
\end{align}

\paragraph{Gauge algebra.} 

Upon introducing the anti-automorphism $\tau$ that acts as 
\begin{align}
\tau(X^{M} , y^\alpha, \bar{y}^{\dot\alpha}; 
z^\alpha, \bar{z}^{\dot\alpha})
= (X^{M} , i y^\alpha, i \bar{y}^{\dot\alpha}; -i z^\alpha, 
-i \bar{z}^{\dot\alpha})\;,
\end{align}
one may further restrict the type A and B models to 
what are called the two \emph{minimal} models that propagate 
only the even-spin fields around (A)dS$_4$ 
background\footnote{The reality conditions that are necessary to 
select the de Sitter (dS$_4$) or the anti-de Sitter (AdS$_4$) backgrounds
can be found e.g. in \cite{Iazeolla:2007wt}, as well as the 
reality conditions for other -- but non-flat -- backgrounds.}:
\begin{align}\label{minimal_bosonic_constraint}
    \tau(\hat{A},\hat{\Phi}) = (-\hat{A},\pi(\hat\Phi))\;.
\end{align}
It can be seen that the above constraints imply \eqref{bosonic_model}.

Vasiliev's nonlinear equations are invariant under the following gauge 
transformations:
\begin{align}
    \delta_{\hat{\epsilon}}\hat{A} = \hat D \hat{\epsilon}
    ={\rm d}\hat\epsilon+[\hat{A},\hat{\epsilon}]_\star\;,
    \quad \delta_{\hat{\epsilon}}\hat{\Phi} 
    = -[\hat{\epsilon} , \hat{\Phi}]^\star_\pi\;, 
\end{align}
for the locally-defined zero-form infinitesimal gauge parameters 
$\hat{\epsilon}(x,Y,Z)$
that obey the same kinematic constraints as the one-form field 
$\hat{A}(x,Y,Z)$, i.e., $\tau(\hat{\epsilon}) = - \hat{\epsilon}\,$. 
The closure of the gauge transformations reads
\begin{align}
    [\delta_{{\hat{\epsilon}}_1}, \delta_{{\hat{\epsilon}}_2}]
    = \delta_{{\hat{\epsilon}}_{_{12}}}\;,
    \quad {\hat{\epsilon}}_{_{12}} := 
    [{{\hat{\epsilon}}_1, {\hat{\epsilon}}_2}]_\star\;,
\end{align}
thereby def ining the complex Lie algebra 
\begin{align}
\hat{\mathfrak{g}}=\mathbb{C}\otimes_\mathbb{R}\hat{\mathfrak{hs}}(4) 
= \left\{ \hat{P}(Y,Z)~:~\tau(\hat{P}) = - \hat{P}
\right\}\;,\qquad {\rm ad}_{\hat{P}_1}(\hat{P}_2) = [\hat{P}_1,\hat{P}_2]_\star\;,
\end{align}
given by the complexification of the real higher-spin Lie algebra 
$\hat{\mathfrak{hs}}(4)$ defined in \cite{Sezgin:2011hq}.
For an algebra that would contain all the generators of integer 
spin fields, one needs to replace the condition  
$\tau(\hat{P}) = - \hat{P}$ by the weaker 
$\pi\bar{\pi}(\hat{P}) = \hat{P}\,$.
We do not intend to review the various possible reality conditions 
compatible with Vasiliev's equations, and only mention that the latter admit 
\cite{Iazeolla:2007wt} the symmetric spaces 
\[ \{S^4 , H_4=SO(4,1)/SO(4), dS_4 , AdS_4 , H_{3,2}=SO(3,2)/SO(2,2)\}\]
as exact, maximally-symmetric solutions, that exclude Minkowski's spacetime.
In the following, we shall consider the complex algebra 
$\hat{\mathfrak{g}}$ when the authors of \cite{Sezgin:2011hq} considered 
the real algebra $\hat{\mathfrak{hs}}(4)$.

\paragraph{Cartan's formulation.}

In the paper \cite{Sezgin:2011hq}, the authors adopt a description of 
Vasiliev's theory in terms of a Cartan geometry (although no definition
of higher-spin groups are attempted, the higher-spin algebras being 
infinite-dimensional, and not of the Kac-Moody type), 
with a connection one-form taking its values in the 
Lie algebra $\hat{\mathfrak{g}}$, and where the isotropy Lie algebra  
$\hat{\mathfrak{h}}\subseteq\hat{\mathfrak{g}}$ is essentially\footnote{The 
Lorentz algebra that naturally acts on the indices of the physical 
spacetime fields is not canonically embedded into $\hat{\mathfrak{g}}$, 
therefore some care \cite{Vasiliev:1999ba,Sezgin:2002ru} must 
be taken in defining it, as we shall briefly review.} 
given by the $\pi$-even subalgebra 
$\hat{\mathfrak{g}}_+ \subset \hat{\mathfrak{g}}\,$.

In order to give a manifestly Lorentz-covariant description
of Vasiliev's equations, Vasiliev \cite{Vasiliev:1999ba} 
(see \cite{Sezgin:2002ru} for further discussions) introduced 
$\hat{S}_\alpha = z_\alpha - 2i \hat{A}_\alpha\,$, 
$\hat{S}_{\dot\alpha} = \bar z_{\dot\alpha} - 2i \hat{A}_{\dot\alpha}\,$,
and the field-dependent Lorentz generators 
$\hat{M}_{\alpha\beta}:=\hat{M}_{\alpha\beta}^{(0)}+
\hat{M}_{\alpha\beta}^{(S)}$ (similarly in the dotted indices)
where 
\begin{align}
\hat{M}_{\alpha\beta}^{(0)}=y_{(\alpha}\star y_{\beta)}
- z_{(\alpha}\star z_{\beta)}\;,\quad
\hat{M}_{\alpha\beta}^{(S)}=\hat{S}_{(\alpha}\star \hat{S}_{\beta)}\;.
\end{align}
One then introduces the following one-form on 
${\cal M}_{HS}$: 
\begin{align}
\hat{W} = dX^{{M}}\,\hat{W}_{{M}} 
:= dX^{{M}}\,\hat{A}_{{M}}
-\frac{1}{4i}dX^{{M}}\,(\omega_{{M}}^{\alpha\beta}\hat{M}_{\alpha\beta}
+\bar\omega_{{M}}^{\dot\alpha\dot\beta}
\hat{M}_{\dot\alpha\dot\beta})\;.
\end{align}
In terms of this redefined one-form on ${\cal M}$, the horizontal part 
of the Vasiliev field equations now take a manifestly Lorentz-covariant 
form, and one can \cite{Sezgin:2011hq} always  
impose the condition that $\hat{W}$ 
has a vanishing components along the $Z$-independent 
Lorentz generators $y_{(\alpha}\star\, y_{\beta)}\,$
and $\bar{y}_{(\alpha}\star\, \bar{y}_{\beta)}\,$, 
that instead appear in the Lorentz connection
\begin{align}
\omega=\frac{1}{4i}dx^{{M}}
(\omega_{{M}}^{\alpha\beta}\hat{M}_{\alpha\beta}
+\bar\omega_{{M}}^{\dot\alpha\dot\beta}\hat{M}_{\dot\alpha\dot\beta})\;.
\end{align}
The one-form gauge field $\hat{W}+\omega$ takes its values in 
$\hat{\mathfrak{g}}$, and a Cartan-like formulation of Vasiliev's 
minimal models consists of choosing a structure, or isotropy algebra
\begin{align}
\hat{\mathfrak{h}}\subseteq \hat{\mathfrak{g}}\;.
\end{align}
Then, in terms of such a structure algebra, the two projections 
$\hat{\Gamma}:=\Pi_{\hat{\mathfrak{h}}}(\hat{W}+\omega)$ and 
$\hat{E}:=(\mathtt{id}-\Pi_{\hat{\mathfrak{h}}})(\hat{W}+\omega)$ decompose the 
Cartan-like, $\hat{\mathfrak{g}}$-valued connection one-form into
\begin{align}
\hat{W}+\omega=:\hat\varpi = \hat{E} + \hat{\Gamma}
\end{align}
where the component $\hat{E}$ formally plays the role 
of generalised soldering one-form, while $\hat{\Gamma}$ 
takes its values in the structure algebra, playing the role 
of generalised Lorentz connection.

As explained in \cite{Sezgin:2011hq}, each choice of structure subalgebra 
$\hat{\mathfrak{h}}\subset \hat{\mathfrak{g}}$ 
corresponds to a specific phase of the theory, and 
the authors of \cite{Sezgin:2011hq} give several examples, among which 
a broken phase with what they call a chiral higher-spin structure 
algebra 
\begin{align}
\hat{\mathfrak{hl}}(2,\mathbb{C}) = \left\{
\widehat{P}(y,z)+\widehat{P}(\bar{y},\bar{z})\right\}\cap \hat{\mathfrak{g}}_+\;,
\end{align}
that is, all polynomials in the $\pi$-even algebra $\hat{\mathfrak{g}}_+$
that are purely holomorphic or anti-holomorphic. The author explain that 
(we quote), ``the geometrical significance of this choice of structure 
algebra [...] remains to be investigated.''  
In fact, one could further reduce the above structure algebra 
and consider what could be called a self-dual algebra  
\begin{align}
\hat{\mathfrak{l}}_+ := 
\left\{ \hat{P}(y,z)~:~\tau(\hat{P}) = - \hat{P}\right\}\cap \hat{\mathfrak{g}}_+\;.
\end{align}
The structure subalgebra that was mainly considered in \cite{Sezgin:2011hq}
is $\hat{\mathfrak{h}}=\hat{\mathfrak{g}}_+ = \frac{1}{2}\,(\mathtt{id} + \pi) \hat{\mathfrak{g}}\,$,
so that the complement 
$\hat{\mathfrak{p}} = \frac{1}{2}\,(\mathtt{id} - \pi) \hat{\mathfrak{g}}$ 
in the vector space decomposition 
$\hat{\mathfrak{g}}=\hat{\mathfrak{p}} + \hat{\mathfrak{h}}$
plays a role analogous to the tangent space at any point of the symmetric 
space $G/H$, in case one could make sense of the exponentiations of 
$\hat{\mathfrak{g}}$ and $\hat{\mathfrak{h}}$ into groups $G$ and $H$, 
respectively.

\paragraph{Symmetric higher-spin spacetime.}

We have the symmetric-space decomposition 
$\hat{\mathfrak{g}}=\hat{\mathfrak{h}}+\hat{\mathfrak{p}}\,$ 
with the involutive automorphism $\pi$ of $\hat{\mathfrak{g}}$ 
such that $\pi = \mathtt{id}_{\hat{\mathfrak{h}}}\oplus 
(-\mathtt{id}_{\hat{\mathfrak{p}}})$, 
the soldering one-form $\hat{E} = \hat{W}^-=\frac{1}{2}(\mathtt{id} - \pi)\hat{W}$ 
and $\hat{\Gamma}= \hat{W}^+ + \omega\,$. Similarly, the gauge parameters  
$\hat{\epsilon}\in \hat{\mathfrak{g}}$
are decomposed into 
$\hat{\xi} := \frac{1}{2}(\mathtt{id} - \pi)\hat{\epsilon}$ and 
$\hat{\Lambda} = \frac{1}{2}(\mathtt{id} + \pi)\hat{\epsilon}\oplus \Lambda$, 
where the parameter $\Lambda$ is responsible for the local Lorentz 
transformations.  
The broken gauge symmetries that in \cite{Sezgin:2011hq} are referred 
to as \emph{generalised local translations}, act as 
\begin{align}
    \delta_{\hat{\xi}}\hat{\Gamma} = [\hat{E},\hat{\xi}]_\star\;,
    \quad \delta_{\hat{\xi}}\hat{E} = \hat{\nabla}\hat{\xi}\;,
    \quad \delta_{\hat{\xi}}\omega = 0\;,
\end{align}
where $\hat{\nabla} = \nabla + {\rm ad}_{\hat{W}^+}$ with 
$\nabla = {\rm d} + \rho(\omega)$.
The fields $(\hat{E},\hat{\xi})$ are sections of the principal 
$\hat{\mathfrak{g}}_+ \oplus \mathfrak{sl}(2,\mathbb{C})$ bundle,
and the symmetric space picture suggests to identify the tangent 
space of ${\cal M}_{HS}$ at any point with 
$\hat{\mathfrak{g}}/\hat{\mathfrak{g}}_+\,$.
In other words, in this approach the commutative spacetime manifold 
${\cal M}_{HS}$ is infinite-dimensional\footnote{Attempts to define 
higher-spin diffeomorphisms on a finite-dimensional manifold
can be found in \cite{Bekaert:2021sfc}.}, 
and the soldering one-form 
$\hat{E}$ on ${\cal M}_{HS}$ gives rise to the frame fields 
$\hat{E}_{{M}}=\hat{E}_{{M}}{}^{{A}}\,\hat{P}_{{A}}\,$, 
$\pi(\hat{P}_{{A}})=-\hat{P}_{{A}}\,$, where 
$\hat{P}_{{A}}$ are the coset generators. 
The generalised frame fields are supposed to be invertible, with 
$\hat{E}^{{M}}{}_{{A}}$ the (formal) inverse of 
$\hat{E}_{{M}}{}^{{A}}$.
The local translations with gauge parameters 
$\hat{\xi}=\hat{\xi}^{{A}}$
are identified with infinitesimal diffeomorphisms with globally 
defined vector fields 
$\xi^{{M}} = \hat{E}^{{M}}{}_{{A}}\,
\hat{\xi}^{{A}}\,$, combined with local generalised 
Lorentz transformations \cite{Sezgin:2011hq}.

Like in the superspace formulation of supersymmetric field theories, 
the Vasiliev equations can be studied perturbatively around the 
``body'' where the (here Grassmann-even and noncommutative) 
variables $Z^A$ are set to zero, further\footnote{The analysis of 
\cite{Boulanger:2015ova} shows that the reduction of Vasiliev's equations 
to $\cal M=$AdS$_4$ is only formal, in the sense that it produces 
highly non-local equations on AdS$_4$.} 
reducing the quasi-free differential algebra on $\cal B$ to a  
free differential algebra on a ${\cal M}_{HS}$.

In the reduction of the equations from $\cal B$ to 
${\cal M}$, one therefore encounters 
infinitely many commuting coordinates $X^M$ in bijection with 
the $\pi$-odd generators 
$P_A = \hat{P}_{{A}}\vert_{Z=0}$ of the 
algebra $\mathfrak{g}=\hat{\mathfrak{g}}\vert_{Z=0}$.
The analysis \cite{Sezgin:2011hq} of the resulting field 
equations for the master zero-form $\Phi(X|Y)=\hat\Phi(X,Z=0|Y)$ 
on ${\cal M}_{HS}$ involves introducing a decomposition 
$P_A=\{P_{A_\ell}\}_{\ell = 0}^\infty$ where 
\begin{equation}
P_{A_\ell} = \{ P_{a(2\ell+1),b_{2k}}\}_{k=0}^\ell = 
\{ M_{a_1b_1}\star\ldots \star M_{a_{2k}b_{2k}}\star
P_{a_{2k+1}}\star\ldots\star P_{a_{2\ell+1}}\}_{k=0}^\ell\;, 
\end{equation}
and where $a,b, \ldots$ take the values $0,1,2,3$, 
$\{M_{ab},P_a\}$ representing in $\cal Y$ the generators of $\mathfrak{so}(2,3)$. 
The tensor $P_{a(2\ell+1),b_{2k}}$ is Lorentz-irreducible of 
type $(2\ell+1,2k)\,$. For more details about the star-product 
realisation of the higher-spin algebra, see the very 
complete reference \cite{Iazeolla:2008ix}.
Accordingly, there is a notion of level decomposition of the 
commuting coordinates 
\begin{equation}
X^{M_\ell} = \{ X^{\mu(2\ell+1),\nu_{2k}}\}_{k=0}^\ell\;, 
\quad \ell\in \mathbb{N}\;,
\end{equation}
where the usual, local coordinates $x^\mu$ are identified with 
$X^{M_\ell}\vert_{\ell=0}\,$:
\begin{equation}
    x^\mu = X^{M_0}\;, 
\end{equation}
and at each level $\ell\in\mathbb{N}$ the $k=0$ coordinate is a 
completely symmetric and traceless tensor $X^{\mu_1\ldots\mu_{2\ell+1}}$.
These coordinates $X^{M_\ell}$ are the counterpart, for Vasiliev's equations, 
to the coordinates $x^I$ in our Theorem \ref{theorem existence of solutions} 
above in the context of SD-HSGR. 
Note also the existence of the generalised frame field on ${\cal M}_{HS}$ 
\begin{equation}
 E_M{}^A(X) := \hat{E}_M{}^A(X,Z=0)   
\end{equation}
that plays a role analogous to the frame 
\eqref{HS frames} in our twistor 
reformulation of SD-HSGR. 

The authors of \cite{Sezgin:2011hq} further assume the existence 
of a triangular gauge for the horizontal one-form ${\rm d}X^M\,E_M{}^A$ 
such that $E_{M_\ell}{}^{A_{\ell'}}\vert_{X^{M_{\ell''}}=0} = 0$ 
for $\ell>\ell'$ and for all $\ell''>0\,$, exactly in analogy to 
what happens for the superframe in the superspace formulation of 
supergravity, see in particular the equation (17.2) of \cite{Wess:1992cp}.

At level $\ell$ in the decomposition of the equation for the master 
zero-form $\Phi=\hat{\Phi}\vert_{Z=0}$, the equation reads 
$\nabla_{a(2\ell+1)}(\Gamma, \omega)\Phi + \{P_{a(2\ell+1)},\Phi\}_\star =0\,$, 
where 
\begin{equation}
P_{a(2\ell+1)} = P_{\{a_1}\star\ldots\star P_{a_{2\ell+1}\}}    
\end{equation}
is the translation-like generator associated with the coordinate 
$X^{\mu_1\ldots \mu_{2\ell+1}}\,$, and the curly brackets around the 
Lorentz indices $a_1, \ldots, a_{2\ell+1}$ indicate symmetrisation and 
traceless projection. The authors of \cite{Sezgin:2011hq} then argue that 
the generalized Lorentz-covariant derivative 
$\nabla_{a_{2\ell+1}}(\Gamma, \omega)$ factorises, for example on the 
scalar field $\phi=\Phi\vert_{Y=0}$, into multiple vectorial derivatives: 
$\nabla_{a_{2\ell+1}}(\Gamma, \omega)\phi \propto$ 
$\nabla_{\{a_{1}}(\Gamma, \omega)\ldots \nabla_{a_{2\ell+1}\}}(\Gamma, \omega)\phi\,$,
so that the dependence of the physical fields (here, the scalar) on the 
higher coordinates is, on-shell, expressed in terms of their dependence 
on the usual coordinate $x^\mu$, similarly to what happens for the dependence 
of superfields on the fermionic coordinates, as reviewed e.g. in the 
textbook \cite{Wess:1992cp}.

\providecommand{\href}[2]{#2}\begingroup\raggedright\endgroup

\end{document}